\newcommand{\ldbrack}{\llbracket}
\newcommand{\rdbrack}{\rrbracket}
\newcommand{\confof}[1]{\iota(#1)}
\newcommand{\closure}[2]{\textrm{Cl}_{#1}(#2)}
\newcommand{\XX}{\mathcal X}
\newcommand{\YY}{\mathcal Y}
\newcommand{\ZZ}{\mathcal Z}
\newcommand{\pair}[2]{\tuple {#1, #2}}
\newcommand{\nowof}[1]{\text{\sc now}(#1)}
\newcommand{\orbit}[2]{\text{\sc orbit}_{#1}(#2)} 
\newcommand{\orbity}[1]{#1\text{-\sc orbits}}
\newcommand{\Conf}[1]{\semd {#1}}
\newcommand{\clockval}[1]{\mu(\X)}
\newcommand{\macval}[1]{\text{\sc Val}(#1)}
\newcommand{\assigntoval}[2]{{#2} - {#1}} 
\newcommand{\now}{t_0} 
\newcommand{\para}[1]{\paragraph{#1.}}
\newcommand{\succe}[2]{\text{\sc succ}_{#1}(#2)}
\newcommand{\A}{\mathbb A}
\newcommand{\Ackermann}{{\sc Ackermann}}
\newcommand{\HyperAckermann}{{\sc Hyper\-Acker\-mann}}
\newcommand{\Aut}[2]{\textrm{Aut}_{#1}(#2)}
\newcommand{\ignore}[1]{}
\newcommand{\wqo}{{\sc wqo}\xspace}
\newcommand{\charact}[1]{\text{char}_{#1}}
\newcommand{\powerset}[1]{{\mathcal P}(#1)}
\newcommand{\goesto}[1]{\xrightarrow{#1}}
\newcommand{\lang}[1]{L(#1)}
\newcommand{\langa}[2]{L_{#1}(#2)}
\newcommand{\langtsa}[2]{\langa{#1}{#2}} 
\newcommand{\langts}[1]{\lang{#1}}         
\newcommand{\true}{\mathbf{true}}
\newcommand{\false}{\mathbf{false}}
\newcommand{\N}{\mathbb N}
\newcommand{\Z}{\mathbb Z}
\newcommand{\Rnonnegpos}{{\mathbb R}_{\geq 0}}
\newcommand{\Rnonneg}{\Rnonnegpos}
\newcommand{\R}{{\mathbb R}}
\newcommand{\X}{\mathtt X}
\newcommand{\Y}{\mathtt Y}
\newcommand{\x}{\mathtt x}
\newcommand{\y}{\mathtt y}
\renewcommand{\L}{\ensuremath{\mathtt L}}
\newcommand{\I}{\ensuremath{\L_I}}
\newcommand{\F}{\ensuremath{\L_F}}
\newcommand{\op}{\mathtt {op}}
\newcommand{\automaton}{\tuple{\X, \Sigma, \L, \I, \F, \Delta}}
\newcommand{\extend}[3]{#3[#1 \mapsto #2]}
\newcommand{\timedwords}[1]{\mathbb{T}(#1)}
\newcommand{\timedwordsafter}[2]{\mathbb{T}_{\geq{#2}}(#1)}
\newcommand{\transition}[5]{\tuple{#1, #2, #3, #4, #5}}
\newcommand{\regions}[2]{\textsf{Reg}(#1,#2)}
\newcommand{\reg}[2]{\regions{#1}{#2}} 
\newcommand{\fract}[1]{\mathsf{fract}({#1})}
\newcommand{\DRA}{\ensuremath{\mbox {\sc dra}}\xspace}
\newcommand{\kDRA}[1]{\ensuremath{{\textsf{\sc dra}_{#1}}}\xspace}
\newcommand{\NRA}{\mbox {\sc nra}\xspace}
\newcommand{\DTA}{\textsf{\sc dta}\xspace}
\newcommand{\mDTA}[1]{\ensuremath{{\textsf{\sc dta}_{\textnormal{\_},#1}}}\xspace}
\newcommand{\kDTA}[1]{\ensuremath{{\textsf{\sc dta}_{#1}}}\xspace}
\newcommand{\kmDTA}[2]{\ensuremath{{\textsf{\sc dta}_{#1,#2}}}\xspace}
\newcommand{\kNRA}[1]{\ensuremath{{\textsf{\sc nra}_{#1}}}\xspace}
\newcommand{\kNRAg}[1]{\ensuremath{{\textsf{\sc nra}_{#1}^{\mathrm g}}}\xspace}
\newcommand{\NTA}{\textsf{\sc nta}\xspace}
\newcommand{\mNTA}[1]{\ensuremath{{\textsf{\sc nta}_{\textnormal{\_},#1}}}\xspace}
\newcommand{\kNTA}[1]{\ensuremath{{\textsf{\sc nta}_{#1}}}\xspace}
\newcommand{\kNTAe}[1]{\ensuremath{{\textsf{\sc nta}_{#1}^\varepsilon}}\xspace}
\newcommand{\kmNTA}[2]{\ensuremath{{\textsf{\sc nta}_{#1,#2}}}\xspace}
\newcommand{\sem}[1]{\ldbrack #1 \rdbrack}
\newcommand{\semd}[1]{\left\ldbrack#1\right\rdbrack}
\newcommand{\semlog}[1]{\left\ldbrack#1\right\rdbrack}
\newcommand{\prettyexists}[2]{\exists #1\cdot #2}
\newcommand{\sep}{\ | \ }
\newcommand{\tuplesmall}[1]{(#1)}
\newcommand{\tuple}[1]{\tuplesmall{#1}}
\newcommand{\set}[1]{\{ #1 \}}
\newcommand{\setof}[2]{\set{#1 \; | \; #2}}
\newcommand{\enc}[1]{\mathsf{enc}(#1)}
\newcommand{\Enc}[1]{\mathsf{Enc}(#1)}
\newcommand{\undata}[1]{\mathsf{undata}(#1)}
\newcommand{\LCM}{\textsf{LCM}\xspace}
\newcommand{\kLCM}[1]{$#1$-\textsf{LCM}\xspace}
\newcommand{\incr}[1]{#1\,\texttt{++}}
\newcommand{\decr}[1]{#1\,\texttt{-{}-}} 
\newcommand{\ztest}[1]{#1\stackrel ? {\texttt=} 0} 
\newcommand{\reachset}[1]{\text{Reach}(#1)}
\newcommand{\limplies}{\Rightarrow}
\newif\ifstartedinmathmode
\newcommand*{\st}{
  \relax\ifmmode\startedinmathmodetrue\else\startedinmathmodefalse\fi
  \ifstartedinmathmode{\;\cdot\;}\else{s.t.~}\fi%
}
\newcommand{\wrt}{w.r.t.~}
\newcommand{\card}[1]{|{#1}|}
\newcommand{\pspace}{{\sc PSpace}\xspace}
\newcommand{\nlogspace}{{\sc NLogSpace}\xspace}
\theoremstyle{remark}
\newtheorem*{claim*}{Claim}
\newcommand{\decision}[3]{\medskip\noindent {\sc #1}. \newline {\bf Input: } #2 \newline {\bf Output: } #3\medskip}
\crefname{example}{Example}{Examples}
\Crefname{example}{Example}{Examples}
\crefname{exa}{Example}{Examples}
\Crefname{exa}{Example}{Examples}
\crefname{claim}{Claim}{Claims}
\Crefname{claim}{Claim}{Claims}
\crefname{lemma}{Lemma}{Lemmas}
\Crefname{lemma}{Lemma}{Lemmas}
\crefname{lem}{Lemma}{Lemmas}
\Crefname{lem}{Lemma}{Lemmas}
\crefname{theorem}{Theorem}{Theorems}
\Crefname{theorem}{Theorem}{Theorems}
\crefname{thm}{Theorem}{Theorems}
\Crefname{thm}{Theorem}{Theorems}
\crefname{fact}{Fact}{Facts}
\Crefname{fact}{Fact}{Facts}
\crefname{cor}{Corollary}{Corollaries}
\Crefname{cor}{Corollary}{Corollaries}
\crefname{section}{Section}{Sections}
\Crefname{section}{Section}{Sections}
\newcommand{\hide}[1]{}
\newcommand{\lopen}[1]{(#1]} 
\newcommand{\ropen}[1]{[#1)} 
\crefname{thmC}{Theorem}{Theorems}
\crefname{lemC}{Lemma}{Lemmas}
\apptocmd{\sloppy}{\hbadness 10000\relax}{}{}
\title{Determinisability of register and timed automata{\rsuper*}}
\author[L.~Clemente]{Lorenzo Clemente~\lmcsorcid{0000-0003-0578-9103}}	
\author[S.~Lasota]{S{\l}awomir Lasota~\lmcsorcid{0000-0001-8674-4470}}	
\author[R.~Piórkowski]{\protect{\mbox{Rados\l{}aw Piórkowski}}~\lmcsorcid{0000-0002-9643-182X}}	
\address{University of Warsaw, Poland}	
\email{clementelorenzo@gmail.com, \{sl,r.piorkowski\}@mimuw.edu.pl}  
\urladdr{\url{https://sites.google.com/view/lorenzoclemente}}
\urladdr{\url{https://www.mimuw.edu.pl/~sl}, \url{https://www.mimuw.edu.pl/~rp}}
\thanks{%
    L.~Clemente was partially supported by the Polish NCN grant 2017/26/D/ST6/00201.
    S.~Lasota was partially supported by the Polish NCN grant 2019/35/B/ST6/02322 and by the ERC grant LIPA, agreement no.~683080.
    R.~Piórkowski was partially supported by the Polish NCN grant 2017/27/B/ST6/02093.
}
\keywords{Timed automata, register automata, determinisation, deterministic membership problem} 
\begin{document}

\maketitle

\begin{abstract}
	The deterministic membership problem for timed automata asks whether
	the timed language given by a nondeterministic timed automaton
	can be recognised by a deterministic timed automaton.
	An analogous problem can be stated in the setting of register automata.
	We draw the complete decidability/complexity landscape of the deterministic membership problem,
	in the setting of both register and timed automata.
	For register automata, we prove that the deterministic membership problem is decidable
	when the input automaton is a nondeterministic one-register automaton (possibly with epsilon transitions)
	and the number of registers of the output deterministic register automaton is fixed.
	This is optimal: We show that in all the other cases the problem is undecidable, i.e.,
	when either (1) the input nondeterministic automaton has two registers or more (even without epsilon transitions),
	or (2) it uses guessing,
	or (3) the number of registers of the output deterministic automaton is not fixed.
	The landscape for timed automata follows a similar pattern.
	We show that the problem is decidable when the input automaton is a one-clock nondeterministic timed automaton
	without epsilon transitions and the number of clocks of the output deterministic timed automaton is fixed.
	Again, this is optimal: We show that the problem in all the other cases is undecidable, i.e.,
	when either (1) the input nondeterministic timed automaton has two clocks or more,
	or (2) it uses epsilon transitions,
	or (3) the number of clocks of the output deterministic automaton is not fixed.
\end{abstract}

\clearpage


\section{Introduction}

\subsection{Automata over infinite alphabets}

\para{Timed automata}

\medskip

\emph{Nondeterministic timed automata} (\NTA) are one of the most widespread models of real-time reactive systems.
They are an extension of finite automata
with real-valued clocks which can be reset and compared by inequality constraints.
The nonemptiness problem for \NTA is decidable and, in fact, \pspace-complete,
as shown by Alur and Dill in their landmark 1994 paper~\cite{AD94}.
This paved the way for the automatic verification of timed systems,
leading to mature tools such as UPPAAL~\cite{Behrmann:2006:UPPAAL4},
UPPAAL Tiga (timed games)~\cite{CassezDavidFleuryLarsenLime:CONCUR:2005},
and PRISM (probabilistic timed automata)~\cite{KwiatkowskaNormanParker:CAV:2011}.
%
The reachability problem is still a very active research area these days~\cite{FearnleyJurdziski:IC:2015,
HerbreteauSrivathsanWalukiewicz:IC:2016,
AkshayGastinKrishna:LMCS:2018,
GastinMukherjeeSrivathsan:CONCUR:2018,
GastinMukherjeeSrivathsan:CAV:2019,
GovindHerbreteauSrivathsanWalukiewicz:CONCUR:2019},
as are expressive generalisations thereof,
such as the binary reachability problem~\cite{ComonJurski:TA:1999,
Dima:Reach:TA:LICS02,
KrcalPelanek:TM:FSTTCS:2005,
FranzleQuaasShirmohammadiWorrell:IPL:2020}.
As a testimony to the model's importance,
the authors of~\cite{AD94} received the 2016 Church Award~\cite{church:award}
for the invention of timed automata.

\emph{Deterministic timed automata} (\DTA) form a strict subclass of \NTA
where the successive configuration is uniquely determined from the current one and the timed input symbol.
The class of \DTA enjoys stronger properties than \NTA,
such as decidable universality/equivalence/inclusion problems,
and closure under complementation~\cite{AD94}.
Moreover, the more restrictive nature of \DTA is needed for several applications of timed automata,
such as test generation~\cite{NielsenSkou:STTT:2003},
fault diagnosis~\cite{BouyerChevalierDSouza:FOSSACS:2005},
and learning~\cite{VerwerWeerdtWitteveen:Benelearn:2007,TapplerAichernigLarsenLorber:FOMATS:2019},
winning conditions in timed games~\cite{AsarinMaler:HSCC:1999,JurdzinskiTrivedi:ICALP:2007,BrihayeHenzingerPrabhuRaskin:ICALP:2007},
and in a notion of recognisability of timed languages~\cite{Maler:Pnueli:FOSSACS:04}.
For these reasons,
and for the more general quest of understanding the nature of the expressive power of nondeterminism in timed automata,
many researchers have focused on defining determinisable classes of timed automata,
such as strongly non-Zeno \NTA~\cite{AsarianMalerPnueliSifakis:SSSC:1998},
event-clock \NTA~\cite{AlurFixHenzinger:TCS:1999},
and \NTA with integer-resets~\cite{SumanPandyaKrishnaManasa:2008}.
The classes above are not exhaustive,
in the sense that there are \NTA recognising deterministic timed languages
not falling into any of the classes above.
%
%

Another remarkable subclass of \NTA is obtained by requiring the presence of just one clock (without epsilon transitions).
The resulting class of \kNTA 1 is incomparable with \DTA:\@
For instance, \kNTA 1 are not closed under complement (unlike \DTA),
and there are very simple \DTA languages that are not recognisable by any \kNTA 1.
Nonetheless, \kNTA 1, like \DTA, have decidable inclusion, equivalence, and universality problems~\cite{OW04,LasotaWalukiewicz:ATA:ACM08},
although the complexity is non-primitive recursive~\cite[Corollary 4.2]{LasotaWalukiewicz:ATA:ACM08}
(see also~\cite[Theorem 7.2]{OuaknineWorrel:LMCS:2007} for an analogous lower bound for the satisfiability problem of metric temporal logic).
Moreover, the nonemptiness problem for \kNTA 1 is \nlogspace-complete
(vs.~\pspace-complete for unrestricted \NTA and \DTA, already with two clocks~\cite{FearnleyJurdziski:IC:2015}),
and the binary reachability relation of \kNTA 1
can be computed as a formula of existential linear arithmetic of polynomial size,
which is not the case in general~\cite{ClementeHofmanTotzke:CONCUR:2019}.

\para{Register automata}

The theory of register automata shares many similarities with that of timed automata.
\emph{Nondeterministic register automata} (\NRA) have been introduced by Kaminski and Francez around the same time as timed automata. They were defined as an extension of finite automata with finitely many registers
that can store input values and be compared with equality and disequality constraints.
The authors have shown, amongst other things, that nonemptiness is decidable~\cite[Theorem 1]{FK94}.
It was later realised that the problem is, in fact, \pspace-complete~\cite[Theorems 4.3 and Theorem 5.1]{DL09}.
The class of \NRA recognisable languages is not closed under complementation~\cite[Proposition 5]{FK94}; moreover, universality (and thus equivalence and inclusion) of \NRA is undecidable~\cite[Theorem 5.1]{NevenSchwentickVianu:TOCL:2004}
(already for \NRA with two registers~\cite[Theorem 5.4]{DL09}).

One way to regain decidability is to consider \emph{deterministic register automata} (\DRA),
which are effectively closed under complement and thus have a decidable inclusion (and thus universality and equivalence) problem%
\footnote{In fact, even the inclusion problem $\lang A \subseteq \lang B$ with $A$ an \NRA and $B$ a \DRA is decidable.}.
\DRA also provide the foundations of learning algorithms for data languages~\cite{MoermanSammartinoSilvaKlinSzynwelski:POPL2017}.
A recent result completing the theory of register automata has shown that a data language is \DRA recognisable if, and only if,
both this language and its complement are \NRA recognisable~\cite{KlinLasotaTorunczyk:FOSSACS:2021}.

As in the case of timed automata, it has been observed that restricting the number of registers results in algorithmic gains.
Already in the seminal work of Kaminski and Francez, it has been proved that the inclusion problem $\lang A \subseteq \lang B$
is decidable when $A$ is an \NRA and $B$ is an \NRA with one register~\cite[Appendix A]{FK94},
albeit the complexity is non-primitive recursive in this case~\cite[Theorem 5.2]{DL09}.

\subsection{The deterministic membership problem}

\para{Timed automata}

\medskip

The \emph{\DTA membership problem} asks, given an \NTA,
whether there exists a \DTA recognising the same language.
There are two natural variants of this problem,
which are obtained by restricting the resources available to the sought \DTA\@.
Let $k \in \N$ be a bound on the number of clocks,
and let $m \in \N$ be a bound on the maximal absolute value of numerical constants.
The \emph{\kDTA k}
and \emph{\kmDTA k m membership problems}
are the restriction of the problem above
where the \DTA is required to have at most $k$ clocks,
resp., at most $k$ clocks and the absolute value of maximal constant bounded by $m$.
Notice that we do not bound the number of control locations of the \DTA,
which makes the problem non-trivial.
(Indeed, there are finitely many \DTA with a bounded number of clocks, control locations, and maximal constant.)

Since untimed regular languages are deterministic,
the \kDTA k membership problem can be seen as a quantitative generalisation of the regularity problem.
For instance, the $\kDTA 0$ membership problem is precisely the regularity problem
since a timed automaton with no clocks is the same as a finite automaton.
We remark that the regularity problem is usually undecidable
for nondeterministic models of computation generalising finite automata,
e.g., context-free grammars/pushdown automata~\cite[Theorem 6.6.6]{Shallit:Book:2008},
labelled Petri nets under reachability semantics~\cite{ValkVidal-Naquet:Petri:Regular:1981}, Parikh automata~\cite{CadilhacFinkelMcKenzie:NCMA:2011}, etc.
One way to obtain decidability is to either restrict the input model to be deterministic
(e.g.,~\cite{Valiant:Regularity:DPDA:JACM:1975,ValkVidal-Naquet:Petri:Regular:1981,BaranyLodingSerre:STACS:2006}),
or to consider more refined notions of equivalence,
such as bisimulation (e.g.,~\cite{ParysGoller:LICS:2020}).

This negative situation is generally confirmed for timed automata.
For every number of clocks $k\in\N$ and maximal constant $m$,
the \DTA, \kDTA k, and \kmDTA k m membership problems are known to be undecidable
when the input \NTA has $\geq 2$ clocks,
and for 1-clock \NTA with epsilon transitions~\cite{Finkel:FORMATS:2006,Tripakis:IPL:2006}.
To the best of our knowledge,
the deterministic membership problem was not studied before when the input automaton is \kNTA 1 without epsilon transitions.

\para{Register automata}

The situation for register automata is similar to, and simpler than, timed automata.
The \emph{\kDRA k membership problem} asks, given an \NRA, whether there exists a \DRA with $k$ registers recognising the same language,
and the \emph{\DRA membership problem} is the same problem with no apriori bound on the number of registers of the deterministic acceptor.
Deterministic membership problems for register automata do not seem to have been considered before in the literature.

\subsection{Contributions}

We complete the study of the decidability border for the deterministic membership problem
initiated for timed automata in~\cite{Finkel:FORMATS:2006,Tripakis:IPL:2006},
and we extend these results to register automata.

\para{Upper bounds}

Our main result is the following.
\begin{restatable}{thm}{thmkDTAmemb}%
    \label{thm:kDTA:memb}
    The \kDTA{k} and \kmDTA k m membership problems
    are decidable for \kNTA{1} languages.
\end{restatable}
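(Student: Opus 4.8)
The plan is to characterise existence of the sought acceptor by the winner of a two-player \emph{separation game}, and then to decide that game by well-quasi-ordering techniques. Write $L = \langts{\mathcal N}$ for the given \kNTA{1} $\mathcal N$. I would first solve the \kmDTA k m problem, where the maximal constant of the target automaton is bounded by $m$, and reduce the \kDTA k problem to it: a region-coarsening argument on $\mathcal N$ shows that if some $k$-clock \DTA recognises $L$, then one whose maximal constant is at most an effectively computable $m(\mathcal N)$ does so, since $L$ cannot separate delays at a finer granularity than the constants occurring in $\mathcal N$.

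In the game the \emph{separator} \SEP builds a $k$-clock \DTA of maximal constant at most $m$ on the fly, while the \emph{opponent}~\OPP (the input player) feeds a timed word and tries to force a wrong verdict. The game position records two things. The first is the set of configurations reachable by $\mathcal N$ after the current prefix; as $\mathcal N$ has a single clock this is a finite set of location/clock-value pairs, which I abstract by its order type as in the one-clock universality arguments of~\cite{OW04,LasotaWalukiewicz:ATA:ACM08}, and from which membership of the current prefix in $L$ is read off directly (namely as the presence of an accepting configuration, an \emph{upward-closed} property). The second is the region of \SEP's clocks, which is \emph{finite} since the number of clocks and the maximal constant are bounded. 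In each round \OPP plays a delay and a letter, advancing the first component deterministically, and \SEP answers with a deterministic reset action, together with the auxiliary \remind/\forget{} book-keeping moves that keep the clock accounting finite; \SEP loses as soon as its declared acceptance disagrees with the membership exposed by the first component. Determinism is enforced by letting \SEP's answer depend only on the current position, so that a positional winning strategy is literally a \DTA transition table.

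Two implications give the characterisation. \textbf{Completeness} --- if a \DTA $\mathcal D$ with $\langts{\mathcal D}=L$ and the prescribed resources exists, then \SEP wins --- is easy: \SEP mimics $\mathcal D$, whose resets are legal moves and whose verdicts never contradict $L$. \textbf{Soundness} --- a winning strategy yields a \emph{finite} \DTA --- is the delicate direction, because the first component ranges over an infinite domain and one cannot take one location per position. Instead I would use that the safety winning region of \SEP is closed under the \wqo on sets of one-clock configurations and hence has a \emph{finite basis}, and build the locations of the output \DTA from this finite basis together with \SEP's positional answers, reading correctness of acceptance off the winning condition. Decidability of the game then follows because it is a monotone well-structured safety game: the first component, driven by \OPP, is well-quasi-ordered and the transition relation is monotone with respect to it, while \SEP's side is finite; the winning region is therefore computable by the standard backward-closure saturation, with non-primitive-recursive complexity matching the known lower bounds for \kNTA 1.

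The main obstacle is precisely this soundness/extraction step. A genuine recognising \DTA may use unboundedly many control locations to store untimed information, whereas the game only exposes the behaviour of $\mathcal N$ through an infinite, albeit well-quasi-ordered, abstraction. The crux is to design the first-component abstraction so that it is simultaneously fine enough to update correctly under delay and resets and to determine membership in $L$, yet coarse enough to be well-quasi-ordered and to collapse, on \SEP's bounded clock side, to finitely many locations via the finite-basis property. Calibrating the \remind/\forget{} actions and the bound on the number of $\varepsilon$-resets so that \SEP can still reproduce every move of a real \DTA without breaking either determinism or the \wqo is where I expect the genuine difficulty to lie.
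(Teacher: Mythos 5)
Your plan is essentially the game-theoretic technique of the cited separability paper~\cite{ClementeLasotaPiorkowski:ICALP:2020}, transplanted to membership, and it breaks exactly at the step you flag as delicate --- but the failure is structural, not a matter of calibration. A game position stores (an abstraction of) the set $X$ of configurations of the input \kNTA 1 reachable over the current prefix, and \SEP loses as soon as its verdict disagrees with ``$X$ contains an accepting configuration''. The obligation ``accept whenever the prefix is in $L$'' gets harder as $X$ grows, while the obligation ``reject whenever the prefix is not in $L$'' gets harder as $X$ shrinks; the exact-matching winning condition is therefore neither upward- nor downward-closed with respect to the \wqo on the first component. Consequently the set of losing positions is not upward closed, so the standard backward-saturation algorithm for monotone well-structured safety games does not compute the winning region; and the winning region itself need not be downward closed, so it has no finite basis from which to build the locations of the output \DTA. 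Both pillars of your argument --- decidability of the game, and ``positional winning strategy $\Rightarrow$ finite \DTA'' --- collapse simultaneously. This monotonicity is precisely what \emph{separability} has and \emph{membership} lacks: in the separability game one must accept everything in one language and reject everything in another, with no constraint on the remaining words, so shrinking the configuration sets only removes obligations and the winning region is downward closed. That is why the game approach works in~\cite{ClementeLasotaPiorkowski:ICALP:2020} but does not transfer to the exact-equality condition you need here.

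The paper's actual proof avoids playing any game against the \NTA. It first proves a characterisation (Lemma~\ref{thm:k-DTA-char}): if a \kmNTA 1 m with $n$ locations recognises an always resetting \kDTA k language at all, then it recognises one given by an always resetting \kmDTA k m with at most $f(k,m,n)$ control locations; this is shown by a semantic powerset determinisation whose macro-configurations are closed off under least supports and then quotiented by timed automorphisms, orbit-finiteness yielding the location bound. Decidability then follows by brute-force enumeration of the finitely many candidate always resetting \kmDTA {k+1} m with at most $f(k{+}1,m,n)$ locations, checking equivalence of each candidate with the input via~\cite{OW04} (the only place where the one-clock restriction is used), and finally testing, for a found \DTA $B$, whether $\lang B$ is a \kDTA k (resp.\ \kmDTA k m) language by reducing to deterministic separability of $B$ from its complement (Lemma~\ref{lem:sep}) --- a reduction that is legitimate there, and not for the original \NTA input, exactly because \DTA are closed under complement. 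Note also that your preliminary reduction of \kDTA k membership to bounded-constant membership is asserted, not proved: a ``region-coarsening'' of a \DTA with large constants does not obviously preserve determinism or the language, and the paper's own machinery bounds the constant by $m$ only at the price of one extra, always-reset clock --- removing that extra clock is precisely what the final separability test is for.
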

Our decidability result contrasts starkly
with the abundance of undecidability results for the regularity problem.
We establish decidability by showing that if a \kmNTA 1 m recognises a \kDTA k language,
then, in fact, it recognises a \kmDTA k m language
and, moreover, there is a computable bound on the number of control locations of the deterministic acceptor
 (c.f.~Lemma~\ref{thm:k-DTA-char}).
This provides a decision procedure
since there are finitely many different \DTA once the number of clocks, the maximal constant, and the number of control locations are fixed.


In our technical analysis, we find it convenient to introduce the so-called
\emph{always resetting} subclass of \kNTA k. 
These automata are required to reset at least one clock at every transition
and are thus of expressive power intermediate between \kNTA {k-1} and \kNTA k.
Always resetting \kNTA 2 are strictly more expressive than \kNTA 1: For instance, the language of timed words of the form
$(a, t_0) (a, t_1) (a, t_2)$ \st $t_2 - t_0 > 2$ and $t_2 - t_1 < 1$ can be recognised by an always resetting \kNTA 2
but by no \kNTA 1.
Despite their increased expressive power,
always resetting \kNTA 2
still have a decidable universality problem (the well-quasi order approach of~\cite{OW04} goes through),
which is not the case for \kNTA 2.
Thanks to this restricted form,
we are able to provide in Lemma~\ref{thm:k-DTA-char} an elegant characterisation of those \kNTA 1 languages
which are recognised by an always resetting \kDTA k.

We prove a result analogous to \cref{thm:kDTA:memb} in the setting of register automata.
\begin{restatable}{thm}{thmDRAmemb}%
    \label{thm:kDRA:memb}
	The \kDRA{k} membership	 problem is decidable for \kNRA{1} languages.
\end{restatable}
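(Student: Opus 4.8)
The plan is to mirror, in the register-automata setting, the proof strategy that underlies \cref{thm:kDTA:memb}. The key observation is that register automata and timed automata share the same orbit-finite / well-quasi-order structure, but register automata are genuinely simpler: there is no continuous time elapse, no clock-value ordering to track, and equality/disequality constraints replace arithmetic guards. Consequently I expect that the analogue of \cref{thm:k-DTA-char} holds with a cleaner proof. Concretely, I would establish a register-automaton version of that characterisation lemma: a \kNRA{1} language $L$ is recognised by some \kDRA{k} if, and only if, it is recognised by a \kDRA{k} whose number of control locations is bounded by a computable function of the size of the input \kNRA{1} (and of $k$). Once such a computable bound on control locations is in place, decidability is immediate by brute-force enumeration, exactly as in the timed case: there are only finitely many \kDRA{k} with a fixed number of registers and a bounded number of control locations, so one enumerates all of them and checks language equivalence $L(A) = L(B)$ with $A$ the input \kNRA{1} and $B$ each candidate \DRA. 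This equivalence check is decidable because inclusion of an \NRA in a \DRA is decidable (the \DRA is effectively complementable, and $\lang A \subseteq \lang B$ is decidable when $B$ is deterministic, as recalled in the introduction).

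First I would set up the bookkeeping needed for the characterisation. The input is a \kNRA{1}, so each configuration is determined by a control location together with at most one stored register value; the crucial finiteness comes from working up to automorphisms of the data domain (renaming of atoms), which yields an orbit-finite configuration space. I would then follow the determinisation-via-subset-construction idea: the sought deterministic acceptor tracks, after reading a finite data word, the set of configurations reachable in the nondeterministic automaton, but this set must be represented finitely. The single-register restriction on the input is what makes this tractable — the reachable set of register values can be described by a bounded Boolean combination of equalities with the finitely many $k$ registers the output \DRA is allowed to carry, so the relevant information is captured by an orbit-finite abstraction. The core of the argument is to show that if such an abstraction can be realised by any \kDRA{k} at all, then it stabilises after a bounded number of steps, giving the computable control-location bound.

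The main obstacle will be proving precisely this stabilisation / pumping bound: showing that whenever $L(A)$ is \kDRA{k}-recognisable, the minimal deterministic acceptor cannot require unboundedly many control locations. I would attack it by a Myhill--Nerode-style argument phrased over orbit-finite sets: the right congruence induced by $L(A)$ has orbit-finitely many classes when restricted to the equivalence granularity that a $k$-register acceptor can distinguish, and a well-quasi-order argument (in the spirit of \cite{OW04}, invoked earlier for universality of always-resetting automata) bounds the length of strictly refining chains. Turning ``orbit-finitely many classes'' into an explicit \emph{computable} bound is the delicate point, since naive \wqo arguments give finiteness without effectiveness; here I would rely on the single-register hypothesis to pin down the orbit structure concretely enough to extract a constructive bound, paralleling the role that the one-clock restriction plays in \cref{thm:k-DTA-char}. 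Because the register setting lacks the timing subtleties (fractional parts, clock orderings) that complicate the timed proof, I expect this extraction to go through with less case analysis, so the register result should follow as a streamlined version of the timed one.
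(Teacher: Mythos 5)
Your overall architecture coincides with the paper's: a characterisation lemma (the paper's Lemma~\ref{thm:k-DRA-char}) showing that a \kNRA{1} language recognised by some \kDRA{k} is already recognised by one with a computably bounded number of control locations (concretely $f(k,n)=(k+1)!\cdot 2^{n(k+1)}$), followed by brute-force enumeration of candidate \kDRA{k} and an equivalence test against the input. However, there is a genuine gap in your justification of the equivalence test. You claim that checking $\lang A = \lang B$ is decidable ``because inclusion of an \NRA in a \DRA is decidable''. That argument covers only the direction $\lang A \subseteq \lang B$: complement the \DRA $B$, intersect with $A$, and check emptiness. It says nothing about the reverse direction $\lang B \subseteq \lang A$, in which the deterministic candidate must be included in the \emph{nondeterministic} input. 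That direction is undecidable for unrestricted \NRA (taking $B$ to be the universal zero-register \DRA turns it into \NRA universality), and it is decidable here only because $A$ has a single register, by Kaminski and Francez's inclusion theorem (\cref{thm:FK94}, packaged as \cref{lem:OW04reg} in the paper). Without invoking that result, the enumeration procedure you describe cannot be carried out, so this step must be repaired.

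Relatedly, you have located the role of the one-register hypothesis in the wrong half of the proof. You expect to need it to extract the computable control-location bound in the characterisation lemma; in fact the paper remarks that Lemma~\ref{thm:k-DRA-char} generalises to $\kNRA l$ inputs for any $l \geq 1$, and that the equivalence check is ``the only place where the one register restriction really plays a r\^ole''. Moreover, the paper's bound involves no \wqo argument at all: it comes from a subset construction whose macro-configurations are closed under least supports (using the least finite support property of equality atoms), followed by a direct count of orbits of $k$-register valuations times characteristic functions. Your Myhill--Nerode-over-orbit-finite-sets plan is workable in spirit --- the paper itself notes that orbit-finiteness of the syntactic congruence yields a deterministic orbit-finite acceptor --- but, as you acknowledge, that route by itself gives neither the explicit bound $f(k,n)$ nor the fact that $k$ registers (rather than some larger $k'$) suffice, which is exactly why the paper carries out the concrete construction.
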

Thanks to the effective elimination of $\varepsilon$-transition rules from \kNRA 1 (c.f.~Lemma~\ref{lem:NRA:epsilon-transitions}),
the decidability result above also holds for data languages presented as \kNRA 1 with $\varepsilon$-transition rules.

\para{Lower bounds}

We complement the decidability results above by showing that the deterministic membership problem becomes undecidable
if we do not restrict the number of clocks/registers of the deterministic acceptor.
\begin{restatable}{thm}{thmUndec}%
    \label{thm:undecidability}
    The \DTA and \mDTA m ($m > 0$) membership problems are undecidable for \kNTA 1 without epsilon transitions.
\end{restatable}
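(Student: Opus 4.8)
The plan is to reduce from the halting problem for deterministic two-counter (Minsky) machines, which is undecidable. Before fixing the direction of the reduction it is worth observing that, by \cref{thm:kDTA:memb}, the \kDTA k membership problem is decidable for each fixed $k$; consequently the \DTA membership problem is recursively enumerable, since it can be semi-decided by running, for $k = 0, 1, 2, \dots$, the decision procedure for \kDTA k membership and accepting as soon as one of them succeeds. This rules out reducing a co-r.e.-complete problem to it, and so I aim instead for a reduction establishing the equivalence
\[
    M \text{ halts} \quad\Longleftrightarrow\quad L(A_M) \text{ is } \DTA\text{-recognisable},
\]
where $A_M$ is a \kNTA 1 without epsilon transitions effectively computable from $M$. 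Since halting is r.e.-complete, this proves undecidability of the \DTA membership problem, and I will arrange the same construction to apply verbatim to \mDTA m for every $m > 0$.

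The gadget I would use is the familiar non-determinisable language $L_{=1} = \{ (a,t_1)\cdots(a,t_n) \mid \exists\, i<j,\ t_j - t_i = 1 \}$, which a \kNTA 1 accepts with a single clock and no epsilon transition, by guessing the witnessing pair $(i,j)$, resetting the clock at position $i$, and accepting when it reaches $1$ at position $j$. The crucial feature is that this is a \emph{positive, existential} check: the automaton never has to certify the \emph{absence} of a partner event. The plan is to encode a run of $M$ as a timed word in which counter values are represented by clusters of unit-spaced events, and to let $A_M$ accept precisely those encodings that \emph{contain} a distance-$1$ witness compatible with the computation, again found by a single clock. A deterministic acceptor, by contrast, cannot guess and must keep one clock alive for every event whose potential partner has not yet been seen; the number of such simultaneously pending events along the honest encoding is proportional to the resource (counter value, equivalently space/time) consumed by $M$, and this number is bounded exactly when $M$ halts.

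For correctness I would argue on the two sides of this determinisation-cost dichotomy. If $M$ halts, its unique run is finite, so along every accepted word only boundedly many events are ever pending; a deterministic automaton equipped with that many clocks can track them all, so $L(A_M)$ is \DTA-recognisable (with some $k$ depending on the length of the run, in line with the fact that this bound is not computable a priori, and consistent with the r.e.-ness noted above). If $M$ does not halt, longer and longer prefixes of its run force arbitrarily many pending events, so for every $k$ there is a finite word on which any $k$-clock deterministic automaton must err; hence $L(A_M)$ is recognised by no \DTA at all. Both directions are insensitive to the maximal constant, since the gadget uses only the constant $1 \le m$ and its non-determinisability is a clock phenomenon independent of the magnitude of constants; the same equivalence therefore yields undecidability of the \mDTA m membership problem for every $m > 0$.

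The main obstacle is exactly the restriction to \emph{one clock without epsilon transitions}, which is what separates this result from the classical reductions of Finkel and Tripakis. Those reductions recognise the complement of the set of valid computations by guessing a local error, but every error between consecutive configurations amounts to the \emph{absence} of a matching event at distance $1$, and certifying absence over a dense time domain provably needs either a second clock or an epsilon transition --- otherwise the universality of \kNTA 1, which is decidable, would encode non-halting. The entire idea is thus to relocate all the hardness into the \emph{unbounded-clock determinisation cost} of a purely positive, witness-guessing language, so that the nondeterministic automaton is never asked to certify an absence. The delicate point in carrying this out is to design the encoding so that the intrinsic lossiness of one-clock guessing does not let a deterministic acceptor discharge its obligations with fewer clocks: I must show that the number of pending events that \emph{every} equivalent deterministic automaton is forced to remember grows without bound along the run of $M$ precisely when $M$ fails to halt. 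I expect establishing this clock lower bound, rather than the language construction itself, to be the crux of the proof.
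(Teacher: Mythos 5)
You have the recursion-theoretic framing exactly right, and it agrees with the paper's: by \cref{thm:kDTA:memb} the problem is r.e., and the paper accordingly reduces from an r.e.\ undecidable problem, the finiteness problem for lossy counter machines (\cref{thm:Mayr:2003}), proving in Lemma~\ref{lem:LCM:DTA:reduction} that $\reachset M$ is finite if, and only if, $\lang A$ is \DTA-recognisable. But your concrete construction cannot be carried out, because the two possible readings of your language pull in opposite directions. If $A_M$ accepts \emph{every} timed word containing a distance-$1$ witness (with only local compatibility checks), then $L(A_M)$ fails to be \DTA-recognisable \emph{even when $M$ halts}: an adversary can crowd arbitrarily many locally plausible witness-starts into a single unit window, and the same perturbation/pigeonhole argument showing that $L_{=1}$ (Example~\ref{example:L1}) is not deterministic fools every $k$-clock \DTA; local filters cannot bound the density of pending starts, so the ``halts $\Rightarrow$ \DTA'' direction collapses. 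If instead $A_M$ accepts only \emph{valid} encodings of the run of $M$ containing a witness, then $A_M$ must verify validity, i.e., that every counter event is reproduced in the next configuration at distance exactly $1$; this is a universal matching condition that no \NTA (with any number of clocks) recognises once counters are unbounded, and your reduction must output $A_M$ uniformly, without knowing whether $M$ halts. So under either reading the construction breaks: you cannot simultaneously never certify absences and make determinisability correlate with halting.

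Moreover, the diagnosis that drives you away from the complement-of-valid-computations route is factually wrong, and it obscures the actual key idea. Certifying ``no event at distance exactly $1$ from a guessed event'' does \emph{not} require a second clock or epsilon transitions: a \kNTA 1 resets its clock on the guessed event and guards every subsequent reading transition with $\x \neq 1$ until $\x > 1$. What a \kNTA 1 genuinely cannot detect is an error in the \emph{backward} direction of matching --- an event whose required partner lies in a block that has already been read --- since that would require remembering unboundedly many timestamps. This is precisely why the paper reduces from \emph{lossy} counter machines via the reversal encoding: lossiness makes one-directional (forward) matching sound, all forward errors are $1$-clock checkable, and hence the complement of the set of valid encodings is a \kNTA 1 language. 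The equivalence then works exactly in the direction you wanted for halting: if $\reachset M$ is finite, blocks are bounded and the valid encodings form a \kDTA k language for some $k$, hence so does $\lang A$ by closure of \DTA under complement; if $\reachset M$ is infinite, a pigeonhole/fooling argument (Lemma~\ref{lem:NRAReduction} gives the register-automaton version: a deterministic acceptor reading $k+2$ pairwise distinct values must forget one, and swapping it in yields an invalid word that is still accepted) shows that no \DTA works. Your proposal is missing both ingredients --- a source problem whose error language is $1$-clock recognisable (lossiness), and a concrete fooling argument tied to the encoding --- and filling them in essentially forces the paper's proof.
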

\noindent
\cref{thm:undecidability} improves on an analogous result from~\cite[Theorem 1]{Finkel:FORMATS:2006} for \kNTA 2.
We obtain a similar undecidability result in the setting of register automata:
\begin{restatable}{thm}{thmUndecReg}%
    \label{thm:undecidabilityreg}
    The \DRA membership problem is undecidable for \kNRA 1.
\end{restatable}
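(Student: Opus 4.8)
The plan is to reduce from the halting problem for deterministic two-counter (Minsky) machines, which is undecidable, producing from a machine $M$ a one-register automaton $\mathcal{A}_M$ such that $\lang{\mathcal A_M}$ is \DRA-recognisable if, and only if, $M$ does not halt. The construction mirrors the timed case of \cref{thm:undecidability}, with the single register playing the role of the single clock.

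First I would fix a data-word encoding of finite computations of $M$ as sequences of configuration blocks $c_0 \# c_1 \# \cdots \# c_n$, where control states are recorded in the finite alphabet and each counter value is represented in unary by a chain of data values, linked to the corresponding chain in the next block so that any local inconsistency between consecutive configurations (a wrong instruction, a broken successor link, a miscounted increment or decrement, or a violated zero-test) is witnessed by a single data value. Let $\mathrm{VALC}(M)$ be the set of encodings of accepting computations of $M$ from the initial configuration. The key design goal is that its complement --- ``the input is \emph{not} a valid accepting computation'' --- be recognisable by a \kNRA{1}: the automaton nondeterministically guesses the position and kind of the local error together with the one relevant data value, storing it in its single register to confirm the mismatch.

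Over an input of the shape $c \,\#\, t$ (a computation part $c$ followed by a test part $t$), I would let $\mathcal A_M$ accept exactly when $c \notin \mathrm{VALC}(M)$ \emph{or} $t$ contains two equal data values; writing $L_{=}$ for the language ``some two data values coincide'', this is $\{\, c\#t : c \notin \mathrm{VALC}(M) \text{ or } t \in L_{=}\,\}$. Both disjuncts are \kNRA{1}-recognisable (the first as above, the second by guessing the repeated pair with one register), and since \NRA are closed under union without increasing the number of registers, $\mathcal A_M$ is a genuine \kNRA{1}. If $M$ does not halt then $\mathrm{VALC}(M) = \emptyset$, the first disjunct always holds, and $\lang{\mathcal A_M}$ is universal, hence \DRA-recognisable. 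If $M$ halts, fix an encoding $h_0 \in \mathrm{VALC}(M)$; then the fixed-prefix derivative of $\lang{\mathcal A_M}$ after reading $h_0\#$ is exactly $L_{=}$ (on test values disjoint from those of $h_0$). Were $\lang{\mathcal A_M}$ \DRA-recognisable, so would be this derivative, i.e.\ $L_{=}$; but the complement of $L_{=}$ is the all-distinct language, which is not \NRA-recognisable, so by the characterisation of \DRA-recognisability as ``both the language and its complement are \NRA-recognisable'' \cite{KlinLasotaTorunczyk:FOSSACS:2021}, $L_{=}$ is not \DRA-recognisable --- a contradiction. Thus $\lang{\mathcal A_M}$ is \DRA-recognisable iff $M$ does not halt, and undecidability follows.

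The main obstacle is the encoding of the preceding paragraph: arranging that \emph{every} way in which a data word fails to be a valid accepting computation of a counter machine is detectable by a purely existential check with a single register. Unbounded counter values must be encoded and their increments, decrements, and zero-tests verified across adjacent blocks through a pointer/linking discipline on data values for which one misplaced or missing value already certifies invalidity; achieving this with only one register (rather than two) is the delicate part, exactly as in the one-clock construction underlying \cref{thm:undecidability}. The remaining ingredients --- closure of \NRA under union, closure of \DRA under complement, stability of \DRA-recognisability under fixed-prefix derivatives, and the non-\DRA-recognisability of $L_{=}$ --- are either standard or already recorded in the preliminaries.
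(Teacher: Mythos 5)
There is a genuine gap, and it is fatal: your choice of source problem. You reduce from the halting problem for \emph{exact} two-counter Minsky machines, and the whole construction hinges on the claim that the complement of $\mathrm{VALC}(M)$ is \kNRA 1-recognisable. This is not merely the ``delicate part'' to be worked out --- it is provably impossible. If such a \kNRA 1 existed, then ``$M$ does not halt'' would be equivalent to ``$\mathrm{VALC}(M)=\emptyset$'', i.e., to universality of that \kNRA 1 (and, in your construction, of $\mathcal A_M$ itself, since $M$ halts exactly when some word $c\#t$ with $c\in\mathrm{VALC}(M)$ and $t\notin L_{=}$ is rejected). But universality of \kNRA 1 is decidable (Theorem~\ref{thm:FK94}, Corollary~\ref{lem:OW04reg}), so halting of Minsky machines would be decidable --- a contradiction. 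The underlying reason is structural: with one register and no guessing, an automaton can certify only one-directional mismatches between adjacent counter blocks (guess an occurrence in the block read earlier, store its atom, and verify it has no match in the block read later); the symmetric check would require remembering unboundedly many atoms. Hence a single register can enforce only \emph{lossy} counter semantics, and reachability/halting for lossy machines is decidable, so switching to lossy halting does not repair the reduction either.

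This is exactly why the paper reduces from a different problem: the \emph{finiteness} (space-boundedness) problem for lossy counter machines, undecidable by Theorem~\ref{thm:Mayr:2003}. The paper's \kNRA 1 accepts the complement of reversal-encodings of \emph{lossy} runs (all error conditions are one-directional, hence one-register checkable), and the correctness statement (Lemma~\ref{lem:NRAReduction}) ties finiteness of the reachability set to \DRA-membership rather than to emptiness of the set of valid computations: if the counters are bounded by $k$, the set of valid encodings is a \kDRA k language, so its complement is \DRA by closure under complementation; if the counters are unbounded, a pigeonhole argument defeats any candidate \kDRA k (it must forget one of $k+2$ distinct atoms in a long block, and replacing a later atom by the forgotten one produces an invalid word that is still accepted). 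Your second direction --- the fixed-prefix derivative landing on $L_{=}$, which is not \DRA since its complement is not \NRA --- is a sound argument pattern (it resembles the paper's Lemma~\ref{lem:easy-NRA-undecidability}), but it cannot salvage the construction, because the source problem must be one whose undecidability survives the lossy restriction. The same obstruction governs the timed case: \cref{thm:undecidability} is likewise proved via lossy-counter-machine finiteness, precisely because \kNTA 1 universality is decidable.
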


The following lower bounds further refine the analysis from~\cite{Finkel:FORMATS:2006}
in the case of a fixed number of clocks of a deterministic acceptor.
%

\begin{restatable}{thm}{thmEasyUndecidability}%
	\label{thm:easy-undecidability}
	For every fixed $k, m\in \N$, the \kDTA k and \kmDTA k m membership problems are:
	\begin{itemize}
	\item undecidable for \kNTA 2,
	\item undecidable for \kNTAe 1 (with epsilon transitions),
	\item \HyperAckermann-hard for \kNTA 1.
	\end{itemize}
\end{restatable}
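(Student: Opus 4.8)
The plan is to settle all three items with a single many-one reduction from the \emph{universality problem}, instantiated over the three input classes of increasing power. Fix once and for all the one-clock timed language $H = \{(a,t_1)\cdots(a,t_n) \mid \exists\, i<j,\; t_j - t_i = 1\}$, which is \kNTA 1-recognisable (guess the pair $i<j$, reset the clock at position $i$, and test equality to $1$ at position $j$) but is recognised by no \DTA whatsoever, with any number of clocks and any maximal constant, by the standard argument that a deterministic acceptor cannot record all past timestamps in order to later check a difference exactly equal to $1$. Given an input automaton $A_0$ over an alphabet $\Sigma$ (drawn from the relevant class, and with $\$ \notin \Sigma$ a fresh separator), I would build the automaton $B$ recognising
\[
  L(B) \;=\; \bigl(L(A_0)\cdot \$ \cdot \Sigma^*\bigr) \;\cup\; \bigl(\Sigma^* \cdot \$ \cdot H\bigr),
\]
so that $B$ accepts $w\$w'$ precisely when $w \in L(A_0)$ \emph{or} $w' \in H$. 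Both disjuncts are obtained by gluing $A_0$, a trivial ``accept-all'' component, and an $H$-recogniser through a single $\$$-transition, so $B$ is effectively constructible from $A_0$.

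First I would verify that $B$ lies in the same class as $A_0$. The separator transition resets (or simply ignores) the single clock, so the sequential composition introduces no extra clock; the nondeterministic union of the two components is realised by branching at the initial state and hence needs no $\varepsilon$-transitions; and $H$ is itself \kNTA 1 without $\varepsilon$. Consequently $B$ is a \kNTA 2 when $A_0$ is one (item~1), a \kNTAe 1 when $A_0$ is one (item~2), and --- crucially for item~3 --- a \kNTA 1 without $\varepsilon$ when $A_0$ is one.

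The heart of the argument, and the step I expect to be the main obstacle, is the equivalence: $A_0$ is universal if, and only if, $L(B)$ is \kmDTA k m-recognisable, if, and only if, $L(B)$ is \kDTA k-recognisable. If $A_0$ is universal then $L(A_0)=\Sigma^*$ and the first disjunct already forces $L(B)=\Sigma^* \cdot \$ \cdot \Sigma^*$, the language ``exactly one $\$$'', which is recognised by a clockless \DTA and is therefore \kmDTA k m- and \kDTA k-recognisable for every $k,m$. Conversely, if $A_0$ is not universal, I would pick $w_0 \notin L(A_0)$ with maximal timestamp $T_0$; then for every continuation $w'$ over $a$ with timestamps above $T_0$ one has $w_0\$w' \in L(B)$ iff $w' \in H$, so the residual of $L(B)$ after the fixed prefix $w_0\$$ is exactly a time-shifted copy of $H$. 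If some \DTA $D$ recognised $L(B)$, determinism would drive $D$ on $w_0\$$ to a unique configuration from which the accepted continuations form precisely this shifted $H$; but the continuation language of a \DTA from a fixed reachable configuration is again \DTA-recognisable (up to a harmless translation of clock values and time), which would make $H$ \DTA-recognisable, a contradiction. Hence $L(B)$ is recognised by no \DTA at all, a fortiori by no \kDTA k nor \kmDTA k m. The delicate points here are the time-shift and non-zero-initial-clock bookkeeping in this residual argument, and confirming that $H$ resists determinisation uniformly in both the number of clocks and the maximal constant.

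Finally I would instantiate the reduction. Universality is undecidable for \kNTA 2~\cite{AD94} and for \kNTAe 1 (a known consequence of admitting $\varepsilon$-transitions), which by the equivalence above yields undecidability of the \kDTA k and \kmDTA k m membership problems for these two classes and every fixed $k,m$, establishing the first two items and sharpening~\cite{Finkel:FORMATS:2006} to a fixed output budget. For the third item, universality of \kNTA 1 is decidable but \HyperAckermann-hard~\cite[Corollary 4.2]{LasotaWalukiewicz:ATA:ACM08}; since the map $A_0 \mapsto B$ is elementary and \HyperAckermann is closed under such reductions, the \kDTA k and \kmDTA k m membership problems for \kNTA 1 inherit \HyperAckermann-hardness, which is consistent with the decidability granted by \cref{thm:kDTA:memb}.
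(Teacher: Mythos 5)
Your reduction is, in substance, the paper's own: the paper proves an abstract version of it (Lemma~\ref{lem:easy-undecidability}), stating that for any class $\mathcal Y$ containing the \kDTA{0} languages, closed under union and composition, and containing some non-\kDTA{k} language, universality for $\mathcal Y$ reduces in polynomial time to \kDTA{k} (resp.\ \kmDTA{k}{m}) membership; the witnessing language there is $N = L \rhd \timedwords{\Gamma} \cup \timedwords{\Sigma} \rhd M$, which is exactly your $L(B)$, and the lemma is then instantiated with the same three universality lower bounds you invoke (\kNTA{2}, \kNTAe{1}, and \HyperAckermann-hardness for \kNTA{1}). So the route is the same; the question is whether your proof of the key equivalence is complete, and it is not.

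The gap sits exactly at the step you flag as delicate. Your contradiction needs: if a \DTA $D$ recognises $L(B)$ and $w_0 \notin L(A_0)$, then from the configuration $c$ reached by $D$ after reading $w_0\$$ one can extract a \DTA recognising $H$. You justify this with the claim that ``the continuation language of a \DTA from a fixed reachable configuration is again \DTA-recognisable (up to a harmless translation of clock values and time)''. That claim is false in general: $c$ carries clock values determined by the timestamps of $w_0$ and of $\$$, a \DTA cannot start with preset clock values, and guards compare clocks only against integers. For instance, from a configuration in which some clock has a non-integer value $v$, the accepted one-letter continuations can be exactly those arriving within time $z - v$ of the current instant (for an integer guard constant $z$), and no \DTA with integer constants---with any number of clocks and any maximal constant---recognises that residual; no time translation repairs this. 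The paper closes precisely this hole with a device your proposal lacks: the candidate \DTA is assumed w.l.o.g.\ to be \emph{greedily resetting}, and---inside the proof by contradiction, i.e.\ after $D$ is given---the separator $\$$ is placed more than $D$'s maximal constant $m_D$ after the last timestamp of $w_0$. Then, when $\$$ is read, every clock of $D$ exceeds $m_D$ and is therefore reset, so $c$ has all clocks equal to zero and is a time shift of an initial configuration; by invariance under timed automorphisms (Fact~\ref{fact:equivariant:lang}), re-rooting $D$ at the location of $c$ yields a genuine \kDTA{k} for the non-deterministic witness language, the desired contradiction. Note that this forces the separator's timestamp to be chosen \emph{after} the candidate $D$, whereas you fix the prefix $w_0\$$ beforehand. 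Without this device (or an equivalent one, e.g.\ a construction tagging as ``stale'' every clock whose value has passed $m_D$), your step ``which would make $H$ \DTA-recognisable'' does not follow.
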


\noindent
A similar landscape holds for register automata,
where the deterministic membership problem for a fixed number of registers of the deterministic acceptor
remains undecidable when given in input either a \kNRA 2 or a \kNRA 1 with guessing.
(Register automata with guessing are a more expressive family of automata
where a register can be updated with a data value not necessarily coming from the input word, i.e., it can be \emph{guessed}.
We omit a formal definition since we will not need to explicitly manipulate such automata in this paper.)
In the decidable case of a \kNRA 1 input, the problem is nonetheless not primitive recursive.

\begin{restatable}{thm}{thmDRAlowerbounds}%
	\label{thm:reg-lower-bound}
	Fix a $k \geq 0$.
	The \kDRA k membership problem is:
	\begin{enumerate}
		\item undecidable for \kNRA 2,
		\item undecidable for \kNRAg 1 (\kNRA 1 with guessing), and
		\item not primitive recursive (\Ackermann-hard) for \kNRA 1.
	\end{enumerate}
\end{restatable}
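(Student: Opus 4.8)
The plan is to give a single reduction from the \emph{universality problem} for register automata to the \kDRA k membership problem, engineered so that it works uniformly for all three cases; the three different conclusions then fall out of the three different complexities of universality for the three input classes. Recall that universality is undecidable for \kNRA 2 (cited above), undecidable for \kNRAg 1 (one register with guessing), and decidable but \Ackermann-hard for \kNRA 1 (one register, no guessing). Fix once and for all a ``hard core'' data language $K$ that is recognised by a \kNRA 1 \emph{without} guessing but by no \DRA whatsoever, regardless of the number of registers; concretely $K = \setof{w}{w \text{ contains two equal data values}}$ works, since a nondeterministic automaton stores one of the repeated values in its single register, while a pigeonhole argument on a long word of pairwise distinct values rules out any finite-register deterministic acceptor. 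Given an input automaton $A$, I build $B_A$ over data words of the form $u\,\#\,v$ (with $\#$ a fresh separator label) recognising
\[
	\lang{B_A} = \setof{u\,\#\,v}{u \in \lang{A} \text{ or } v \in K}.
\]
The whole argument rests on the equivalence: $A$ is universal if, and only if, $\lang{B_A}$ is \kDRA k recognisable for the fixed $k$.

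For the forward direction, if $\lang A$ is universal then the first disjunct already captures every word of the right shape, so $\lang{B_A}$ is simply the set of all data words of the form $u\,\#\,v$; this is a format check that a deterministic automaton performs with $0$ registers, hence $\lang{B_A}$ is \kDRA k recognisable for every $k \ge 0$. For the backward direction, suppose $A$ is not universal and pick $u_0 \notin \lang A$. The left quotient of $\lang{B_A}$ by the fixed word $u_0\,\#$ equals exactly $K$, because the first disjunct is falsified by $u_0$. The key structural observation is that \DRA-recognisable languages are closed under left quotient by a fixed data word \emph{without increasing the number of registers}: by determinism there is a unique configuration reached after reading the quotienting word, and the residual is recognised by the same automaton restarted in that configuration, its $k$ registers now pre-loaded with finitely many constants drawn from $u_0$. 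Thus, if $\lang{B_A}$ were \kDRA k recognisable, then $K$ would be recognised by a \DRA with $k$ registers and finitely many constants; since $K$ resists recognition by any finite-register \DRA even with a fixed constant initial valuation (the repeated value inserted in the pumping argument can be chosen fresh with respect to the constants), this is a contradiction.

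It remains to verify that $B_A$ stays within the register budget of its input class, so that the reduction maps \kNRA 2 to \kNRA 2, \kNRAg 1 to \kNRAg 1, and \kNRA 1 to \kNRA 1. The first disjunct simulates $A$ on the prefix $u$ and, on reading $\#$ in an accepting configuration, moves to a sink accepting every suffix; this uses exactly the registers and the guessing ability of $A$ and nothing more. The second disjunct ignores the prefix and then runs the fixed one-register guessing-free acceptor for $K$ on the suffix. Choosing the disjunct nondeterministically at the start, the union needs the $\max$ of the two register counts and inherits guessing only if $A$ has it. Since the construction is plainly polynomial, the lower bounds on universality transfer verbatim: undecidability for~(1) and~(2), and \Ackermann-hardness for~(3).

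The hard part will not be the reduction but securing the three source hardnesses in precisely the forms required, together with the residual-closure lemma. Case~(1) is immediate from the cited undecidability of \kNRA 2 universality, and case~(3) from the cited non-primitive-recursive lower bound for \kNRA 1 universality; the delicate point is case~(2), the undecidability of universality for \kNRAg 1, which I would either cite or, absent a clean reference, establish directly by encoding the halting problem of a two-counter (Minsky) machine, using the single register together with guessing to check the faithful propagation of counter values between an increment and its matching decrement. The second point needing care is to state the non-\DRA-recognisability of $K$ robustly enough to survive the finitely many constants produced by the quotient, which is exactly what lets the backward direction go through uniformly for every fixed $k$.
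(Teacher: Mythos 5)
Your proposal is correct and takes essentially the same approach as the paper: a polynomial-time reduction from universality --- relying on the same three source facts (undecidability of universality for \kNRA{2} and for \kNRAg{1}, \Ackermann-hardness for \kNRA{1}, the second of which the paper itself only cites) --- via the padded language $\lang{A}\,\#\,(\text{all}) \,\cup\, (\text{all})\,\#\,K$ with $K$ a fixed \kNRA{1}-recognisable non-\DRA language, where universality of $A$ makes the result a $0$-register language, and non-universality lets one quotient by $u_0\#$ to recover $K$ as the language of a configuration of any putative \kDRA{k}. The only real difference is how that final contradiction is closed: the paper abstracts the construction into Lemma~\ref{lem:easy-NRA-undecidability} for classes of \emph{invariant} languages and invokes invariance of the hard language $M$, whereas you instead strengthen the pigeonhole argument for $K$ so that it tolerates the finitely many register constants inherited from $u_0$ --- an equally sound, and arguably more self-contained, way of handling the same subtlety.
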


\para{Related research}

Many works have addressed the construction of a \DTA equivalent to a given \NTA
(see~\cite{BertrandStainerJeronKrichen:FMSD:2015} and references therein).
However, since the general problem is undecidable,
one has to either sacrifice termination,
or consider deterministic under/over-approximations.
In a related line of work,
we have shown that the \emph{deterministic separability problem} is decidable for the full class of \NTA, when the number of clocks of the separator is given in the input~\cite{ClementeLasotaPiorkowski:ICALP:2020}.
This contrasts with the undecidability of the corresponding membership problem.
The deterministic separability problem for register automata has not been studied in the literature.
Decidability of the deterministic separability problem when the number of clocks/registers of the separator is not provided remains a challenging open problem.


\section{Automorphisms, orbits, and invariance}\label{sec:atoms}

This section contains preliminary definitions needed both for register and timed automata.

\para{Atoms}
Let $\A$ be a structure (whose elements are called \emph{atoms})
providing a data domain on which register and timed automata operate.
In the case of register automata, we will primarily be concerned with \emph{equality atoms} $\tuple{\A, =}$,
where $\A$ is a countable set, and the signature contains the equality symbol only.
However---as we discuss at the end of Section~\ref{sec:upperbound}---our positive results generalise to
other atoms, for instance to densely ordered atoms $\tuple{\R, \leq}$,
where $\R$ is the set of real numbers with the natural order ``$\leq$''.
In the case of timed automata, we will consider \emph{timed atoms} $\tuple{\R, \leq, {+1}}$, which
extend densely ordered atoms with the increment function ``${+1}$''.

\para{Automorphisms}
Let $S \subseteq \A$ be a (possibly empty) finite set of atoms.
An {\emph{$S$-automorphism}} is a bijection $\pi : \A \to \A$ that is the identity on $S$,
i.e., $\pi(a) = a$ for every $a \in S$,
and which preserves the atom structure.
The latter condition is trivially satisfied for equality atoms.
In the case of densely ordered atoms, it also demands monotonicity: $a \leq b$ implies $\pi(a) \leq \pi(b)$.
For instance, a non-trivial automorphism of densely ordered atoms is $\pi(a) = 2 \cdot a$,
which is also a $\set 0$-automorphism since $\pi(0) = 0$.
In the case of timed atoms automorphisms need to additionally preserve integer differences:
$\pi(a+1) = \pi(a)+1$, for every $a, b \in \R$.
For instance, if $\pi(3.4)=4.5$, then the last condition necessarily implies $\pi(5.4) = 6.5$ and $\pi(-3.6) = -2.5$.
When $S = \emptyset$, we just say that $\pi$ is an automorphism.
Let $\Aut S \A$ denote the set of all $S$-automorphisms of $\A$,
and let $\Aut {} \A = \Aut \emptyset \A$.

Informally speaking, by \emph{sets with atoms} we mean sets whose elements are either sets or atoms
(for a formal definition of the hierarchy thereof, we refer to~\cite{atombook}).
If $X$ is a set with atoms, then $\pi(X)$ is the set with atoms
which is obtained by replacing every atom $a \in \A$ occurring in $X$ by $\pi(a) \in \A$.
We present some commonly occurring concrete examples, relying on standard set-theoretic encodings of tuples, functions, etc.
Let $\Sigma$ be a finite alphabet.
A \emph{data word} is a finite sequence
\begin{align}%
  \label{eq:data-word}
  w = \tuple{\sigma_0, a_0} \cdots \tuple{\sigma_n, a_n} \in (\Sigma \times \A)^*
\end{align}
of pairs $\tuple{\sigma_i, a_i}$ consisting of an input symbol $\sigma_i \in \Sigma$
and an atom $a_i \in \A$.
An automorphism $\pi$ acts on a data word $w$ as above point-wise:
$\pi(w) = \tuple{\sigma_0, \pi(a_0)} \cdots \tuple{\sigma_n, \pi(a_n)}$.

Let $\X$ be a finite set of register names and let $\A_\bot = \A \cup \set \bot$,
where $\bot \not\in \A$ represents an undefined value.
A \emph{register valuation} is a mapping 
$\mu : \A_\bot^\X$
assigning an atom (or $\bot$) $\mu(x)$ to every register $x \in \X$.
An automorphism $\pi$ acts on a register valuation $\mu$ as $\pi(\mu)(x) = \pi(\mu(x))$ for every $x \in \X$,
i.e., $\pi(\mu) = \pi \circ \mu$,
where we assume that $\pi(\bot) = \bot$.

More generally, if $X$ is a set with atoms, then $\pi$ acts on $X$ pointwise as $\pi(X) = \setof{\pi(x)}{x \in X}$.

\para{Orbits and invariance}
A set with atoms
$X$ is \emph{$S$-invariant} if $\pi(X) = X$ for every {$S$-automorphism} $\pi \in \Aut S \A$.
%
Notice that $\pi$ does not need to be the identity on $X$ for $X$ to be $S$-invariant.
%
A set $X$ is \emph{invariant}%
\footnote{The term \emph{equivariant} is also often used in the literature instead of invariant.
Also, in the case of an $S$-invariant set $X$,
one can also find the term \emph{$S$-supported},
and call $S$ a \emph{support} of $X$; see e.g.~\cite{atombook,LMCS14}.}
if it is $S$-invariant with $S = \emptyset$.
The \emph{$S$-orbit} of an element $x \in X$
(which can be an arbitrary object on which the action of  automorphisms is defined)
is the set \[\orbit S x = \setof{\pi(x) \in X}{\pi \in \Aut S \A}\]
of all elements $\pi(x)$ which can be obtained by applying some $S$-automorphism $\pi$ to $x$.
The \emph{orbit} of $x$ is just its $S$-orbit with $S = \emptyset$, written $\orbit {} x$.
Clearly $x, y \in X$ have the same $S$-orbit
$\orbit S x = \orbit S y$
if, and only if, $\pi(x) = y$ for some $\pi \in \Aut S \A$.

The \emph{$S$-orbit closure} of a set $X$ (or just \emph{$S$-closure})
is the union of the $S$-orbits of its elements:
\[\closure S X = \bigcup_{x \in X} \orbit S x.\]
In particular, the $S$-orbit of $x$ is the $S$-closure of the singleton set $\set{x}$:
$\orbit S x = \closure S {\set{x}}$.
The $\emptyset$-closure $\closure {} X$ we briefly call the \emph{closure} of $X$.
The following fact characterises invariance in term of closures.
\begin{fact}\label{fact:inv:clos}
  A set $X$ is $S$-invariant if, and only if, $\closure S X = X$.
\end{fact}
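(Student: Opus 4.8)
The plan is to prove the two implications separately, relying on the fact that $\Aut S \A$ is a group: it contains the identity $\id$, and it is closed under composition and under taking inverses (the inverse of a bijection that fixes $S$ pointwise and preserves the atom structure again fixes $S$ and preserves the structure, whether we are in the equality, densely ordered, or timed setting). I would also use that the action on the set $X$ is pointwise, $\pi(X) = \setof{\pi(x)}{x \in X}$, so that it respects composition and preserves inclusions, and that the closure unfolds to $\closure S X = \setof{\pi(x)}{x \in X, \pi \in \Aut S \A}$. As a preliminary remark, note that the containment $X \subseteq \closure S X$ holds unconditionally: for every $x \in X$ we have $x = \id(x) \in \orbit S x \subseteq \closure S X$, since $\id \in \Aut S \A$. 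Hence the real content of the Fact is the reverse containment $\closure S X \subseteq X$.

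For the forward direction, I would assume $X$ is $S$-invariant and take an arbitrary $y \in \closure S X$, say $y = \pi(x)$ with $x \in X$ and $\pi \in \Aut S \A$. Since $\pi(X) = X$ by invariance and $x \in X$, we get $y = \pi(x) \in \pi(X) = X$. This shows $\closure S X \subseteq X$, which together with the unconditional containment yields $\closure S X = X$.

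For the backward direction, I would assume $\closure S X = X$ and fix an arbitrary $\pi \in \Aut S \A$; the goal is $\pi(X) = X$. The inclusion $\pi(X) \subseteq X$ is immediate, since each $\pi(x)$ with $x \in X$ lies in $\orbit S x \subseteq \closure S X = X$. The reverse inclusion is the one genuinely non-mechanical point, and it is exactly where closure under inverses is used: applying the same argument to $\pi^{-1} \in \Aut S \A$ gives $\pi^{-1}(X) \subseteq X$, and applying $\pi$ to both sides yields $X = \pi(\pi^{-1}(X)) \subseteq \pi(X)$. Thus $\pi(X) = X$ for every $\pi$, i.e.\ $X$ is $S$-invariant. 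Without closure under inverses one would only obtain $\pi(X) \subseteq X$, which is strictly weaker than invariance, so that is the step to watch.
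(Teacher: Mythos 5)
Your proof is correct and follows essentially the same route as the paper's: the ``only if'' direction is unwound directly from the definition of $S$-invariance, and the ``if'' direction establishes $\pi(X) \subseteq X$ from the closure hypothesis and then uses closure of $\Aut S \A$ under inverses ($\pi^{-1}(X) \subseteq X$, hence $X \subseteq \pi(X)$) to get the reverse inclusion. Your version merely spells out details the paper leaves implicit, such as the unconditional containment $X \subseteq \closure S X$ via the identity automorphism.
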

\begin{proof}
  The ``only if'' direction follows the definition of $S$-invariance.
  For the ``if'' direction, observe that
  $\closure S X = X$ implies $\pi(X)\subseteq X$.
  The opposite inclusion stems from $S$-automorphisms' closure under inverse:
  $\pi^{-1}(X) \subseteq X$, hence $X\subseteq \pi(X)$.
\end{proof}

\section{Register automata}

In this section, we define register automata over equality atoms $\tuple{\A, =}$.
However, all the definitions below can naturally be generalised to any atoms satisfying some mild assumptions,
as explained later in \cref{sec:OtherAtoms}.

\para{Constraints}
A \emph{constraint} is a quantifier-free formula $\varphi$ generated by the grammar
\begin{align}\label{eq:constr}
    \varphi, \psi \ ::\equiv\ \true \sep \false \sep \x = \y \sep \x = \bot \sep \neg \varphi \sep \varphi \land \psi \sep \varphi \lor \psi,
\end{align}
where $\x, \y$ are variables and $\bot$ is a special constant denoting an undefined data value.
A \emph{valuation} is a function $\mu \in \A_\bot^\X$ assigning a data value (or $\bot$) $\mu(\x)$
to every variable $\x \in \X$.
The satisfaction relation $\mu \models \varphi$ holds whenever the valuation $\mu$ satisfies the formula $\varphi$,
and it is defined in the standard way.
The \emph{semantics} of a constraint $\varphi(\x_1, \dots, \x_n)$ with $n$ free variables $\X = \set{\x_1, \dots, \x_n}$
is the set of valuations satisfying it:
$\sem\varphi=\setof{\mu \in \A_\bot^\X}{\mu \models \varphi}$.
Using~\cite[Lemma 7.5]{atombook} we easily deduce:
\begin{clm}\label{claim:def-inv}
  Subsets of $\A_\bot^\X$ definable by constraints are exactly the invariant subsets of $\A_\bot^\X$.
\end{clm}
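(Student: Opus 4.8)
The plan is to prove the two inclusions of the claimed equality separately: ``definable $\Rightarrow$ invariant'' by a straightforward structural induction, and ``invariant $\Rightarrow$ definable'' by a finite-orbit argument. The latter is where the content lies, and it is essentially what the cited \cite[Lemma~7.5]{atombook} packages; below I sketch a self-contained argument, the only extra ingredient beyond the pure equality-atoms case being the handling of the undefined value $\bot$, which behaves like a constant fixed by all automorphisms.

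For the direction ``definable $\Rightarrow$ invariant'' I would fix a constraint $\varphi$ and an automorphism $\pi \in \Aut{}{\A}$ and establish, by induction on the structure of $\varphi$, the invariance property $\mu \models \varphi \iff \pi \circ \mu \models \varphi$ for every valuation $\mu \in \A_\bot^\X$. The Boolean connectives pass through the induction trivially, so the only cases to check are the atoms $\x = \y$ and $\x = \bot$: both equalities are preserved in each direction because $\pi$ is a bijection of $\A$ (hence injective) that fixes $\bot$, so $\mu(\x) = \mu(\y) \iff \pi(\mu(\x)) = \pi(\mu(\y))$ and $\mu(\x) = \bot \iff \pi(\mu(\x)) = \bot$. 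Applying this to $\pi$ and to $\pi^{-1}$ yields $\pi(\sem\varphi) = \sem\varphi$ for every $\pi$, which is precisely invariance of $\sem\varphi$.

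For the converse I would first determine the orbits of $\A_\bot^\X$ under $\Aut{}{\A}$. Since over equality atoms every bijection of $\A$ is an automorphism, two valuations lie in the same orbit exactly when they realise the same \emph{equality type}, i.e.\ they agree on which pairs of registers carry equal values and on which registers carry $\bot$. As $\X$ is finite there are only finitely many such types, so $\A_\bot^\X$ decomposes into finitely many orbits. Moreover each orbit is constraint-definable: writing $\charact{\mu}$ for the (finite) conjunction of all literals among $\x = \y$, $\neg(\x = \y)$, $\x = \bot$, $\neg(\x = \bot)$ that hold at $\mu$, one checks that $\sem{\charact{\mu}} = \orbit{}{\mu}$, using that $\A$ is infinite so that every consistent equality type is realisable.

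Finally I would combine these facts. By \cref{fact:inv:clos} an invariant set $X$ satisfies $X = \closure{}{X} = \bigcup_{\mu \in X} \orbit{}{\mu}$, so $X$ is a union of orbits; since there are only finitely many orbits in total, this union is finite, say $X = \orbit{}{\mu_1} \cup \cdots \cup \orbit{}{\mu_r}$, and therefore $X = \sem{\charact{\mu_1} \lor \cdots \lor \charact{\mu_r}}$ is definable (taking the constraint $\false$ when $X = \emptyset$). The main obstacle lies entirely in this second direction: one must argue that $\A_\bot^\X$ has \emph{finitely many} orbits, each of them a constraint-definable type, since it is precisely this finiteness that turns the a priori infinite union of orbits into a legitimate finite disjunction of constraints.
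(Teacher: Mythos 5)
Your proof is correct. It is worth pointing out that the paper itself gives no proof of this claim at all: it is stated as an immediate consequence of \cite[Lemma 7.5]{atombook} (definable sets $=$ equivariant sets over equality atoms), with the adaptation left implicit. Your argument is a self-contained reconstruction of what that citation packages, and it is sound in both directions: the structural induction for ``definable $\Rightarrow$ invariant'' correctly reduces to injectivity of $\pi$ on $\A_\bot$ and $\pi(\bot)=\bot$; the converse correctly identifies the orbits of $\A_\bot^\X$ with equality types (here you implicitly use that a finite partial injection of an infinite set $\A$ extends to a bijection, which is what makes ``same type $\Rightarrow$ same orbit'' go through), establishes that each orbit is defined by the complete conjunction of literals it satisfies, and then uses Fact~\ref{fact:inv:clos} plus finiteness of the number of orbits to turn an invariant set into a finite disjunction. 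The two points your version makes explicit, and which the paper's one-line citation glosses over, are precisely the handling of $\bot$ (the cited lemma is stated for $\A^\X$, not $\A_\bot^\X$, so $\bot$ must be treated as a rigid constant) and the orbit-finiteness that legitimises the finite disjunction. What the citation buys is brevity; what your proof buys is self-containedness, and it also makes visible the equality-type analysis of orbits that the paper reuses later anyway, e.g.\ in the bound $M_k \leq (k+1)!$ on the number of orbits of register valuations in the proof of Claim~\ref{claim:nrorb}.
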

%
%

\para{Register automata}

Let $\Sigma$ be a finite alphabet, and let $\Sigma_\varepsilon = \Sigma \cup \set \varepsilon$
be $\Sigma$ with the addition of the empty word $\varepsilon$.
A (nondeterministic) \emph{register automaton} (\NRA) is a tuple $A = \automaton$
where $\X = \set{\x_1, \dots, \x_k}$ is a finite set of register names,
$\Sigma$ is a finite alphabet,
$\L$ is a finite set of \emph{control locations},
of which we distinguish those which are \emph{initial} $\L_I \subseteq \L$, resp.,
\emph{final} $\L_F \subseteq \L$,
and $\Delta$ is a set of \emph{transition rules}.
We have two kinds of transition rules.
A \emph{non-$\varepsilon$-transition rule} is of the form
\begin{align}%
  \label{eq:reg:trans-rule}
  	\transition p \sigma \varphi \Y q \in \Delta
\end{align}
and it means that from control location $p \in \L$ we can go to $q \in \L$
by reading input symbol $\sigma \in \Sigma$,
provided that the transition constraint $\varphi(\x_1, \dots, \x_k, \y)$
holds between the current registers $\x_1, \dots, \x_k$ and the input data value $\y$ being currently read;
finally, all registers in $\Y \subseteq \X$ are updated to store the input data value $\y$.
An \emph{$\varepsilon$-transition rule} is of the form
$\tuple {p, \varepsilon, \varphi, q} \in \Delta$
and it means that from control location $p \in \L$ we can go to $q \in \L$,
but no input is read,
provided that the transition constraint $\varphi(\x_1, \dots, \x_k)$
holds between the current registers $\x_1, \dots, \x_k$.

Formally, the semantics of a register automaton $A$ as above is provided in terms of an infinite transition system
$\sem A = \tuple{C, C_I, C_F, \to}$,
where $C = \L \times \A_\bot^\X$ is the set of \emph{configurations},
which are pairs $\tuple{p, \mu}$ consisting of a control location $p \in \L$
and a register valuation $\mu \in \A_\bot^\X$.
Amongst them, $C_I = \L_I \times \set{\lambda \x \cdot \bot} \subseteq C$
is the set of \emph{initial configurations},
i.e., configurations of the form $\tuple{p, \mu}$ with $p \in \L_I$
and $\mu(\x) = \bot$ for all registers $\x$,
and $C_F = \L_F \times \A_\bot^\X$ is the set of \emph{final configurations},
i.e., configurations of the form $\tuple{p, \mu}$ with $p \in \L_F$
(without any further restriction on $\mu$).
The set of transitions ``$\to$'' is determined as follows.
For a valuation $\mu \in \A_\bot^\X$, a set of registers $\Y \subseteq \X$, and an atom%
\footnote{It suffices to consider non-$\bot$ values $a \neq \bot$
since we never need to reset a register to the undefined value $\bot$.}
$a \in \A$,
let $\extend \Y a \mu$
be the valuation which is $a$ on $\Y$ and agrees with $\mu$ on $\X \setminus \Y$.
Every non-$\varepsilon$-transition rule~\eqref{eq:reg:trans-rule}
induces a transition between configurations
\begin{align*}
  \tuple {p, \mu} \goesto {\sigma, a} \tuple {q, \extend \Y a \mu}
\end{align*}
labelled by $\tuple{\sigma, a} \in \Sigma \times \A$,
provided that the current valuation $\mu$ satisfies the constraint $\varphi$
when variable $\y$ holds the input atom $a$,
i.e., $\extend \y a \mu \models \varphi(\x_1, \dots, \x_k, \y)$.
%
%
Similarly, an $\varepsilon$-transition rule $\tuple{p, \varepsilon, \varphi, q} \in \Delta$
induces an $\varepsilon$-labelled transition $\tuple {p, \mu} \goesto \varepsilon \tuple {q, \mu}$
whenever $\mu \models \varphi(\x_1, \dots, \x_k)$.
Finally, in order to deal with $\varepsilon$-transitions,
we stipulate that whenever we have transitions
$\tuple{p, \mu} \goesto b \_ \goesto \varepsilon \tuple{q, \nu}$,
$\tuple{p, \mu} \goesto \varepsilon \_ \goesto b \tuple{q, \nu}$, or
$\tuple{p, \mu} \goesto \varepsilon \_ \goesto b \_ \goesto \varepsilon \tuple{q, \nu}$
with $b \in (\Sigma \times \A) \cup \set \varepsilon$,
then we also have the transition $\tuple{p, \mu} \goesto b \tuple{q, \nu}$.
A \emph{run} of $A$ \emph{over} a data word $w$ as in~\eqref{eq:data-word}
\emph{starting} in configuration $\tuple {p, \mu}$
and \emph{ending} in configuration $\tuple {q, \nu}$
is a labelled path $\rho$ in $\sem A$ 
of the form
\begin{align}%
  \label{eq:run}
  \rho = \tuple {p, \mu} \goesto {\sigma_0, a_0} \tuple {p_0, \mu_0} \goesto {\sigma_1, a_1} \cdots \goesto {\sigma_n, a_n} \tuple{q, \nu}.
\end{align}
The run $\rho$ is accepting if its ending configuration is accepting.
The language \emph{recognised} by configuration $\tuple{p, \mu}$ is the set of data words labelling accepting runs:
\begin{align*}
	\lang {p, \mu} = \setof{w\in (\Sigma \times \A)^*}{\sem A \text{ has an accepting run over } w \text{ starting in } \tuple{p, \mu}}.
\end{align*}
The language recognised by the automaton $A$ is the union of the languages recognised by its initial configurations,
$\lang A = \bigcup_{c \in C_I} \lang c$.
A configuration is \emph{reachable} if it is the ending configuration in a run starting in an initial configuration.

\begin{rem}
Register automata, as defined above, are \emph{without guessing},~i.e.\ an automaton can only store in a register
an atom appearing in the input.
\end{rem}

It turns out that \NRA with $\varepsilon$-transition rules are as expressive as \NRA without $\varepsilon$-transition rules.

\begin{lemC}[Excercise 2 in~\cite{atombook}]%
  \label{lem:NRA:epsilon-transitions}
  For every $k\in\N$ and \kNRA{k} one can effectively build an \kNRA{k} without $\varepsilon$-transition rules recognising the same language.
\end{lemC}

%

For this reason, from this point on, we deal exclusively with \NRA without $\varepsilon$-transition rules,
and we tacitly assume that an \NRA does not contain $\varepsilon$-transition rules.

\para{Deterministic register automata}
A register automaton $A$ is \emph{deterministic} (\DRA) if it has precisely one initial location $\I = \set{p_I}$
and, for every two rules
$\transition{p}{\sigma}{\varphi}{\Y}{q}$ and $\transition{p}{\sigma}{\varphi'}{\Y'}{q'}$
starting in the same location $p$, over the same input symbol $\sigma$
and with jointly satisfiable guards $\semlog{\varphi \land \varphi'} \neq \emptyset$,
we necessarily have $\Y = \Y'$ and $q = q'$.
Hence $A$ has at most one run over every data word $w$.
A \DRA can be easily transformed into a \emph{total} one,
i.e., one where for every location $p\in \L$ and input symbol $\sigma\in\Sigma$,
the sets defined by the constraints
$\setof{\semlog \varphi}{\prettyexists{\Y, q}{p \goesto{\sigma, \varphi, \Y} q}}$
are a partition of all valuations $\A_\bot^{\X \cup \set \y}$.
Thus, a total \DRA has exactly one run over every timed word $w$.

There are other syntactical constraints, more or less restrictive,
which would guarantee uniqueness of runs.
Our choice is motivated by the fact that every \NRA which has unique runs
can be transformed into an equivalent \DRA,
albeit this may cause an exponential increase in the number of control locations.
Nonetheless, since we deal with computational problems which are either undecidable or non-primitive recursive,
this shall not be a concern in the rest of the paper.

We write \kNRA k, resp. \kDRA k, for the class of $k$-register \NRA, resp. \DRA,
and we say that a data language is an \NRA language, \DRA language, \kDRA k language, etc.,
if it is recognised by a register automaton of the respective type.

\begin{exa}\label{example:L1reg}
    Let $\Sigma = \set \sigma$ be a unary alphabet.
	As an example of a language $L$ recognised by an \kNRA{1},
	but not by any \DRA,
    consider the set of data words where the last atom reappears earlier, i.e., words of the form:
		$(\sigma, a_1) \cdots (\sigma, a_n)$
	where $a_i = a_n$ for some $1\leq i < n$.
	The language $L$ is recognised by the \kNRA{1} $A = \automaton$
	with one register $\X = \set \x$ and three locations $\L = \set{p, q, r}$,
	of which $\I = \set p$ is initial and $\F = \set r$ is final, and transition rules
	\begin{align*}
		& \transition{p}{\sigma}{\true}{\emptyset}{p} \qquad
		\transition{p}{\sigma}{\true}{\set{\x}}{q} \qquad
		\transition{q}{\sigma}{\x\neq\y}{\emptyset}{q}  \qquad
		\transition{q}{\sigma}{\x=\y}{\emptyset}{r}.
	\end{align*}
	Intuitively, the automaton waits in $p$ until it guesses that the next input $a_i$ will be appearing at the end of the word as well,
	at which point it moves to $q$ by storing $a_i$ in the register.
	From $q$, the automaton can accept by going to $r$ exactly when the atom stored in the register reappears in the input.
	The language $L$ is not recognised by any \DRA
	since, intuitively, any deterministic acceptor needs to store unboundedly many different atoms $a_i$.
\end{exa}

\para{One-register automata}

Nondeterministic register automata with just one register enjoy stronger algorithmic properties
than the full class of nondeterministic register automata.
It was already observed in Kaminski and Francez's seminal paper that the inclusion problem becomes decidable%
\footnote{A \emph{window} in the terminology of~\cite{FK94} corresponds to a register in this paper's terminology.
\cite[Appendix A]{FK94} shows that the inclusion problem $\lang A \subseteq \lang B$ is decidable when $B$ is a two-window automaton. 
Due to the semantics of window reassignment of~\cite{FK94},
two-window automata are of intermediate expressive power between one-register automata and two-register automata.}.%
\begin{thm}[\protect{c.f.~\cite[Appendix A]{FK94}}]%
  \label{thm:FK94}
	For $A \in \NRA$ and $B \in \kNRA 1$ the language inclusion problem $\lang A \subseteq \lang B$ is decidable.
\end{thm}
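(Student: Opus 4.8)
The plan is to prove decidability of the inclusion $\lang A \subseteq \lang B$ for $A \in \NRA$ and $B \in \kNRA 1$ by reduction to a well-quasi-order (\wqo) argument, following the classical approach that underlies such results for one-register/one-clock models. The central difficulty is that $A$ may have arbitrarily many registers and is nondeterministic, so $\lang A$ can encode unboundedly much information; we cannot simply complement $B$ and take a product, because complementation of an \NRA is not available in general. Instead I would exploit the fact that $B$ has only one register, which sharply limits the amount of data $B$ can remember about the input seen so far.

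\medskip

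First I would set up the standard complementation-avoiding construction. To decide $\lang A \subseteq \lang B$, equivalently to decide whether $\lang A \cap \overline{\lang B} \neq \emptyset$, I would perform an on-the-fly subset construction on $B$ while running $A$ nondeterministically in parallel. Since $B$ has a single register, at any point in a run over a data word $w$, the only atom that $B$ can meaningfully compare against the current input is the single value currently stored, together with whether that register is defined. Thus the ``memory'' of $B$ relevant to the future is captured by a finite-state component (the set of control locations $B$ could currently be in) decorated by the register contents. The key observation is that, up to automorphism, what matters about $B$'s stored value is only whether it equals, or differs from, the atoms relevant to $A$'s registers; this is what keeps the product configuration space well-behaved under a suitable ordering.

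\medskip

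The heart of the argument is to equip the space of reachable product configurations with a \wqo so that any infinite run can be pruned. Concretely, a configuration of the search records the (finitely many) control locations of $A$ together with $A$'s register valuation, and the \emph{set} of control locations reachable in $B$ together with $B$'s single-register datum. Because $B$ has one register, the datum stored by each surviving copy of $B$ is one atom, and I would abstract these atoms by their equality type relative to $A$'s registers. The resulting abstract configurations form a structure on which Higman's lemma / Dickson's lemma (applied to multisets or finite sets of bounded-dimension tuples) yields a \wqo: any infinite sequence of reachable configurations contains a pair $c_i \preceq c_j$ with $i < j$ that are comparable in a monotone way, so that a witness to $\lang A \cap \overline{\lang B}$ of minimal length is bounded. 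This is the step I expect to be the main obstacle: setting up the ordering so that it is simultaneously (a) a genuine \wqo and (b) \emph{monotone} with respect to the transition relation, i.e.\ larger configurations can simulate at least as many continuations, so that finding an accepting-for-$A$/rejecting-for-$B$ continuation from a small configuration suffices. The single-register restriction on $B$ is exactly what makes the abstraction finite-dimensional and the monotonicity hold, and it is also exactly where the argument fails for two registers.

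\medskip

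Given the \wqo, decidability follows by a standard bounded-search argument: one enumerates reachable abstract product configurations breadth-first, stopping each branch as soon as it is subsumed by an earlier one, and declares $\lang A \not\subseteq \lang B$ iff some explored configuration is accepting for $A$ while $B$ has no accepting continuation. Termination is guaranteed by the \wqo (no infinite antichain and no infinite strictly descending sequence), and correctness by the monotonicity property. I would appeal to~\cite{FK94} for the original form of this construction; in our presentation it is cleanest to phrase the abstraction via orbits under $\Aut {} \A$, so that finiteness of the configuration space ``up to automorphism'' is immediate from the one-register hypothesis.
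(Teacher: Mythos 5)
The paper does not actually prove this theorem: it is imported from \cite[Appendix A]{FK94}, with a footnote explaining that the two-window automata of \cite{FK94} subsume the one-register automata of the present paper. So your argument is by necessity a different route, and it is a legitimate one: an on-the-fly subset construction of $B$ run in parallel with a nondeterministically guessed run of $A$, pruned by subsumption and terminating by a \wqo, is the standard modern technique for results of this kind (it is the method of \cite{OW04} for one-clock timed automata, transposed to registers), and it is consistent with the fact that this inclusion problem is not primitive recursive \cite[Theorem~5.2]{DL09}. Your overall skeleton is sound: the subset construction is monotone, so discarding a configuration that subsumes an earlier-explored one preserves the existence of a witness to non-inclusion, and the \wqo rules out infinite antichains, so the pruned breadth-first search terminates.

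Two points need repair, the first of which is a genuine gap. Abstracting each surviving copy of $B$ only by the equality type of its stored atom \emph{relative to $A$'s registers} is not sound: you must additionally record which copies of $B$ share the same stored atom. Under your abstraction, a configuration with two copies storing one common fresh atom and a configuration with two copies storing two distinct fresh atoms are identified, yet they have different futures, since a single future input letter can advance all copies storing the same atom simultaneously; hence the abstract transition relation is not well defined and the algorithm as described is incorrect. The right abstract object is the partition of the surviving copies by stored value: a finite multiset of subsets of $B$'s control locations (one subset per distinct atom), each tagged with the register of $A$ it coincides with, if any. This is still a \wqo under multiset embedding (Dickson's lemma over a finite alphabet of ``letters''), and it is exactly where the one-register hypothesis enters: with two registers the sharing structure across copies is no longer a partition but an arbitrary graph-like structure, whose embedding order is not a \wqo. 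Second, your closing remark that orbit-finiteness of the configuration space is ``immediate'' contradicts the rest of your own argument: the subset construction produces sets of $B$-configurations containing unboundedly many distinct atoms, so the product space is orbit-\emph{infinite}; this is precisely why the subsumption/\wqo machinery is needed at all, and why the resulting procedure has non-primitive-recursive complexity.
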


We immediately obtain the following corollary, which we will use in \cref{sec:upperboundreg}.
\begin{cor}\label{lem:OW04reg}
	For $A \in \DRA$ and $B \in \kNRA 1$ the language equality problem $\lang A = \lang B$ is decidable.
\end{cor}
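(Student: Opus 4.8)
The plan is to reduce language equality to two separate inclusions, namely $\lang A = \lang B$ holds iff both $\lang A \subseteq \lang B$ and $\lang B \subseteq \lang A$ hold, and to decide each inclusion by different means, exploiting the asymmetry of \cref{thm:FK94} (which requires the \emph{superset} side to be a \kNRA{1}).

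First, the inclusion $\lang A \subseteq \lang B$ is handled directly by \cref{thm:FK94}: since $A \in \DRA \subseteq \NRA$ and $B \in \kNRA{1}$, this instance falls exactly into the decidable fragment, with the one-register automaton on the superset side. No use of determinism of $A$ is needed here.

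The reverse inclusion $\lang B \subseteq \lang A$ does not fit \cref{thm:FK94}, because now the superset $\lang A$ is presented by a \DRA rather than by a \kNRA{1}. Here I would exploit that $A$ is deterministic. Since \DRA are effectively closed under complementation (after totalisation, by swapping final and non-final control locations), I can compute a \DRA $\bar A$ with $\lang{\bar A} = (\Sigma \times \A)^* \setminus \lang A$. Then $\lang B \subseteq \lang A$ holds iff $\lang B \cap \lang{\bar A} = \emptyset$. The intersection $\lang B \cap \lang{\bar A}$ is recognised by the synchronous product of $B$ and $\bar A$, which is an \NRA whose register set is that of $B$ together with that of $\bar A$; its nonemptiness is decidable (indeed \pspace-complete). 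Deciding this emptiness settles the reverse inclusion, and combining the two decidable inclusions yields decidability of $\lang A = \lang B$.

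The main obstacle is precisely the one-sidedness of \cref{thm:FK94}. For the direction $\lang B \subseteq \lang A$ we cannot simply complement $B$ and reduce to an emptiness check, since \NRA are not closed under complementation; the key point is that we can instead complement $A$, precisely because it is deterministic, thereby converting the problematic inclusion into an emptiness problem for an \NRA product, which is decidable.
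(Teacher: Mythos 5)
Your proposal is correct and takes essentially the same route as the paper: the inclusion $\lang A \subseteq \lang B$ is a direct instance of \cref{thm:FK94}, and the reverse inclusion $\lang B \subseteq \lang A$ is decided by complementing the deterministic automaton $A$ and checking emptiness of the product of $B$ with the complement. Your write-up merely spells out the details (totalisation, swapping final locations, \pspace-completeness of nonemptiness) that the paper's two-sentence proof leaves implicit.
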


\begin{proof}
  The inclusion $\lang A \subseteq \lang B$ can be checked as a special instance of \cref{thm:FK94}.
  The reverse inclusion $\lang B \subseteq \lang A$ reduces to checking emptiness of a product construction of $B$ with the complement of $A$.
\end{proof}

\para{Invariance of register automata}

The following lemma expresses the fundamental invariance properties of register automata.
Given a valuation $\mu$ of registers $\X$, by $\mu(\X)\subseteq \A$ we mean the set of atoms stored
in registers: $\mu(\X) = \setof{\mu(\x)}{\x\in \X, \ \mu(\x)\in\A}$.
Automorphisms act on configurations by preserving the control location:
$\pi\tuple{p, \mu} = \tuple{p, \pi(\mu)} = \tuple{p, \pi \circ \mu}$.
\begin{restatable}[Invariance of \NRA]{lem}{NRAInvariance}%
  \label{lemma:NRA:invariance}
  \begin{enumerate}
    \item The transition system $\sem A$ is invariant:
    If $c \goesto {\sigma, a} d$ in $\sem A$ and $\pi$ is an automorphism,
    then $\pi(c) \goesto {\sigma, \pi(a)} \pi(d)$ in $\sem A$.

    \item The function $\lang{\_}$ mapping a configuration $c$ to the language $\lang c$ it recognises from $c$ is invariant:
    For all automorphisms $\pi$, $\lang {\pi(c)} = \pi(\lang c)$.

    \item The language $\lang {p, \mu}$ recognised from a configuration $\tuple{p, \mu}$ is $\mu(\X)$-invariant:
    For all $\mu(\X)$-automorphims $\pi$,
    $\pi(\lang {p, \mu}) = \lang {p, \mu}$.
  \end{enumerate}
\end{restatable}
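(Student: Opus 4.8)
The plan is to prove the three items in order, each building on the previous one, with the crux being the single-step invariance in item~(1); items~(2) and~(3) then follow by lifting to runs and by a direct computation, respectively.

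For item~(1) I would start from a transition $c \goesto{\sigma, a} d$ and unfold its definition: it arises from a rule $\transition p \sigma \varphi \Y q$ with $c = \tuple{p, \mu}$ and $d = \tuple{q, \extend \Y a \mu}$, subject to the guard condition $\extend \y a \mu \models \varphi$. Applying $\pi$, I want to exhibit the transition $\pi(c) = \tuple{p, \pi \circ \mu} \goesto{\sigma, \pi(a)} \tuple{q, \extend \Y {\pi(a)}{\pi \circ \mu}}$ induced by the \emph{same} rule, and then to identify its target with $\pi(d)$. Two routine facts are to be checked. First, $\pi$ commutes with the update, i.e.\ $\pi \circ (\extend \Y a \mu) = \extend \Y {\pi(a)}{\pi \circ \mu}$, which is immediate by distinguishing registers in $\Y$ (where both sides equal $\pi(a)$) from the rest (where both sides equal $\pi \circ \mu$, using $\pi(\bot) = \bot$). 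Second, the guard is preserved, i.e.\ $\extend \y {\pi(a)}{\pi \circ \mu} = \pi(\extend \y a \mu) \models \varphi$. This is the only step needing an external ingredient: by \cref{claim:def-inv} the constraint semantics $\sem \varphi$ is an invariant subset of $\A_\bot^{\X \cup \set \y}$, so membership is preserved under $\pi$. This settles item~(1).

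For item~(2) I would lift item~(1) to entire runs. Writing a run $\rho$ from $c$ over a word $u$ as in~\eqref{eq:run} and applying $\pi$ componentwise, item~(1) shows that $\pi(\rho)$ is again a run, now starting in $\pi(c)$ and reading the word $\pi(u)$. Since automorphisms preserve control locations, the set of final configurations $C_F = \L_F \times \A_\bot^\X$ is invariant, so $\rho$ is accepting iff $\pi(\rho)$ is. As $\pi$ is a bijection on runs (with inverse induced by $\pi^{-1}$), this gives a bijection between accepting runs of $\sem A$ over $u$ from $c$ and accepting runs over $\pi(u)$ from $\pi(c)$; hence $u \in \lang c$ iff $\pi(u) \in \lang{\pi(c)}$. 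Letting $u$ range over all words yields exactly $\pi(\lang c) = \lang{\pi(c)}$. Finally, item~(3) follows from item~(2) by a one-line computation: for $\pi \in \Aut{\mu(\X)}{\A}$ we have $\pi \circ \mu = \mu$, since $\pi$ fixes every atom in $\mu(\X)$ and $\pi(\bot) = \bot$; therefore $\pi(\lang{p,\mu}) = \lang{\pi(p,\mu)} = \lang{p, \pi \circ \mu} = \lang{p, \mu}$.

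I do not anticipate a genuine obstacle: this is a standard equivariance statement. The only place requiring an external result is the guard-preservation step of item~(1), supplied by \cref{claim:def-inv}; everything else is bookkeeping, the most error-prone part being the verification that $\pi$ commutes with the register update $\extend \Y a \mu$ and with the extension $\extend \y a \mu$ used to evaluate the guard.
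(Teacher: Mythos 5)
Your proof is correct and follows essentially the route the paper intends: the paper gives no direct proof of this lemma, deferring instead to the analogous timed-automata facts (Facts~\ref{fact:equivariant:trans}--\ref{fact:invariantbase}), and your three steps --- single-step invariance via guard preservation and commutation of $\pi$ with the register update, lifting to runs with the $\pi^{-1}$/bijection argument, and the one-line derivation of item~(3) from item~(2) --- mirror those proofs exactly. Your appeal to Claim~\ref{claim:def-inv} for guard preservation is the natural equality-atoms counterpart of the integer-difference argument used in the timed setting.
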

\noindent
We refrain from proving Lemma~\ref{lemma:NRA:invariance}, since proofs of analogous invariance properties,
in the more involved setting of timed automata, are provided later (Facts~\ref{fact:equivariant:trans}--\ref{fact:invariantbase}
in Section~\ref{sec:inv}).
For the  proof of (1) in the setting of equality atoms, we refer the reader to~\cite[Sect.1.1]{atombook}; the other points are
readily derivable from (1).

\para{Deterministic membership problems}

Let $\mathcal X$ be a subclass of \NRA\@.
We are interested in the following family of decision problems:

\decision{
$\mathcal X$ membership problem}
{A register automaton $A\in$ \NRA\@.}
{Does there exist a
$B\in \mathcal X$ \st $\lang A = \lang B$?}

We study the decidability status of the $\mathcal X$ membership problem
where $\mathcal X$ ranges over \DRA and \kDRA k (for every fixed number of registers $k$).
Example~\ref{example:L1reg} shows that there are \NRA languages
that cannot be accepted by any \DRA\@.
Moreover, there is no computable bound for the number of registers $k$
which suffice to recognise a \kNRA 1 language by a \kDRA k (when such a number exists),
which follows from the following three observations:
\begin{enumerate}
  \item the \DRA membership problem is undecidable for \kNRA 1
  (\cref{thm:undecidabilityreg}),
  \item the problem of deciding equivalence of a given \kNRA 1 to a given \DRA is decidable by \cref{lem:OW04reg}
  and
  \item if an \kNRA 1 is equivalent to some \kDRA k then it is in fact equivalent to some \kDRA k
  with computably many control locations (by Lemma~\ref{thm:k-DRA-char} from the next section).
\end{enumerate}


\section{Decidability of \texorpdfstring{\kDRA{k}}{DRA-k} membership for \texorpdfstring{\kNRA{1}}{NRA-1}}%
\label{sec:upperboundreg}

%
In this section we prove our main decidability result for register automata,
which we now recall.
\thmDRAmemb*
The technical development of this section will also serve as a preparation for the more involved case of timed automata from \cref{sec:timedautomata,sec:inv}.
The key ingredient used in the proof of \cref{thm:kDRA:memb} is the following characterisation
of those \kNRA{1} languages which are also \kDRA{k} languages.
In particular, this characterisation provides a bound on the number of control locations of a \kDRA{k} equivalent to a given \kNRA{1} (if any exists).
%
%
\begin{lem}\label{thm:k-DRA-char}
	Let $A$ be a \kNRA{1} with $n$ control locations,
	and let $k\in\N$.
	The following conditions are equivalent:
	\begin{enumerate}
		\item%
		$\lang A = \lang B$ for some \kDRA{k} $B$.\label{pr1}
		\item For every data word $w$, there is $S \subseteq \A$ of size at most $k$ \st the left quotient
		$w^{-1} \lang A = \setof{v}{w \cdot v \in \lang A}$ is $S$-invariant.%
		\label{pr2}
		\item $\lang A = \lang B$ for some \kDRA{k} $B$ with at most
		$f(k,n)=(k+1)! \cdot 2^{n\cdot (k+1)}$ control locations.%
		\label{pr3}
	\end{enumerate}
\end{lem}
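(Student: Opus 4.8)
The plan is to prove the cycle of implications $(\ref{pr3})\Rightarrow(\ref{pr1})\Rightarrow(\ref{pr2})\Rightarrow(\ref{pr3})$. The implication $(\ref{pr3})\Rightarrow(\ref{pr1})$ is immediate, since a \kDRA{k} with boundedly many control locations is in particular a \kDRA{k}. For $(\ref{pr1})\Rightarrow(\ref{pr2})$ I would first totalise the deterministic acceptor $B$, so that it has exactly one run on every data word; this changes neither the language nor the number $k$ of registers. Fixing a data word $w$ and letting $\tuple{q,\mu}$ be the unique configuration reached by $B$ after reading $w$, the left quotient satisfies $\quot w{\lang A}=\quot w{\lang B}=\lang{q,\mu}$, and by Lemma~\ref{lemma:NRA:invariance}(3) the set $\lang{q,\mu}$ is $\mu(\X)$-invariant. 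Taking $S\coloneqq\mu(\X)$, which has size at most $k$ because $B$ has $k$ registers, yields the required support.

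The heart of the argument is $(\ref{pr2})\Rightarrow(\ref{pr3})$. For a data word $w$ write $L_w\coloneqq\quot w{\lang A}$ and let $R_w\subseteq\L\times\A_\bot$ be the (single-register) set of configurations of $A$ reachable after reading $w$, so that $L_w=\bigcup_{\tuple{p,a}\in R_w}\lang{p,a}$. Let $S_w$ be the least support of $L_w$; by hypothesis $(\ref{pr2})$ we have $\card{S_w}\le k$. Relative to $S_w$, classify every $\tuple{p,a}\in R_w$ by the \emph{type} of $a$: either $a=\bot$, or $a$ equals a specific element of $S_w$, or $a$ is \emph{fresh} (lies outside $S_w$); define the profile $\profile{}{w}\colon\L\to 2^{\text{types}}$ sending each location to the set of types it realises in $R_w$. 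The key lemma is that $L_w$ is \emph{entirely determined} by the pair $(S_w,\profile{}{w})$. The nontrivial part concerns fresh configurations: if $\tuple{p,a}\in R_w$ with $a$ fresh, then since $L_w$ is $S_w$-invariant and $\lang{\pi(c)}=\pi(\lang c)$, for every fresh $a'$ there is an $S_w$-automorphism $\pi$ sending $a$ to $a'$, whence $\lang{p,a'}=\pi(\lang{p,a})\subseteq\pi(L_w)=L_w$. Consequently the fresh contribution of location $p$ equals $\bigcup_{a\ \mathrm{fresh}}\lang{p,a}$, which depends only on $p$ and on whether $p$ realises a fresh type, and $L_w$ is recovered by an explicit formula summing $\lang{p,\bot}$, the $\lang{p,s}$ for $s\in S_w$, and these fresh unions, as dictated by $\profile{}{w}$.

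Given this, I would define $B$ as the syntactic (Myhill--Nerode) deterministic automaton whose states are the orbits of the residuals $L_w$, with the equivariant transition $L_w\goesto{\sigma,a}L_{w\cdot\tuple{\sigma,a}}$ and a state declared final exactly when $\emptyword\in L_w$ (which, since $\emptyword\in\lang{p,\mu}$ iff $p\in\F$, is read off from $\profile{}{w}$). Correctness and determinism are then automatic, so the only obligations are the two resource bounds. For the register bound, note that $S_{w\cdot\tuple{\sigma,a}}\subseteq S_w\cup\set a$, since every $(S_w\cup\set a)$-automorphism fixes both $L_w$ and the read atom; hence $k$ registers, updated by at most one reset to the current input and some forgetting, always suffice to store the least support, whose size is $\le k$ by $(\ref{pr2})$. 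For the control-location bound, the key lemma shows that the orbit of $L_w$ is determined by the orbit of $(S_w,\profile{}{w})$, giving an injection from the states of $B$ into those orbits; their number is at most $(k+1)!$ (orbits of a $k$-register valuation storing $S_w$) times $2^{n(k+1)}$ (profiles, i.e.\ a subset of the $k+1$ types chosen for each of the $n$ locations of $A$), which is exactly $f(k,n)$.

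The main obstacle I anticipate is the key lemma that $L_w$ is a function of the finite datum $(S_w,\profile{}{w})$: this is precisely where hypothesis $(\ref{pr2})$ is used, and it is what allows fresh configurations to be treated symmetrically and hence abstracted away. A secondary subtlety --- that one cannot in general recompute $\profile{}{}$ after reading a fresh value which might coincide with an unremembered fresh stored atom --- is sidestepped by working with the Myhill--Nerode automaton, where transitions are equivariant by construction and the profile is used only to bound the number of orbits, never to drive the dynamics.
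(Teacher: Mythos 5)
Your proposal is correct, and all the load-bearing ideas coincide with the paper's: the same cycle of implications, the same use of Lemma~\ref{lemma:NRA:invariance}(3) for (\ref{pr1})$\Rightarrow$(\ref{pr2}), and for (\ref{pr2})$\Rightarrow$(\ref{pr3}) the same four ingredients --- least supports of residuals, the monotonicity $S_{w\cdot(\sigma,a)}\subseteq S_w\cup\set{a}$ obtained from closure of finite supports under intersection, symmetrisation of fresh atoms, and the ``orbits of register valuations $\times$ profiles'' count with registers storing the support. What genuinely differs is the packaging. The paper performs the symmetrisation \emph{inside the dynamics}: it runs the powerset construction and replaces each macro-configuration by its $S$-closure (the passage $\XX\to\YY$), allocates supports into registers ($\ZZ$), and then must prove a chain of language-preservation claims (Claims~\ref{claim:eqlangAXreg}, \ref{claim:eqlangXYreg}, \ref{claim:eqlangYZreg}, \ref{claim:eqlangCBreg}) together with determinism of the extracted automaton. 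You instead take the canonical residual (Myhill--Nerode) automaton, for which determinism and language-correctness are standard, and use the reachable configuration sets $R_w$ and their profiles only to bound the number of orbits; the price is that the symmetrisation resurfaces as your key lemma that $L_w$ is a function of $(S_w,\profile{}{w})$, which is precisely the paper's closure step $\closure{S}{\cdot}$ combined with Claim~\ref{claim:det}, stated about languages rather than about states. Your profile is the transpose of the paper's characteristic function $\charact{Z}$, and your automaton is the language-quotient of the paper's $\ZZ$ (closed macro-configurations with equal languages get merged), so the same bound applies; your route arguably localises the use of hypothesis (\ref{pr2}) more cleanly, while the paper's keeps every object finitely presentable from the start. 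One shared wrinkle: your own list of types ($\bot$, each element of $S_w$, fresh) has up to $k+2$ elements, yet you count only $2^{n(k+1)}$ profiles; the paper's $\orbity{S}$ likewise omits the orbit $\set{\bot}$, so under strict accounting both arguments yield the bound $(k+1)!\cdot 2^{n(k+2)}$ rather than the stated $f(k,n)$. This off-by-one is harmless for \Cref{thm:kDRA:memb}, which only needs some computable bound, but it is worth repairing (by counting the $\bot$ type explicitly) if one wants the literal constant in the statement.
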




\noindent
Using the above lemma we derive a proof of \Cref{thm:kDRA:memb}:
\begin{proof}[Proof of \Cref{thm:kDRA:memb}]
	Given a \kNRA{1} $A$,
	the decision procedure enumerates all \kDRA{k} $B$ with at most $f(k, n)$ locations and checks whether $\lang A = \lang B$,
	using \Cref{lem:OW04reg}.
	If no such \kDRA{k} $B$ is found, the procedure answers negatively.
	Note that the fact that $A$ is \kNRA{1} is crucial here,
	since otherwise the equivalence check above would not be decidable.
	In fact, this is the only place where the one register restriction really plays a r\^ole:
	A generalisation of Lemma~\ref{thm:k-DRA-char} for $A$ being a $\kNRA l$ for $l \geq 1$
	could be stated and proved; however we prefer to avoid the additional notational complications of dealing with the general case
	since we need Lemma~\ref{thm:k-DRA-char} only in the case $l = 1$.
\end{proof}

\begin{rem}[Complexity]
	The decision procedure for \kNRA{1} invokes the \Ackermann~subroutine to check equivalence between a \kNRA 1 and a candidate \DRA\@.
	This is in a sense unavoidable,
	since we show in \Cref{thm:reg-lower-bound} that the \kDRA k membership problem is
	\Ackermann-hard for \kNRA{1}.
\end{rem}

The proof of Lemma~\ref{thm:k-DRA-char} is presented below in Sec.~\ref{sec:central-lemma}.
We remark that the lemma partially follows from the literature of automata theory in sets with atoms.
For instance, the most interesting implication~\eqref{pr2}$\limplies$\eqref{pr3},
even for $A$ an arbitrary $\kNRA l$ with $l \geq 1$,
minus the concrete bound $f(k, n)$ on control locations of $B$,
follows from the following known facts:
(a) $L(A)$ is recognised by a nondeterministic equivariant orbit-finite automaton~\cite[Theorem 5.11]{atombook},
(b) the assumption~\eqref{pr2} implies that the Myhill-Nerode equivalence of $L$ has orbit-finite index,
(c) thus by~\cite[Theorem 5.14]{atombook} $L(A)$ is recognised by a deterministic orbit-finite automaton $B'$, and
(d) we can construct from $B'$ some language-equivalent $\kDRA {k'}$ $B$.
However, this would not yield (1) the fact that we can even take $k' = k$
(i.e., the number of registers of $B$ can be taken to be the size $k$ of the supports $S$ in the assumption),
and (2) the concrete bound $f(k, n)$ on the number of control locations of $B$
(a computable such bound is necessary in the automata enumeration procedure in the proof of \Cref{thm:kDRA:memb}).
For these reasons, we provide a full proof of the lemma.

\subsection{Proof of Lemma~\ref{thm:k-DRA-char}}%
\label{sec:central-lemma}
%
Let us fix a \kNRA{1} $A = \tuple{\X, \Sigma, \L, \L_I, \L_F, \Delta}$ and $k\in\N$.
Let $n=\card \L$ be the number of control locations of $A$.
The implication~\eqref{pr3}$\limplies$\eqref{pr1} holds trivially.
The implication~\eqref{pr1}$\limplies$\eqref{pr2} holds just because every left quotient
$w^{-1} \lang A$ is the same as $w^{-1} \lang B$ by the assumption $\lang A = \lang {B}$ for a \kDRA{k} $B$,
and, since $B$ is deterministic,
the latter quotient $w^{-1} \lang B$ equals $\langa{B}{c}$ for some configuration $c = \tuple{p, \mu}$.
The latter is $\mu(\X)$-invariant by Lemma~\ref{lemma:NRA:invariance}(3),
and clearly $\card{\mu(\X)} \leq k$.
(Notice that $A$ did not play a r\^ole here.)

It thus remains to prove the implication~\eqref{pr2}$\limplies$\eqref{pr3},
which is the content of the rest of the section.
Assuming~\eqref{pr2}, we
are going to define a \kDRA{k} $B'$ with registers $\X = \set{\x_1, \dots, \x_{k}}$ and
with at most $f(k, n)$ locations such that $\lang {B'} = \lang A$.
We start from the transition system $\XX$ obtained by the finite powerset construction underlying
the determinisation of $A$.
Next, after a series of language-preserving transformations, we will obtain a transition system isomorphic
to the reachable part of $\semd{B'}$ for some \kDRA{k} $B'$.
As the last step, we extract from this deterministic transition system a syntactic definition of $B'$.
This is achievable due to the invariance properties witnessed by the transition systems in the course of the transformation.

\para{Macro-configurations}
For simplicity, we will abuse the notation and write $c = (p, a)$ for a configuration $c = (p, \set{\x_1 \mapsto a})$
of $A$, where $p \in \L$ and $a \in \A \cup \set{\bot}$.
A \emph{macro-configuration} is a (not necessarily finite)
set $X$ of configurations $(p, a)$ of $A$.
We use the notation
$\langa{A}{X} := \bigcup_{c \in X} \langa{A}{c}$.

Let $\succe{\sigma,a} X := \setof{c'\in\Conf{A}}{c \goesto{\sigma,a} c' \text{ for some } c \in X}$
be the set of successors of configurations in $X$
which can be reached by reading $\tuple{\sigma, a} \in \Sigma \times \A$.
We define a deterministic transition system $\XX$ consisting of
the macro-configurations reachable in the course of determinisation of $A$.
Let $\XX$ be the smallest set of macro-configurations and transitions such that

\begin{itemize}
\item $\XX$ contains the initial macro-configuration: $X_0 = \setof{(p, \bot)}{p\in \L_I} \in \XX$;
\item $\XX$ is closed under successor: for every $X\in\XX$ and $(\sigma,a)\in\Sigma\times\A$, there is a transition
$X \goesto {\sigma,a} \succe {\sigma,a} X$ in $\XX$.
\end{itemize}

\noindent
Due to the fact that $\semd A$ is finitely branching, i.e.~$\succe {\sigma,a} {\set{c}}$ is finite for every fixed $(\sigma,a)$,
all macro-configurations $X\in\XX$ are finite.
Let the final configurations of $\XX$ be $F_\XX = \setof{X\in\XX}{X\cap F \neq \emptyset}$ where $F \subseteq \semd A$ is the set of final configurations of $A$.
\begin{clm}\label{claim:eqlangAXreg}
$\langa A X = \langa \XX {X}$ for every $X\in\XX$. In particular $\lang A = \langa \XX {X_0}$.
\end{clm}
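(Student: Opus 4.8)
The statement is precisely the correctness of the finite powerset (subset) construction used to determinise $A$, so the plan is to prove it by induction on the length of the data word $w$, uniformly over all macro-configurations $X \in \XX$ at once. Concretely, I would establish that for every $\ell \in \N$, every $X \in \XX$, and every data word $w$ of length $\ell$, one has $w \in \langa A X$ if and only if $w \in \langa \XX X$. Recall that, by definition, $\langa A X = \bigcup_{c \in X} \langa A c$, and that $\langa \XX X$ is the set of words whose unique $\XX$-run from $X$ ends in a final macro-configuration of $F_\XX$.

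For the base case $w = \varepsilon$, I would use that $A$ has no $\varepsilon$-transition rules, as assumed throughout after \cref{lem:NRA:epsilon-transitions}, so that $\varepsilon \in \langa A c$ holds exactly when $c$ is a final configuration, i.e.\ $c \in F$. Hence $\varepsilon \in \langa A X$ iff $X \cap F \neq \emptyset$, which by the definition $F_\XX = \setof{X \in \XX}{X \cap F \neq \emptyset}$ is equivalent to $X \in F_\XX$, i.e.\ to $\varepsilon \in \langa \XX X$.

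For the inductive step, write $w = (\sigma, a) \cdot v$ with $v$ of length $\ell$. Unfolding the definition of $\langa A X$ and of the successor set $\succe{\sigma,a}X$, the word $w$ lies in $\langa A X$ iff some $c \in X$ admits a transition $c \goesto{\sigma,a} c'$ with $v \in \langa A{c'}$, which is the same as $v \in \bigcup_{c' \in \succe{\sigma,a}X} \langa A{c'} = \langa A{\succe{\sigma,a}X}$. Since $\XX$ is closed under successors, the macro-configuration $\succe{\sigma,a}X$ again lies in $\XX$, so the induction hypothesis applies to it and yields $v \in \langa A{\succe{\sigma,a}X}$ iff $v \in \langa \XX{\succe{\sigma,a}X}$. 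Finally, because $\XX$ is deterministic and carries the single transition $X \goesto{\sigma,a} \succe{\sigma,a}X$, the latter is equivalent to $w \in \langa \XX X$, closing the induction.

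The argument presents essentially no obstacle: the only points requiring care are (i) phrasing the induction so that the hypothesis is available for the successor macro-configuration, which is legitimate precisely because $\XX$ is closed under successors, and (ii) the harmless reliance on the absence of $\varepsilon$-transitions in the base case. The ``in particular'' clause then follows immediately, since the initial macro-configuration $X_0 = \setof{(p,\bot)}{p \in \L_I}$ coincides with the set $C_I$ of initial configurations of $A$, so that $\lang A = \bigcup_{c \in C_I} \langa A c = \langa A{X_0} = \langa \XX{X_0}$.
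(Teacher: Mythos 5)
Your proof is correct: it is the standard induction on data-word length establishing correctness of the powerset construction, handling the base case via the absence of $\varepsilon$-transition rules and the inductive step via closure of $\XX$ under successors together with determinism of $\XX$. The paper states this claim without any proof, treating it as the routine correctness of determinisation, and the argument it leaves implicit is exactly the induction you carried out (the same kind of induction it explicitly invokes for the analogous Claims~\ref{claim:eqlangXYreg} and~\ref{claim:eqlangYZreg}), so your proposal matches the intended approach.
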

For a macro-configuration $X$ we write
$\macval{X} := \setof{a\in\A}{\exists p \st (p, a) \in X}$ to denote the set of atoms appearing in $X$.

\para{Pre-states}
By assumption~\eqref{pr2}, for every macro-configuration $X \in \XX$,
$\langa{A}{X}$ is $S$-invariant for some $S$ of size at most $k$,
but the macro-configuration $X$ itself needs not be $S$-invariant in general.
Indeed, a finite macro-configuration $X\in\XX$ is $S$-invariant if, and only if,
$\macval X \subseteq S$,
which is impossible in general when $X$ is arbitrarily large while
the size of $S$ is bounded (by $k$).
Intuitively, in order to assure $S$-invariance we will replace $X$ by its $S$-closure $\closure{S}{X}$
(recall Fact~\ref{fact:inv:clos}).

The \emph{least support} of a set with atoms $X$
is the least $S\subseteq\A$ \wrt set inclusion \st $X$ is $S$-invariant.
In the case of equality atoms every set with atoms has the least support,
which is moreover finite (see~\cite[Cor.~9.4]{LMCS14} or~\cite[Thm.~6.1]{atombook}).
By assumption, the least finite support of every macro-configuration $X$ in $\XX$ has size at most $k$.

A \emph{pre-state} is a pair $Y = (X, S)$,
where $X$ is a macro-state whose least finite support is $S$.
Thus $X$ is $S$-invariant which, together with the fact that $S$ has size at most $k$,
implies that there are only finitely many pre-states up to automorphism.
We define the deterministic transition system $\YY$ as the smallest set of pre-states
and transitions between them such that:

\begin{itemize}
	\item $\YY$ contains the initial pre-state: $Y_0 = (X_0, \emptyset) \in \YY$;
	\item $\YY$ is closed under the closure of successor:
	For every $(X, S) \in \YY$ and $(\sigma, a) \in \Sigma\times\A$,
	the pre-state $(X', S')$ is in $\YY$ together with transition $(X, S) \goesto {\sigma,a} (X', S')$,
	where $S'$ is the least finite support of the language $L' = (\sigma,a)^{-1} \langa{A}{X} = \langa{A}{\succe{\sigma,a}{X}}$,
	and $X' = \closure{S'}{\succe{\sigma,a}{X}}$.
\end{itemize}

\begin{exa}
\label{ex:ra:B}
   Suppose that $k=3$, a successor of some macro-configuration $X$ has the shape $\succe {\sigma, a_1} X = \set{(p,a_1),(q,a_1),(r,a_2),(s,a_3)}$ and
   the least finite support $S'$ of $L'$ is $\set{a_1, a_3}$, where $a_1, a_2, a_3 \in \A$ are pairwise-different.
   Then $X' = \set{(p,a_1), (q,a_1)}\cup\set{(r,a)\mid a \in \A \setminus \set{a_1, a_3}} \cup \set{(s,a_3)}$.
\end{exa}

\noindent
By assumption, $L'$ is $T$-invariant for some $T \subseteq \A$ with $\card{T} \leq k$.
Since $X$ is $S$-invariant, $L'$ is also $(S \cup \set{a})$-invariant.
By the least finite support property of equality atoms,
finite supports are closed under intersection, and hence $S' \subseteq (S \cup \set{a}) \cap T$, which implies
$\card {S'}\leq k$.

By Lemma~\ref{lemma:NRA:invariance} we deduce:
\begin{restatable}[Invariance of $\YY$]{clm}{claimEquivY}%
\label{claim:equivYreg}
For every two transitions
\[
    (X_1, S_1) \goesto {\sigma,a_1} (X'_1, S'_1)
    \qquad\text{and}\qquad
    (X_2, S_2) \goesto {\sigma, a_2} (X'_2, S'_2)
\]
in $\YY$
and an automorphism $\pi$, if $\pi(X_1) = X_2$ and $\pi(S_1) = S_2$ and $\pi(a_1) = a_2$, then we have
$\pi(X'_1) = X'_2$ and $\pi(S'_1) = S'_2$.
\end{restatable}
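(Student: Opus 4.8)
The plan is to decompose the successor construction of $\YY$ into the operations that define it---forming the set of successors $\succe{\sigma,a}{\cdot}$, passing to the recognised language $\langa{A}{\cdot}$, extracting the least support, and taking the closure $\closure{\cdot}{\cdot}$---and then to verify that each of these commutes with the action of the automorphism $\pi$. Throughout I would use only the hypotheses $\pi(X_1)=X_2$ and $\pi(a_1)=a_2$; the remaining hypothesis $\pi(S_1)=S_2$ will reappear as a byproduct, since $S_i$ is by definition the least support of $X_i$, so it is in fact redundant.

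First I would show that the set-of-successors operation is equivariant. Directly from the invariance of $\sem A$ (Lemma~\ref{lemma:NRA:invariance}(1)), every transition $c \goesto{\sigma,a_1} c'$ in $\sem A$ is carried by $\pi$ to $\pi(c) \goesto{\sigma,\pi(a_1)} \pi(c')$, whence $\pi(\succe{\sigma,a_1}{X_1}) = \succe{\sigma,a_2}{X_2}$ after substituting $\pi(X_1)=X_2$ and $\pi(a_1)=a_2$. Writing $L'_1 = \langa{A}{\succe{\sigma,a_1}{X_1}}$ and $L'_2 = \langa{A}{\succe{\sigma,a_2}{X_2}}$, I would then invoke the equivariance of the language map (Lemma~\ref{lemma:NRA:invariance}(2)), extended from single configurations to macro-configurations by the identity $\langa{A}{\pi(Z)} = \pi(\langa{A}{Z})$ (a one-line computation using that $\langa{A}{Z}$ is the union of the $\langa{A}{c}$ over $c \in Z$ and that the pointwise action of $\pi$ distributes over unions), to conclude $L'_2 = \pi(L'_1)$.

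Next I would record the two standard equivariance facts for sets with atoms, both resting on the conjugation bijection $\tau \mapsto \pi^{-1}\tau\pi$ between $\Aut{\pi(S)}{\A}$ and $\Aut{S}{\A}$. \emph{(i) Least supports commute with automorphisms:} if $S$ is the least support of an object $Z$, then $\pi(S)$ is the least support of $\pi(Z)$, since $\pi(Z)$ is $\pi(S)$-invariant by the bijection above and minimality transfers because $T$ supports $\pi(Z)$ if and only if $\pi^{-1}(T)$ supports $Z$. Applying this to $Z = L'_1$ yields $S'_2 = \pi(S'_1)$, the second conclusion (and applying it to $Z = X_1$ recovers the redundant hypothesis $\pi(S_1)=S_2$). \emph{(ii) Closure commutes with automorphisms:} $\pi(\closure{S}{Z}) = \closure{\pi(S)}{\pi(Z)}$, again via the same conjugation on automorphisms. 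Combining the three facts I compute
\[
  \pi(X'_1) = \pi\bigl(\closure{S'_1}{\succe{\sigma,a_1}{X_1}}\bigr) = \closure{\pi(S'_1)}{\pi(\succe{\sigma,a_1}{X_1})} = \closure{S'_2}{\succe{\sigma,a_2}{X_2}} = X'_2,
\]
which is the first conclusion.

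No step here is genuinely hard: the content is carried entirely by Lemma~\ref{lemma:NRA:invariance} together with the conjugation-bijection arguments for least supports and closures. The only point requiring mild care is the passage of the invariance statements from single configurations to the (possibly infinite) macro-configurations $X_i$ and to their closures; I would establish the equivariance of the language map, of least supports, and of closures once for arbitrary sets with atoms and reuse them, rather than re-proving at each of the four operations.
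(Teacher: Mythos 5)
Your proof is correct and follows essentially the same route as the paper, which deduces this claim directly from Lemma~\ref{lemma:NRA:invariance} exactly as you do: equivariance of the successor operation and of the language map, followed by the (conjugation-based) transfer of least supports and of closures under $\pi$. Your side remark that the hypothesis $\pi(S_1)=S_2$ is redundant is also consistent with the paper's definition of pre-states, where $S_i$ is the least finite support of $X_i$.
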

Let the final configurations of $\YY$ be $F_\YY = \setof{(X,S)\in\YY}{X\cap \F \neq \emptyset}$.
By induction on the length of data words it is easy to show:
\begin{clm}\label{claim:eqlangXYreg}
	$\langa \XX {X_0} = \langa \YY {Y_0}$.
\end{clm}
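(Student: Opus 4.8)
The plan is to prove $\langa{\XX}{X_0} = \langa{\YY}{Y_0}$ by an induction on $\abs w$ that maintains a tight correspondence between the state reached in $\XX$ and the pre-state reached in $\YY$. First I would fix notation: for a data word $w$, write $X_w = \succe{w}{X_0}$ for the unique macro-configuration reached in the deterministic system $\XX$, and $(\hat X_w, S_w)$ for the unique pre-state reached in $\YY$ from $Y_0$. The invariant I aim to preserve along reading of $w$ is
\[
  \langa{A}{X_w} = \langa{A}{\hat X_w}.
\]
Since $\hat X_w$ is obtained by $S_w$-closure of the plain successor, the two macro-configurations differ in general, but I claim they always recognise the same $A$-language.

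The crux of the argument is a \emph{closure lemma}: if $Z$ is a set of configurations of $A$ such that $\langa{A}{Z}$ is $T$-invariant, then passing to the $T$-closure does not change the recognised language, i.e.\ $\langa{A}{\closure{T}{Z}} = \langa{A}{Z}$. This follows from invariance of the language map (Lemma~\ref{lemma:NRA:invariance}(2)): expanding $\closure{T}{Z} = \bigcup_{c\in Z}\orbit{T}{c}$ gives $\langa{A}{\closure{T}{Z}} = \bigcup_{c\in Z}\bigcup_{\pi\in\Aut{T}{\A}}\pi(\langa{A}{c})$; as each $\langa{A}{c} \subseteq \langa{A}{Z}$ and $\langa{A}{Z}$ is $T$-invariant, every term $\pi(\langa{A}{c})$ stays inside $\langa{A}{Z}$, yielding the inclusion $\subseteq$, while $\supseteq$ is immediate from $Z \subseteq \closure{T}{Z}$. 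This is the step I expect to carry the real content; the rest is bookkeeping.

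With the closure lemma available the induction is routine. The base case $w = \emptyword$ holds since both reached states equal $X_0$. For the step, set $w' = w\,(\sigma, a)$. In $\XX$ we have $X_{w'} = \succe{\sigma,a}{X_w}$, so by the induction hypothesis and the identity $\langa{A}{\succe{\sigma,a}{Z}} = (\sigma,a)^{-1}\langa{A}{Z}$,
\[
  \langa{A}{X_{w'}} = (\sigma,a)^{-1}\langa{A}{X_w} = (\sigma,a)^{-1}\langa{A}{\hat X_w} = \langa{A}{\succe{\sigma,a}{\hat X_w}} =: L''.
\]
In $\YY$ the reached pre-state is $(\hat X_{w'}, S_{w'})$ where $S_{w'}$ is the least support of $L''$ and $\hat X_{w'} = \closure{S_{w'}}{\succe{\sigma,a}{\hat X_w}}$. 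Since $L'' = \langa{A}{\succe{\sigma,a}{\hat X_w}}$ is $S_{w'}$-invariant by the choice of $S_{w'}$, the closure lemma applied to $Z = \succe{\sigma,a}{\hat X_w}$ and $T = S_{w'}$ gives $\langa{A}{\hat X_{w'}} = L'' = \langa{A}{X_{w'}}$, re-establishing the invariant.

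Finally I would read off acceptance from the invariant. Because the empty word lies in $\langa{A}{c}$ exactly when $c$ is a final configuration, we get $\emptyword \in \langa{A}{X}$ iff $X \cap F \neq \emptyset$ for any macro-configuration $X$, where $F$ is the set of final configurations of $A$. Hence, using the definitions of $F_\XX$ and $F_\YY$,
\[
  w \in \langa{\XX}{X_0} \iff X_w \cap F \neq \emptyset \iff \emptyword \in \langa{A}{X_w} \iff \emptyword \in \langa{A}{\hat X_w} \iff \hat X_w \cap F \neq \emptyset \iff w \in \langa{\YY}{Y_0},
\]
the middle equivalence being exactly the invariant. This yields $\langa{\XX}{X_0} = \langa{\YY}{Y_0}$, as claimed.
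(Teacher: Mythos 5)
Your proposal is correct and takes essentially the approach the paper intends: the paper dismisses this claim with ``by induction on the length of data words it is easy to show,'' and your argument is exactly that induction, carried out in detail, maintaining the invariant $\langa{A}{X_w} = \langa{A}{\hat X_w}$ along runs. The closure lemma you isolate---that $\langa{A}{\closure{T}{Z}} = \langa{A}{Z}$ whenever $\langa{A}{Z}$ is $T$-invariant, proved via Lemma~\ref{lemma:NRA:invariance}(2)---is precisely the reason replacing $\succe{\sigma,a}{X}$ by its $S'$-closure in the definition of $\YY$ is language-preserving, so your write-up correctly fills the gap the paper leaves implicit.
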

%

\para{States}
We now introduce \emph{states}, which are designed to be in one-to-one correspondence
with configurations of the forthcoming \kDRA{k} $B'$.
Intuitively, a state differs from a pre-state $(X, S)$ only by allocating the values from
(some superset of) $S$ into $k$ registers.
Thus, while a pre-state contains a set $S$, the corresponding state contains a register assignment
$\mu : \A_\bot^\X$ with image $\mu(\X) \supseteq S$.

Let $\X = \set{\x_1, \dots, \x_k}$ be a set of $k$ registers.
A \emph{state} is a pair $Z = (X, \mu)$,
where $X$
is a macro-configuration,
$\mu : \A_\bot^\X$ is a register assignment, and $X$ is $\mu(\X)$-invariant.
Thus every state $(X, \mu)$ determines uniquely a corresponding pre-state
$\tau(X, \mu) = (X, S)$ where $S\subseteq\mu(\X)$ is the least finite support of $X$.

\begin{exa}%
\label{ex:ra:C}
We continue Example~\ref{ex:ra:B}. States corresponding to the pre-state $(X', S')$ feature the macro-configuration $X'$, but can have different register valuations. One of them is $(X', \mu')$, where $\mu' = \set{\x_1 \mapsto a_1, \x_2 \mapsto a_3, \x_3 \mapsto a_1}$.
\end{exa}

We now define a deterministic transition system $\ZZ$.
Its states are all those $(X,\mu)$ satisfying $\tau(X,\mu) \in \YY$,
and transitions are determined as follows:
$\ZZ$ contains a transition $(X, \mu) \goesto {\sigma, a} (X', \mu')$ if $\YY$ contains the corresponding
transition
$\tau(X, \mu) \goesto {\sigma, a} \tau(X', \mu') = (X', S')$,
and $\mu' = \extend \Y a \mu$, where 
\begin{align}\label{eq:mureg}
\Y \ = \ \setof{\x_i\in\X}{\mu(\x_i) \notin S' \text{ or } \mu(\x_i) = \mu(\x_j) \text{ for some } j>i}.
\end{align}
%
%
The equation~\eqref{eq:mureg} defines a deterministic update policy\footnote{There are in general many correct deterministic update policies, but for our purposes it suffices to define one such deterministic update policy.} of the register assignment $\mu$
that amounts to updating with the current input atom $a$ all registers
$\x_i$ whose value is either no longer needed (because $\mu(\x_i) \notin S'$), or
is shared with some other register $x_j$, for $j>i$ and is thus redundant.
It is easy to see that the above register update policy guarantees that
$S' \subseteq \mu'(\X)\subseteq S'\cup\set{a}$.
%
Using Claim~\ref{claim:equivYreg} 
we derive:
\begin{restatable}[Invariance of $\ZZ$]{clm}{claimEquivZ}%
\label{claim:equivZreg}
For every two transitions
\[
    (X_1, \mu_1) \goesto {\sigma,a_1} (X'_1, \mu'_1)
    \qquad\text{and}\qquad
    (X_2, \mu_2) \goesto {\sigma,a_2} (X'_2, \mu'_2)
\]
in $\ZZ$
and an automorphism $\pi$, if $\pi(X_1) = X_2$ and $\pi{\circ}{\mu_1} = \mu_2$ and $\pi(a_1) = a_2$, then we have
$\pi(X'_1) = X'_2$ and $\pi{\circ}{\mu'_1} = \mu'_2$.
\end{restatable}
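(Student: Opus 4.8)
The plan is to reduce the statement to the already-established invariance of $\YY$ (\Cref{claim:equivYreg}) and then to verify separately that the deterministic register-update policy~\eqref{eq:mureg} commutes with $\pi$. First I would unfold the definition of $\ZZ$: the two $\ZZ$-transitions exist precisely because $\YY$ carries the corresponding transitions $\tau(X_1,\mu_1)\goesto{\sigma,a_1}\tau(X'_1,\mu'_1)=(X'_1,S'_1)$ and $\tau(X_2,\mu_2)\goesto{\sigma,a_2}\tau(X'_2,\mu'_2)=(X'_2,S'_2)$, where I write $\tau(X_i,\mu_i)=(X_i,S_i)$, so that each $S_i$ is the least finite support of $X_i$ and each $S'_i$ the least finite support of $X'_i$. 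From the hypothesis $\pi(X_1)=X_2$ I would derive $\pi(S_1)=S_2$ using the standard fact that automorphisms preserve least supports: if $X_2=\pi(X_1)$ then $\pi(S_1)$ is the least support of $X_2$, since conjugation by $\pi$ is a bijection between $S_1$-automorphisms and $\pi(S_1)$-automorphisms and $\pi$ is a bijection on finite sets of atoms, so minimality transfers.

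With $\pi(X_1)=X_2$, $\pi(S_1)=S_2$ and $\pi(a_1)=a_2$ in hand, \Cref{claim:equivYreg} applied to the two $\YY$-transitions yields at once both $\pi(X'_1)=X'_2$---which is the first conclusion---and $\pi(S'_1)=S'_2$. It then remains only to prove $\pi\circ\mu'_1=\mu'_2$. The key observation here is that the update sets coincide, $\Y_1=\Y_2$: since $\pi$ extends to a bijection on $\A_\bot$ fixing $\bot$, and since $\pi\circ\mu_1=\mu_2$ and $\pi(S'_1)=S'_2$, we have $\mu_1(\x_i)\notin S'_1$ iff $\mu_2(\x_i)\notin S'_2$, and $\mu_1(\x_i)=\mu_1(\x_j)$ iff $\mu_2(\x_i)=\mu_2(\x_j)$; reading off the two disjuncts in~\eqref{eq:mureg} gives $\x_i\in\Y_1\iff\x_i\in\Y_2$. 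Writing $\Y:=\Y_1=\Y_2$ and recalling $\mu'_i=\extend\Y{a_i}{\mu_i}$, I would finish by a two-case check on each register: for $\x_i\in\Y$ we have $\pi(\mu'_1(\x_i))=\pi(a_1)=a_2=\mu'_2(\x_i)$, and for $\x_i\notin\Y$ we have $\pi(\mu'_1(\x_i))=\pi(\mu_1(\x_i))=\mu_2(\x_i)=\mu'_2(\x_i)$, whence $\pi\circ\mu'_1=\mu'_2$.

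I expect the genuinely delicate points to be bookkeeping rather than conceptual. The first is the preservation of least supports under $\pi$, which relies on the least-support property of equality atoms already invoked in the construction. The second, and the real \emph{crux}, is the verification $\Y_1=\Y_2$: one must check that the test $\mu(\x_i)\notin S'$ behaves correctly when $\mu(\x_i)=\bot$ (it then holds trivially, and consistently on both sides since $\pi(\bot)=\bot$), and that the redundancy condition $\mu(\x_i)=\mu(\x_j)$ for $j>i$ is transported faithfully by the injectivity of $\pi$ on $\A_\bot$. Everything else is an immediate consequence of \Cref{claim:equivYreg}.
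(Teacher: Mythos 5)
Your proposal is correct and follows essentially the same route as the paper: the paper also derives the claim by applying the invariance of $\YY$ (Claim~\ref{claim:equivYreg}) and then observing that the deterministic update policy~\eqref{eq:mureg} commutes with $\pi$. You merely spell out two details the paper compresses (that $\pi$ maps least supports to least supports, so $\pi(S_1)=S_2$ follows from $\pi(X_1)=X_2$, and the register-by-register check that $\Y_1=\Y_2$), and both are handled correctly.
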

Let the initial state be $Z_0 = (X_0, \lambda \x . \bot)$,  and let final states be $F_\ZZ = \setof{(X, \mu)\in \ZZ}{X \cap F \neq \emptyset}$.
By induction on the length of data words one proves:
\begin{clm}\label{claim:eqlangYZreg}
$\langa \YY {Y_0} = \langa \ZZ {Z_0}$.
\end{clm}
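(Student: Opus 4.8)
The plan is to prove the two inclusions simultaneously by exhibiting a single structural invariant that links the unique run of $\ZZ$ from $Z_0$ to the unique run of $\YY$ from $Y_0$ over a common data word. Both $\YY$ and $\ZZ$ are deterministic, so over each data word $w$ there is exactly one run in each system; I would track how the macro-configuration component evolves along these two runs and observe that it is shared.

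Concretely, I would prove by induction on $\card w$ the following invariant: if the run of $\ZZ$ from $Z_0$ over $w$ ends in a state $(X, \mu)$, then the run of $\YY$ from $Y_0$ over $w$ ends in the pre-state $\tau(X, \mu) = (X, S)$; in particular, the two runs carry the same macro-configuration $X$. For the base case $\card w = 0$, the initial macro-configuration $X_0$ contains no atoms, so its least finite support is $\emptyset$, whence $\tau(Z_0) = \tau(X_0, \lambda \x . \bot) = (X_0, \emptyset) = Y_0$. For the inductive step, the defining clause of $\ZZ$ states that $(X, \mu) \goesto{\sigma, a} (X', \mu')$ is a transition of $\ZZ$ exactly when $\tau(X, \mu) \goesto{\sigma, a} \tau(X', \mu') = (X', S')$ is the corresponding transition of $\YY$; hence the macro-configuration $X'$ produced on the $\ZZ$-side coincides with the one produced on the $\YY$-side, and $\tau$ sends the new $\ZZ$-state to the new $\YY$-pre-state, preserving the invariant. (Note that determinism guarantees the successor states are unique, so no choices need be coordinated.)

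With the invariant in hand, language equality is immediate from the acceptance conditions: a run of $\ZZ$ ends in a final state $(X, \mu) \in F_\ZZ$ iff $X \cap F \neq \emptyset$, and a run of $\YY$ ends in a final pre-state $(X, S) \in F_\YY$ iff $X \cap F \neq \emptyset$. Since the two runs over $w$ share the same macro-configuration $X$, one is accepting precisely when the other is, so $w \in \langa \ZZ {Z_0}$ iff $w \in \langa \YY {Y_0}$.

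The one point requiring care—and the main obstacle—is verifying that $\tau$ is genuinely well defined along the run, i.e. that each state $(X', \mu')$ reached by $\ZZ$ satisfies $\tau(X', \mu') = (X', S')$ with $S'$ exactly the pre-state component computed on the $\YY$-side. This rests on the register update policy of~\eqref{eq:mureg}, which guarantees $S' \subseteq \mu'(\X) \subseteq S' \cup \set{a}$: since $X'$ is $S'$-invariant and $S'$ is its least finite support, the smallest support contained in $\mu'(\X)$ is exactly $S'$, so the pre-state extracted from $(X', \mu')$ is indeed $(X', S')$. Once this well-definedness is confirmed, the induction goes through routinely.
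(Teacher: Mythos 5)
Your proof is correct and takes essentially the same approach as the paper: the paper dispatches this claim with only the remark that it follows ``by induction on the length of data words'', and your invariant---that the unique runs of $\ZZ$ and $\YY$ over a common data word are linked by $\tau$ and hence carry the same macro-configuration $X$, so that the acceptance conditions $X \cap F \neq \emptyset$ coincide---is exactly that induction spelled out. Your closing observation that the update policy~\eqref{eq:mureg} keeps $S' \subseteq \mu'(\X)$, so that $\tau$ stays well defined along the run, is precisely the property the paper's construction of $\ZZ$ is designed to guarantee.
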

In the sequel we restrict $\ZZ$ to states reachable from $Z_0$.

\para{Orbits of states}
Recall that the action of automorphisms on macro-configurations and reset-point assignments is extended
to states as $\pi(X, \mu) = (\pi(X), \pi{\circ}\mu)$, and
that the orbit of a state $Z$ is defined as $\orbit {}  {Z} = \setof{\pi(Z)}{\pi \in \Aut {} \A}$.

While a state is designed to correspond to a configuration of the forthcoming \kDRA{k} $B'$,
its orbit is designed to play the r\^ole of control location of $B'$.
We therefore need to prove that
the set of orbits $\setof{\orbit {} Z}{Z\in\ZZ}$
is finite and its size is bounded by $f(k, n)$.

Let $M_k$ denote the number of orbits of register valuations $\A_\bot^\X$, which is the same as the number of
orbits of $k$-tuples $\A_\bot^k$.
In case of equality atoms we have $M_k \leq (k+1)!$ Indeed, 
at the first position there are two possibilities, $\bot$ or an atom;
at the second position there are at most three possibilities: $\bot$, the same atom at the one at the first position, or a fresh atom;
and so on, until the last position where there are at most $k+1$ possibilities.

Every $l$-element subset $S = \set{a_1, a_2, \ldots, a_l} \subseteq \A$ of atoms induces, in the case of equality atoms,
exactly $l + 1$ different $S$-orbits of atoms:
\[
\orbity S \ = \ \setof{\orbit S a}{a\in\A} \ = \ \set{\set{a_1}, \set{a_2}, \ldots, \set{a_l}, \A - S}.
\]
Therefore, each $S\subseteq \A$ of size at most $k$ induces at most $N_k := k+1$ different $S$-orbits of atoms.

Consider a state $Z = (X, \mu)$ and let $S=\mu(\X)$.
We define the characteristic function
$
\charact Z : \orbity S \to \powerset \L
$
as follows:
\[
\charact Z(o) = \setof{l\in \L}{(l, a)\in X \text{ for some }a\in o}.
\]
Since $X$ is $S$-invariant, the choice of the atom $a\in o$ is irrelevant, and we conclude:
\begin{clm}\label{claim:det}
	Every state $Z = (X,\mu)$  is uniquely determined by its register valuation $\mu$ and by the characteristic function $\charact Z$.
\end{clm}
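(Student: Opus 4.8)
The plan is to exhibit an explicit reconstruction of the macro-configuration $X$ from the pair $(\mu, \charact Z)$; this immediately gives the claimed uniqueness, since two states sharing the same register valuation and the same characteristic function must then have identical macro-configurations. I would first observe that $\mu$ alone determines the finite set $S = \mu(\X)$ of stored atoms, hence the partition $\orbity S$ of $\A$ into $S$-orbits, which is precisely the domain on which $\charact Z$ is evaluated; so once $\mu$ is fixed, $\charact Z$ becomes a genuine function whose values we may read off.

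The crux is that $S$-invariance of $X$ reduces the membership question for $X$ to orbit-level data. Concretely, for $l \in \L$ and $a \in \A$ one shows $(l, a) \in X$ if and only if $l \in \charact Z(\orbit S a)$: the forward direction is immediate from the definition of $\charact Z$, while for the converse, $l \in \charact Z(\orbit S a)$ gives $(l, a') \in X$ for some $a'$ in the same $S$-orbit as $a$, and picking an $S$-automorphism $\pi$ with $\pi(a') = a$, the identities $\pi(l, a') = (l, \pi(a')) = (l, a)$ and $\pi(X) = X$ (the latter because $X$ is $S$-invariant) force $(l, a) \in X$. This yields the reconstruction
\[
  X \cap (\L \times \A) \ = \ \SetOf{(l, a)}{a \in \A,\ l \in \charact Z(\orbit S a)},
\]
so the part of $X$ living over genuine atoms is completely determined by $\mu$ and $\charact Z$.

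The one place requiring care---and the main obstacle I anticipate---is the configurations of the form $(l, \bot)$, which are invisible to $\charact Z$ because $\orbity S$ ranges over atoms in $\A$ only. Since $\bot$ is a fixed point of every automorphism, the set $\setof{l \in \L}{(l, \bot) \in X}$ is automorphism-invariant, so no information is genuinely lost: it can be recorded together with $\charact Z$, equivalently by adjoining the singleton orbit $\set\bot$ to the domain of $\charact Z$. Combining this with the reconstruction above recovers all of $X$ from $\mu$ and $\charact Z$, establishing that the state $Z = (X, \mu)$ is uniquely determined by these two data, as claimed.
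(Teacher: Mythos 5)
Your core reconstruction is exactly the paper's argument, which is stated in one line just before the claim: since $X$ is $S$-invariant, the choice of the atom $a \in o$ in the definition of $\charact Z(o)$ is irrelevant; this is precisely your biconditional $(l,a) \in X \Leftrightarrow l \in \charact Z(\orbit S a)$, hence your reconstruction formula for the atom-part of $X$. On that part the two proofs coincide.

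Your handling of $\bot$, however, is a genuine catch rather than a redundant precaution. In the paper, $\orbity S$ consists only of $S$-orbits of atoms $a \in \A$, so $\charact Z$ as literally defined is blind to configurations $(l, \bot)$ --- and these really occur in states of $\ZZ$: the initial macro-configuration $X_0$ consists solely of them, and they persist along non-resetting transition rules (e.g.\ the rule $\transition{p}{\sigma}{\true}{\emptyset}{p}$ of Example~\ref{example:L1reg}). Consequently the claim as stated is false in general: two states $(X,\mu)$ and $(X',\mu)$ whose macro-configurations differ only in their $\bot$-configurations have the same register valuation and the same characteristic function. Your remedy --- adjoining the singleton orbit $\set{\bot}$ to the domain of $\charact Z$, i.e.\ additionally recording $\setof{l\in\L}{(l,\bot)\in X}$ --- is the right one, but note two things. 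First, your justification that ``no information is genuinely lost'' because this set is automorphism-invariant is not in itself an argument: every subset of $\L$ is automorphism-invariant (automorphisms fix locations and $\bot$), and invariance does not imply determination by $(\mu, \charact Z)$; what makes the fix work is simply that you record this set as extra data, its invariance being what keeps the subsequent orbit-counting sound. Second, the fix propagates: in Claim~\ref{claim:nrorb} the number of relevant orbits becomes at most $k+2$ instead of $N_k = k+1$, so the bound $f(k,n)$ of Lemma~\ref{thm:k-DRA-char} picks up an extra factor $2^n$ --- harmless, since any computable bound suffices for the enumeration in the proof of Theorem~\ref{thm:kDRA:memb}.
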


\begin{clm}\label{claim:nrorb}
The size of $\setof{\orbit {} Z}{Z\in\ZZ}$
is at most $M_k \cdot 2^{n\cdot N_k}$.
\end{clm}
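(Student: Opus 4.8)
The plan is to count state orbits by fibering the equivariant projection $(X,\mu)\mapsto\mu$ over the orbits of register valuations, and then to bound each fiber using the characterisation already obtained in Claim~\ref{claim:det}. Since $\setof{\orbit{}{Z}}{Z\in\ZZ}$ is contained in the set of all orbits of states, it suffices to bound the latter.

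First I would observe that the projection $\Phi(X,\mu)=\mu$ is equivariant, i.e.\ $\Phi(\pi(Z))=\pi(\Phi(Z))$ for every automorphism $\pi$, because $\pi(X,\mu)=(\pi(X),\pi\circ\mu)$. Consequently $\Phi$ descends to a well-defined map $\bar\Phi$ sending the orbit $\orbit{}{Z}$ of a state $Z=(X,\mu)$ to the orbit $\orbit{}{\mu}$ of its register valuation. The codomain of $\bar\Phi$ is the set of orbits of register valuations in $\A_\bot^\X$, which has exactly $M_k\leq(k+1)!$ elements.

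Next I would bound the size of each fiber of $\bar\Phi$. Fix an orbit $O$ of register valuations together with a representative $\mu_0\in O$, and write $S_0=\mu_0(\X)$. Every state orbit in $\bar\Phi^{-1}(O)$ has a representative of the form $(X,\mu_0)$: if $\orbit{}{(X,\mu)}$ lies in the fiber, then $\mu$ and $\mu_0$ share an orbit, so some automorphism $\pi$ satisfies $\pi\circ\mu=\mu_0$, whence $\pi(X,\mu)=(\pi(X),\mu_0)$ is the desired representative. Thus the assignment $(X,\mu_0)\mapsto\orbit{}{(X,\mu_0)}$ surjects the set of states with register valuation exactly $\mu_0$ onto the fiber $\bar\Phi^{-1}(O)$. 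By Claim~\ref{claim:det} each such state is uniquely determined by its characteristic function $\charact Z:\orbity{S_0}\to\powerset\L$; since $\card{S_0}\leq k$ there are at most $N_k=k+1$ different $S_0$-orbits of atoms, and $\card{\powerset\L}=2^n$, so there are at most $(2^n)^{N_k}=2^{n\cdot N_k}$ such functions, and therefore at most $2^{n\cdot N_k}$ states with valuation $\mu_0$. Hence $\card{\bar\Phi^{-1}(O)}\leq 2^{n\cdot N_k}$.

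Combining the two bounds, the total number of state orbits is at most $M_k\cdot 2^{n\cdot N_k}$, as required. The only delicate step is the fibering: one must check that equivariance of $\Phi$ genuinely lets us normalise the register valuation of an arbitrary orbit representative to the fixed $\mu_0$ without leaving the orbit, after which Claim~\ref{claim:det} cleanly reduces the per-fiber count to the number of characteristic functions. The remaining ingredients, namely $M_k\leq(k+1)!$ and $\card{\orbity{S_0}}\leq N_k$, were already recorded above.
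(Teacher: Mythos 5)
Your proof is correct and follows essentially the same route as the paper: both arguments decompose the count by the orbit of the register valuation (at most $M_k$ possibilities) and then bound the number of state orbits within each such class by the number of characteristic functions (at most $2^{n\cdot N_k}$), with Claim~\ref{claim:det} supplying the key uniqueness step. The only difference is presentational—you normalise each orbit to a representative with a fixed valuation $\mu_0$, whereas the paper compares two arbitrary states via the bijection $\widetilde\pi$ induced between their orbit sets—but the underlying counting is identical.
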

\begin{proof}
We show that there are at most $M_k \cdot (2^n)^{N_k}$ different orbits of states.
Consider two states $Z=(X, \mu)$ and $Z'=(X', \mu)$
and let $S = \mu(\X)$ and $S' = \mu'(\X)$.
Suppose that the register valuations $\mu$ and $\mu'$ are in the same orbit:
$\pi\circ \mu = \mu'$ for some automorphism $\pi$.
Thus $\pi(S) = S'$, and moreover $\pi$ induces a bijection $\widetilde \pi$ between $\orbity S$ and $\orbity {S'}$.
Note that, once $S$ is fixed, there are at most $(2^n)^{N_k}$ possible characteristic functions of $Z$,
and likewise for $S'$ and $Z'$.
Supposing further that the characteristic functions agree, i.e., satisfy $\charact Z = \charact {Z'}\circ \widetilde \pi$,
using Claim~\ref{claim:det} we derive $\pi(Z) = Z'$, i.e., $Z$ and $Z'$ are in the same orbit.
Therefore, since the number of orbits of register valuations $\mu, \mu'$ is at most $M_k$, and for each such orbit the
number of different characteristic functions is at most $(2^n)^{N_k}$,
the number of different orbits of states is bounded as required.
\end{proof}

For future use we observe that
every state is uniquely determined by its register valuation and its orbit:
	\begin{clm}%
	\label{claim:lastreg}
	Let $Z = (X, \mu)$ and $Z' = (X', \mu)$ be two states in $\ZZ$ with the same register valuation.
	If $\pi(X) = X'$ and $\pi{\circ}\mu = \mu$ for some automorphism $\pi$ then $X = X'$.
\end{clm}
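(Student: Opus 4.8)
The plan is to recognise that the hypothesis $\pi \circ \mu = \mu$ makes $\pi$ a $\mu(\X)$-automorphism, after which the conclusion follows purely from the invariance condition that is built into the definition of a state.

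First I would abbreviate $S = \mu(\X)$. The assumption $\pi \circ \mu = \mu$ says that $\pi(\mu(\x)) = \mu(\x)$ for every register $\x \in \X$; recalling the convention $\pi(\bot) = \bot$, this means exactly that $\pi$ is the identity on $S$, i.e., $\pi \in \Aut S \A$ is an $S$-automorphism.

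Next I would invoke the definition of a state: since $Z = (X, \mu)$ is a state, its macro-configuration $X$ is $\mu(\X)$-invariant, that is, $S$-invariant. By the definition of $S$-invariance (equivalently, by Fact~\ref{fact:inv:clos}), every $S$-automorphism maps $X$ onto itself, so in particular $\pi(X) = X$. Combined with the hypothesis $\pi(X) = X'$, this gives $X' = \pi(X) = X$, which is the desired equality. Note that only the invariance of $X$ is used; the invariance of $X'$ (which also holds, since $\mu$ is the same) is not needed.

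There is no genuine obstacle here: the claim is an immediate unfolding of the requirement that the macro-configuration of a state be invariant under those automorphisms that fix its stored atoms. I would record it separately only because it is precisely the ingredient needed later to justify that a state is uniquely determined by its register valuation together with its orbit.
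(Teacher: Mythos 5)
Your proof is correct and follows exactly the paper's own argument: the hypothesis $\pi\circ\mu=\mu$ makes $\pi$ a $\mu(\X)$-automorphism, so the $\mu(\X)$-invariance of $X$ built into the definition of a state yields $\pi(X)=X$, and hence $X=X'$. The paper states this in one line; your version merely spells out the same steps in more detail.
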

\begin{proof}
Indeed, $X$ is $\mu(X)$-invariant and hence $\pi(X) = X$, which implies $X = X'$.
\end{proof}

In the terminology of automata in sets with atoms,
we have proved that $\ZZ$ is a \emph{deterministic orbit-finite automaton} (c.f.~\cite[Sec.~5.2]{atombook} for a definition),
with the concrete bound on the number of orbits given by Lemma~\ref{claim:nrorb}.

\para{Construction of the \DRA}
As the last step we define a \kDRA {k} $B' = \tuple{\X, \Sigma, \L', \set{o_0}, \L'_F, \Delta'}$
such that the reachable part of $\semd {B'}$ is isomorphic to $\ZZ$.
Let locations $\L' = \setof{\orbit {} Z}{Z\in\ZZ}$ be the orbits of states from $\ZZ$,
the initial location be the orbit $o_0$ of $Z_0$, and final locations
$\L'_F = \setof{\orbit{}{Z}}{Z\in F_\ZZ}$ be orbits of final states.
Let each transition $Z = (X, \mu) \goesto{\sigma, a} (X', \mu') = Z'$ in $\ZZ$
induce a transition rule in $B'$
\begin{align}\label{eq:trofBreg}
	\transition o \sigma \psi \Y {o'} \in \Delta'
\end{align}
where $o = \orbit{}{Z}$, $o' = \orbit{}{Z'}$, $\Y = \setof{\x\in\X}{\mu'(\x) = a}$,
and the constraint
$\psi(\x_1, \ldots, \x_k, \y)$ defines the orbit of $\tuple{\mu(\x_1), \dots, \mu(\x_k), a}$
(here we rely on Claim~\ref{claim:def-inv}).
We argue that the automaton $B'$ is deterministic:
\begin{clm}
	Suppose that $B'$ has two transition rules $o \goesto{\sigma, \psi_1, \Y_1} o'_1$ and
	$o \goesto{\sigma, \psi_2, \Y_2} o'_2$
	with the same source location $o$ and jointly satisfiable constraints ($\semlog{\psi_1 \land \psi_2} \neq \emptyset$).
	Then the target locations are equal ($o'_1 = o'_2$), and the same registers are updated ($\Y_1 = \Y_2$).
\end{clm}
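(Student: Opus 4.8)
The plan is to trace the two transition rules back to the $\ZZ$-transitions that induce them and then combine the two hypotheses---``same source location'' and ``jointly satisfiable constraints''---into a single automorphism witnessing that the two target states lie in the same orbit. Concretely, by the construction~\eqref{eq:trofBreg} the rules arise from transitions $Z_1 = (X_1, \mu_1) \goesto{\sigma, a_1} Z'_1 = (X'_1, \mu'_1)$ and $Z_2 = (X_2, \mu_2) \goesto{\sigma, a_2} Z'_2 = (X'_2, \mu'_2)$ in $\ZZ$, with $o = \orbit{}{Z_1} = \orbit{}{Z_2}$, $o'_i = \orbit{}{Z'_i}$, and $\Y_i = \setof{\x\in\X}{\mu'_i(\x) = a_i}$.

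First I would unpack the two hypotheses into automorphisms. The shared source location yields an automorphism $\rho$ with $\rho(X_1) = X_2$ and $\rho \circ \mu_1 = \mu_2$. For the constraints, recall that $\psi_i$ defines the orbit of the tuple $\tuple{\mu_i(\x_1), \dots, \mu_i(\x_k), a_i}$; since any two orbits are either equal or disjoint, joint satisfiability $\semlog{\psi_1 \land \psi_2} \neq \emptyset$ forces these orbits to coincide, giving an automorphism $\pi$ with $\pi \circ \mu_1 = \mu_2$ and $\pi(a_1) = a_2$.

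The key step---and the main obstacle---is to upgrade $\pi$ so that it additionally satisfies $\pi(X_1) = X_2$, which is precisely what is needed to invoke the invariance of $\ZZ$ (Claim~\ref{claim:equivZreg}). The observation is that $\rho$ and $\pi$ agree on $\mu_1(\X)$, because $\rho \circ \mu_1 = \mu_2 = \pi \circ \mu_1$. Hence $\rho^{-1}\circ\pi$ is the identity on $\mu_1(\X)$, i.e.\ a $\mu_1(\X)$-automorphism; and since $X_1$ is $\mu_1(\X)$-invariant by definition of a state, we get $(\rho^{-1}\circ\pi)(X_1) = X_1$, and therefore $\pi(X_1) = \rho(X_1) = X_2$.

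With all three conditions $\pi(X_1) = X_2$, $\pi \circ \mu_1 = \mu_2$, and $\pi(a_1) = a_2$ in hand, Claim~\ref{claim:equivZreg} delivers $\pi(X'_1) = X'_2$ and $\pi\circ\mu'_1 = \mu'_2$, i.e.\ $\pi(Z'_1) = Z'_2$, so that $o'_1 = o'_2$. Finally, for the reset sets I would use $\pi\circ\mu'_1 = \mu'_2$, $\pi(a_1) = a_2$, and bijectivity of $\pi$ to conclude, for every $\x \in \X$, that $\mu'_2(\x) = a_2 \iff \pi(\mu'_1(\x)) = \pi(a_1) \iff \mu'_1(\x) = a_1$, whence $\Y_1 = \Y_2$.
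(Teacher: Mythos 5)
Your proof is correct, but it is put together differently from the paper's. The paper exploits joint satisfiability \emph{semantically}: it picks a valuation $\mu$ and an atom $a$ witnessing $\semlog{\psi_1 \land \psi_2} \neq \emptyset$, uses Claim~\ref{claim:lastreg} to argue that the configuration $(o,\mu)$ of $B'$ determines a unique state $Z = (X,\mu)$ of $\ZZ$ with $\orbit{}{Z} = o$, and then invokes determinism of $\ZZ$ to conclude that both rules are induced by the single transition of $\ZZ$ out of $Z$ on input $(\sigma,a)$. You instead work entirely with the two inducing transitions and never introduce a third witness: you extract one automorphism $\rho$ from the equality of source orbits, a second automorphism $\pi$ from the fact that jointly satisfiable orbit-defining constraints must define the same orbit, and then patch $\pi$ into a map sending $Z_1$ to $Z_2$ by noting that $\rho^{-1}\circ\pi$ fixes $\mu_1(\X)$ pointwise while $X_1$ is $\mu_1(\X)$-invariant --- this patching step re-derives inline exactly the content of the paper's proof of Claim~\ref{claim:lastreg} --- and you finish with the invariance Claim~\ref{claim:equivZreg} (which, applied with the identity automorphism, subsumes the determinism of $\ZZ$ that the paper invokes). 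The two routes use the same toolbox, but yours has a small advantage in rigour: it never needs the fact, left implicit in the paper, that the auxiliary state $Z$ determined by $(o,\mu)$ lies in the \emph{reachable} part of $\ZZ$ (to which $\ZZ$ was restricted after Claim~\ref{claim:eqlangYZreg}), since the only transitions you manipulate are the two given by the construction~\eqref{eq:trofBreg}. The paper's version, in turn, is shorter and reuses Claim~\ref{claim:lastreg} as a black box, which it needs anyway for Claim~\ref{claim:eqlangCBreg}.
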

\begin{proof}
	Since the constraints are jointly satisfiable, both transition rules are enabled in some configuration $c = (o, \mu)$
	and for some input atom $a \in \A$.
	%
		By Claim~\ref{claim:lastreg},
	$c$ determines a corresponding state $Z = (X, \mu)$ with $o = \orbit {} Z$ and, since the system $\ZZ$ is deterministic,  both transition
	rules are induced by a common transition $(X, \mu) \goesto{\sigma,a} (X', \mu')$ in $\ZZ$. This in turn implies
	$o_1 = o_2$ and $\Y_1 = \Y_2$, as required.
\end{proof}

\begin{clm}\label{claim:eqlangCBreg}
	$\ZZ$ is isomorphic to the reachable part of $\semd {B'}$.
\end{clm}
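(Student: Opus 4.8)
The plan is to exhibit an explicit isomorphism $\phi$ from $\ZZ$ (already restricted to states reachable from $Z_0$) onto $\reach{\semd{B'}}$, mapping a state $Z = (X, \mu)$ to the configuration $\phi(Z) = \tuple{\orbit{}{Z}, \mu}$. This is well defined since $\orbit{}{Z} \in \L'$ by the very definition of the location set of $B'$, and it is injective: if $\phi(X_1, \mu) = \phi(X_2, \mu)$ then $\orbit{}{(X_1, \mu)} = \orbit{}{(X_2, \mu)}$, so some automorphism $\pi$ satisfies $\pi(X_1) = X_2$ and $\pi \circ \mu = \mu$, whence $X_1 = X_2$ by Claim~\ref{claim:lastreg}. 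Moreover $\phi(Z_0) = \tuple{o_0, \lambda \x . \bot}$ is precisely the initial configuration of $B'$. Finally, since the set $F$ of final configurations of $A$ is invariant, the predicate ``$X \cap F \neq \emptyset$'' is constant on orbits, so $Z \in F_\ZZ$ iff $\orbit{}{Z} \in \L'_F$ iff $\phi(Z)$ is a final configuration of $B'$; thus $\phi$ preserves finality.

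Next I would show that $\phi$ transports transitions. For the forward direction, take a transition $Z = (X, \mu) \goesto{\sigma, a} (X', \mu') = Z'$ of $\ZZ$. By construction~\eqref{eq:trofBreg} it induces a rule $\transition o \sigma \psi \Y {o'} \in \Delta'$ with $o = \orbit{}{Z}$, $o' = \orbit{}{Z'}$, guard $\psi$ defining the orbit of $\tuple{\mu(\x_1), \dots, \mu(\x_k), a}$, and reset set $\Y = \setof{\x}{\mu'(\x) = a}$. As this tuple trivially lies in its own orbit, $\extend \y a \mu \models \psi$, so the rule fires from $\tuple{o, \mu}$ and produces $\tuple{o', \extend \Y a \mu}$. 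It then remains to reconcile $\Y$ with the update set of~\eqref{eq:mureg} from which $\mu'$ was built: for $\x \in \Y$ both $\extend \Y a \mu(\x)$ and $\mu'(\x)$ equal $a$, while for $\x \notin \Y$ we have $\mu'(\x) \neq a$, which forces $\x$ out of the set in~\eqref{eq:mureg} and hence $\mu'(\x) = \mu(\x) = \extend \Y a \mu(\x)$. Therefore $\extend \Y a \mu = \mu'$ and $\phi(Z) \goesto{\sigma, a} \phi(Z')$ in $\semd{B'}$.

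For the reverse inclusion and the conclusion I would lean on determinism rather than repeat an invariance computation (although Claim~\ref{claim:equivZreg} would give the same matching directly). As $\ZZ$ is total (it is closed under successors by construction) and deterministic, each reachable $Z$ has a unique $\goesto{\sigma, a}$-successor $Z'$, and the forward step yields $\phi(Z) \goesto{\sigma, a} \phi(Z')$; since $B'$ is deterministic by the claim just proved, any transition $\phi(Z) \goesto{\sigma, a} d$ of $\semd{B'}$ must satisfy $d = \phi(Z')$. Hence $\phi(\ZZ)$ contains the initial configuration and is closed under $\semd{B'}$-successors, so it equals $\reach{\semd{B'}}$; combined with injectivity and the preservation of labels, initial and final states, a routine induction on the length of reaching paths shows that $\phi$ is the desired isomorphism. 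The main obstacle I expect is the reset-set reconciliation in the forward step --- the rule records $\Y = \setof{\x}{\mu'(\x) = a}$ whereas $\mu'$ is defined through the possibly smaller set of~\eqref{eq:mureg} --- together with the need to ensure that, at $\phi(Z)$, it is exactly the rule induced by $Z$'s own transition that applies; the former is handled by the case analysis above and the latter is absorbed into the determinism of $B'$.
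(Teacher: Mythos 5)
Your proof is correct and follows essentially the same route as the paper: the same map $Z = (X,\mu) \mapsto (\orbit{}{Z}, \mu)$, injectivity via Claim~\ref{claim:lastreg}, the forward simulation from the definition of the rules~\eqref{eq:trofBreg}, and the converse inclusion from totality of $\ZZ$ together with determinism and reachability of $\semd{B'}$. The only difference is one of detail: your explicit reconciliation of the reset set $\Y = \setof{\x}{\mu'(\x)=a}$ with the update policy~\eqref{eq:mureg}, and the check on final states, are exactly what the paper compresses into the phrase ``by the very definition~\eqref{eq:trofBreg}''.
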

\begin{proof}
	For a state $Z = (X, \mu)$,
	let $\confof Z = (\orbit{}{Z}, \mu)$.
	Let $\ZZ'$ denote the reachable part of $\semd {B'}$.
	By Claim~\ref{claim:lastreg},
	the mapping $\confof \_$ is a bijection between $\ZZ$ and its image $\confof{\ZZ} \subseteq \semd {B'}$.
	We aim at proving $\confof \ZZ = \ZZ'$.

	By the very definition~\eqref{eq:trofBreg}, the image
	$\confof \ZZ$ is a subsystem of $\ZZ'$: $\confof \ZZ \subseteq \ZZ'$.
	For the converse inclusion,
	recall that $\ZZ$ is total: for every
	$(\sigma_1, a_1) \dots (\sigma_n, a_n) \in (\Sigma \times \A)^*$,
	there is a sequence of transitions $(X_0, \mu_0) \goesto {\sigma_1, a_1} \cdots \goesto{\sigma_n, a_n} (X_n, \mu_n)$ in $\ZZ$.
	Therefore $\confof \ZZ$ is total too and, since $\ZZ'$ is deterministic and reachable, the subsystem $\confof \ZZ\subseteq \ZZ'$
	necessarily equals $\ZZ'$.
%
%
\end{proof}

Claims~\ref{claim:eqlangAXreg},~\ref{claim:eqlangXYreg},~\ref{claim:eqlangYZreg},
and~\ref{claim:eqlangCBreg} jointly imply $\lang A = \lang {B'}$, which
completes the proof of Lemma~\ref{thm:k-DRA-char}.

\subsection{Other atoms}%
\label{sec:OtherAtoms}

The proof of  \Cref{thm:kDRA:memb} straightforwardly generalises to any relational structure of atoms $\A$ satisfying the following conditions:
\begin{itemize}
\item $\A$ is homogeneous~\cite{Fraissebook};
\item $\A$ preserves well-quasi orders (\wqo): finite induced substructures of $\A$ labelled by elements of an arbitrary \wqo,
ordered by label-respecting embedding, are again a \wqo (for details we refer the reader to~\cite[item (A3), Sect.5]{Lasota16});
\item $\A$ is effective: it is decidable, if a given finite structure over the vocabulary of $\A$ is an induced substructure thereof;
\item $\A$ has the least finite support property.
\end{itemize}
As an example, the structure of densely ordered atoms $\A = (\R, \leq)$ satisfies all the conditions and hence
\Cref{thm:kDRA:memb} holds for register automata over this structure of atoms.

We briefly discuss the adjustments needed.
The syntax of constraints~\eqref{eq:constr} is extended by adding atomic constraints for all relations in $\A$, and
Claim~\ref{claim:def-inv} holds by homogeneity of $\A$.
The decision procedure checking equivalence of a \kNRA 1 and a \DRA from \cref{lem:OW04reg},
invoked in the proof of \Cref{thm:kDRA:memb},
works assuming that $\A$ preserves \wqo and is effective.
The least finite support assumption is required in the definition of pre-states.
Finally, again due to homogeneity of $\A$ the bounds $M_k$ and $N_k$ used in Claim~\ref{claim:nrorb} are finite
(but dependent on $\A$).
%


\section{Timed automata}\label{sec:timedautomata}

The register-based model which is the closest to timed automata
is based on timed atoms $\tuple{\R, \leq, {+1}}$~\cite{BL12}.
However, due to the syntactic and semantic restrictions which are traditionally imposed on timed automata
(such as monotonicity of time, nonnegative timestamps, the special status of the initial timestamp 0, and the concrete syntax of transition constraints),
the latter can be seen only as as a strict subclass of the corresponding register model.
For this reason, we present our results on the deterministic membership probem in the syntax and semantics of timed automata.



\para{Timed words and languages}
Fix a finite alphabet $\Sigma$.
Let $\R$ and $\Rnonnegpos$ denote the reals, resp., the nonnegative reals%
\footnote{Equivalently, the rationals $\mathbb Q$ could be considered in place of reals.}.
%
Timed words are obtained by instantiating data words to timed atoms,
and imposing additional conditions: non-negativeness
and monotonicity:
A \emph{timed word} over $\Sigma$ is any sequence of the form
\begin{align}\label{eq:newtw}
	w \ = \ 
	(\sigma_1, t_1)\, \dots \, (\sigma_n, t_n) \ \in \ (\Sigma \times\Rnonneg)^*
\end{align}
%
which is \emph{monotonic},
in the sense that the timed atoms (called \emph{timestamps} henceforth)
$t_i$'s satisfy $0 \leq t_1 \leq t_2 \leq \dots \leq t_n$. 
%
For $w$ a timed word as in~\eqref{eq:newtw} and an increment $\delta \in \Rnonneg$,
let $w + \delta = (\sigma_1, t_1 + \delta)\, \dots \, (\sigma_n, t_n + \delta)$
be the timed word obtained from $w$ by increasing all timestamps by $\delta$.
Let $\timedwords{\Sigma}$ be the set of all timed words over $\Sigma$,
and let $\timedwordsafter{\Sigma}{t}$ be, for $t\in \Rnonneg$,
the set of timed words with $t_1 \geq t$. 
A \emph{timed language} is any subset of $\timedwords{\Sigma}$.


The concatenation $w \cdot v$ of two timed words $w$ and $v$
is defined only when the first timestamp of $v$ is greater or equal than the last timestamp of $w$.
%
Using this partial operation, we define,
for a timed word $w\in\timedwords{\Sigma}$ and a timed language $L\subseteq \timedwords\Sigma$, the left quotient
$w^{-1} L := \setof{v\in\timedwords\Sigma}{w \cdot v \in L}$.
Clearly $w^{-1} L \subseteq \timedwordsafter \Sigma {t_n}$. 

\para{Clock constraints and regions}
Let $\X = \set{\x_1, \dots, \x_k}$ be a finite set of clocks.
A \emph{clock valuation} is a function $\mu \in \Rnonnegpos^\X$
assigning a non-negative real number $\mu(\x)$ to every clock $\x \in \X$.
A \emph{clock constraint} is a quantifier-free formula of the form
\begin{align*}
    \varphi,\psi \ ::\equiv\ \true \sep \false \sep \x_i - \x_j \sim z \sep \x_i \sim z \sep \neg \varphi \sep \varphi \land \psi \sep \varphi \lor \psi,
\end{align*}
where ``$\sim$'' is a comparison operator in $\set{=, <, \leq, >, \geq}$
and $z \in \Z$.
A clock valuation $\mu$ satisfies a constraint $\varphi$, written $\mu \models \varphi$,
if interpreting each clock $\x_i$ by $\mu(\x_i)$ makes $\varphi$ a tautology.
Let $\sem \varphi$ be the set of clock valuations $\mu \in \Rnonnegpos^\X$ \st $\mu \models \varphi$.
If $\sem \varphi$ is non-empty and it does not strictly include a non-empty $\sem \psi \subsetneq \sem \varphi$ for some constraint $\psi$,
then we say that it is a \emph{region}.
For example, for $\X = \set{\x, \y}$
we have that $r_0 = \sem {1 < \x < 2 \land \y = 3}$ is a region,
while $r_1 = \sem {1 < \x \leq 2 \land \y = 3}$ is not.
Nonetheless, the latter partitions into two regions $r_1 = r_0 \cup \sem{\x = 2 \land \y = 3}$,
and we will later see that this is a general phenomenon.
%
%
A \emph{$k, m$-region} is a region $\semlog{\varphi}$
where $\varphi$ has $k$ clocks and absolute value of maximal constant bounded by $m$.
%
For instance, the clock constraint
$1 < \x_1 < 2 \;\wedge\; 4 < \x_2 < 5 \;\wedge\; \x_2 - \x_1 < 3$
defines a $2,5$-region consisting of an open triangle with nodes $(1, 4)$, $(2, 4)$ and $(2, 5)$.
A region $\sem \varphi$ is \emph{bounded} if it is bounded as a subset of $\Rnonnegpos^\X$ in the classical sense,
i.e., there exists $M \in \Rnonnegpos$ \st $\sem \varphi \subseteq [0, M]^\X$.

\para{Timed automata}

A (nondeterministic) \emph{timed automaton} is a tuple $A = \automaton$,
where $\X$ is a finite set of clocks,
$\Sigma$ is a finite input alphabet,
$\L$ is a finite set of control locations,
$\I, \F \subseteq \L$ are the subsets of initial, resp., final, control locations,
and $\Delta$ is a finite set of transition rules of the form
\begin{align}\label{eq:trans-rule}
	\transition{p}{\sigma}{\varphi}{\Y}{q}
\end{align}
with $p, q \in \L$ control locations, $\sigma \in \Sigma$,  
$\varphi$ a clock constraint to be tested,
and $\Y \subseteq \X$ the set of clocks to be reset.
We write \NTA for the class of all nondeterministic timed automata, \kNTA{k} when the number $k$ of clocks is fixed,
\mNTA{m} when the bound $m$ on constants is fixed,
and \kmNTA{k}{m} when both $k$ and $m$ are fixed.

An \mNTA m $A$ is \emph{always resetting} if every transition rule resets some clock
($\Y \neq \emptyset$ in~\eqref{eq:trans-rule}),
and \emph{greedily resetting}
if, for every clock $\x$,
whenever $\varphi$ implies that the value of $\x$ belongs to $\set{0, \ldots, m} \cup (m, \infty)$, then $\x \in \Y$.
Intuitively, a greedily reseting automaton resets every clock whose value is either an integer,
or exceeds the maximal constant $m$.

\para{Reset-point semantics}

We introduce a semantics based on reset points
instead of clock valuations.
A \emph{reset-point assignment} is a function $\mu \in \Rnonneg^\X$
storing, for each clock $\x \in \X$, the timestamp $\mu(\x)$ when $\x$ was last reset.
Reset-point assignments and clock valuations have the same type $\Rnonneg^\X$,
however we find it technically more convenient to work with reset points than with clock valuations.
The reset-point semantics has already appeared in the literature on timed automata~\cite{Fribourg:1998}
and it is the foundation of the related model of timed-register automata~\cite{BL12}.

A \emph{configuration} of an \NTA $A$ is a tuple $\tuple{p, \mu, \now}$
consisting of a control location $p \in \L$,
a reset-point assignment $\mu \in \Rnonneg^\X$, and a ``now'' timestamp $\now \in \Rnonneg$
satisfying $\mu(\x) \leq \now$ for all clocks $\x\in \X$.
Intuitively, $\now$ is the last timestamp seen in the input
and, for every clock $\x$,
$\mu(\x)$ stores the timestamp of the last reset of $\x$.
A configuration is \emph{initial} if $p$ is so, $\now = 0$, and $\mu(\x) = 0$ for all clocks $\x$,
and it is \emph{final} if $p$ is so
(without any further restriction on $\mu$ or $\now$).
%
For a set of clocks $\Y \subseteq \X$ and a timestamp $u\in \Rnonneg$, let $\extend \Y u \mu$
be the assignment which is $u$ on $\Y$ and agrees with $\mu$ on $\X \setminus \Y$.
A reset-point assignment $\mu$ together with $\now$ induces the clock valuation $\assigntoval \mu \now$ defined as
$(\assigntoval \mu \now)(\x) = \now - \mu(\x)$ for all clocks $\x\in\X$.

Every transition rule~\eqref{eq:trans-rule}
induces a \emph{transition} between configurations
\[\tuple {p, \mu, \now} \goesto {\sigma,t} \tuple {q, \nu, t}\]
labelled by $(\sigma,t)\in\Sigma\times\Rnonneg$ 
whenever 
%
    $t\geq \now$, 
    $\assigntoval \mu t \models \varphi$, and 
    $\nu = \extend \Y t \mu$.
%
\noindent
%
The \emph{timed transition system} induced by $A$
is $\tuple{\Conf A, \goesto {}, F}$,
where $\Conf A$ is the set of configurations,
${\goesto{}} \subseteq \Conf A \times \Sigma \times \Rnonnegpos \times \Conf A$
is as defined above, and
$F \subseteq \Conf A$ is the set of final configurations.
Since there is no danger of confusion,
we use $\semd A$ to denote either the timed transition system above, or its domain.
A \emph{run} of $A$ \emph{over} a timed word $w$ as in~\eqref{eq:newtw}
\emph{starting} in configuration $\tuple {p, \mu, t_0}$
and \emph{ending} in configuration $\tuple {q, \nu, t_n}$
is a path $\rho$ in $\semd A$ 
of the form
%
$	\rho  =  \tuple {p, \mu, t_0}
		\goesto {\sigma_1,t_1} \ 
		\dots
		\goesto {\sigma_n, t_n} \ 
		\tuple{q, \nu, t_{n}}$. 
%
%
The run $\rho$ is accepting if its last configuration satisfies $\tuple {q, \nu, t_{n}}\in F$.
The language \emph{recognised} by configuration $(p, \mu, \now)$ is defined as:
\begin{align*}
	\langtsa {\semd A} {p, \mu, \now} = \setof{w\in\timedwords{\Sigma}}{\semd A \text{ has an accepting run over } w \text{ starting in } \tuple{p, \mu, \now}}. 
\end{align*}
%
Clearly $\langtsa {\semd A} {p, \mu, \now} \subseteq \timedwordsafter \Sigma \now$.
We write $\langtsa {A} c$ instead of $\langtsa {\semd A} c$.
The language recognised by the automaton $A$ is
$\langts A = \bigcup_{c \text{ initial}} \langtsa A {c}$.

A configuration is \emph{reachable} if it is the ending configuration in a run starting in an initial configuration.
In an always resetting \mNTA m,
every reachable configuration $(p, \mu, \now)$ satisfies $\now \in \mu(\X)$,
where $\mu(\X) = \setof{\mu(\x)}{\x\in\X}$;
in a greedily resetting one,
(1) $(p, \mu, \now)$ has \emph{$m$-bounded span},
in the sense that $\mu(\X) \subseteq \lopen{\now-m, \now}$,
and moreover (2) any two clocks $\x, \y$ with integer difference
$\mu(\x) - \mu(\y) \in \Z$ are actually equal $\mu(\x) = \mu(\y)$.
Condition (2) follows from the fact that if $\x, \y$ have integer difference and $\y$ was reset last,
then $\x$ was itself an integer when this happened,
and in fact they were both reset together in a greedily resetting automaton.

\para{Deterministic timed automata}
A timed automaton $A$ 
is \emph{deterministic} if it has exactly one initial location
and, for every two rules
$\transition{p}{\sigma}{\varphi}{\Y}{q}, \transition{p}{\sigma'}{\varphi'}{\Y'}{q'} \in \Delta$,
if $\sigma = \sigma'$
and the two transition constraints are jointly satisfiable $\semlog{\varphi \land \varphi'} \neq \emptyset$,
then $\Y = \Y'$ and $q = q'$.
Hence $A$ has at most one run over every timed word $w$.
A \DTA can be easily transformed to a \emph{total} one, where for every location $p\in\L$ and $\sigma\in\Sigma$,
the sets defined by clock constraints
$\setof{\semlog \varphi}{\prettyexists{\Y, q}{\transition{p}{\sigma}{\varphi}{\Y}{q} \in \Delta}}$
are a partition of $\Rnonnegpos^\X$.
Thus, a total \DTA has exactly one run over every timed word $w$.
%
We write \DTA for the class of deterministic timed automata,
and \kDTA{k}, \mDTA{m}, and \kmDTA{k}{m} for the respective subclasses thereof.
A timed language is called \NTA language, \DTA language, etc.,
if it is recognised by a timed automaton of the respective type.
%

\begin{exa}\label{example:L1}
    This is a timed analog of Example~\ref{example:L1reg}.
    Let $\Sigma = \set \sigma$ be a unary alphabet.
	As an example of a timed language $L$ recognised by a \kNTA{1},
	but not by any \DTA,
    consider the set of non-negative timed words of the form
		$(\sigma, t_1) \cdots (\sigma, t_n)$
	where $t_n - t_i = 1$ for some $1\leq i < n$.
	The language $L$ is recognised by the \kNTA{1} $A = \automaton$
	with a single clock $\X = \set \x$ and three locations $\L = \set{p, q, r}$,
	of which $\I = \set p$ is initial and $\F = \set r$ is final, and transition rules
	\begin{align*}
		& \transition{p}{\sigma}{\true}{\emptyset}{p} \qquad
		\transition{p}{\sigma}{\true}{\set{\x}}{q} \qquad
		\transition{q}{\sigma}{\x<1}{\emptyset}{q}  \qquad
		\transition{q}{\sigma}{\x=1}{\emptyset}{r}.
	\end{align*}
	Intuitively, in $p$ the automaton waits until it guesses that the next input will be $(\sigma, t_i)$,
	at which point it moves to $q$ by resetting the clock.
	%
	From $q$, the automaton can accept by going to $r$ only if exactly one time unit elapsed since $(\sigma, t_i)$ was read.
	The language $L$ is not recognised by any \DTA
	since, intuitively, any deterministic acceptor needs to store unboundedly many timestamps $t_i$'s.
\end{exa}

\para{Deterministic membership problems}

The decision problems for \NTA we are interested in are analogous to the ones for \NRA,
but additionally a bound on the maximal constant appearing in a \DTA may be specified as a parameter.
Let $\mathcal X$ be a subclass of \NTA\@.
We are interested in the following decision problem.

\decision{
$\mathcal X$ membership problem}
{A timed automaton $A\in$ \NTA\@.}
{Does there exist a 
$B\in \mathcal X$ \st $\lang A = \lang B$?}

In the rest of the paper, we study the decidability status of the $\mathcal X$ membership problem
where $\mathcal X$ ranges over \DTA, \kDTA k (for every fixed number of clocks $k$),
\mDTA m (for every maximal constant $m$),
and \kmDTA k m (when both clocks $k$ and maximal constant $m$ are fixed).
Example~\ref{example:L1} shows that there are \NTA languages
which cannot be accepted by any \DTA\@.
Moreover, similarly as in case of \NRA, there is no computable bound for the number of clocks $k$
which suffice to recognise a \kNTA 1 language by a \kDTA k (when such a number exists).


\section{Invariance of timed automata}%
\label{sec:inv}

A fundamental tool used below 
is invariance properties of timed languages recognised by \NTA with respect
to timed automorphisms.
In this section we establish these properties, as extension of analogous properties of register automata.
We also prove a timed analog of the least support property, and relate regions to orbits of configurations.

%

Recall that timed automorphisms are monotonic bijections $\R\to\R$ that preserve integer differences.
A timed automorphism $\pi$ acts on input letters in $\Sigma$ as the identity, $\pi(a) = a$,
and is extended point-wise to
timed words $\pi((\sigma_1, t_1) \dots (\sigma_n, t_n)) = (\sigma_0, \pi(t_1)) \dots (\sigma_n, \pi(t_n))$,
configurations $\pi(p, \mu, \now) = (p, \pi{\circ}\mu, \pi(\now))$,
transitions
$\pi(c \goesto {\sigma, t} c') = \pi(c) \goesto {\sigma, \pi(t)} \pi(c')$,
and sets $X$ thereof $\pi(X) = \setof{\pi(x)}{x\in X}$.
%
\begin{rem}
In considerations about timed automata we restrict to nonnegative reals, while
  a timed automorphism $\pi$ can in general take a nonnegative real $t\geq 0$ to a negative one.
  %
  In the sequel whenever we write $\pi(x)$, for $x$ being any object like a timestamp, a configuration, a timed word, etc.,
  we always implicitly assume that $\pi$ is well-defined on $x$, i.e., yields a timestamp, a configuration, a timed word, etc.
  In other words, for invariance properties we restrict to those timed automorphisms that preserve nonnegativeness
  of all the involved timestamps.
\end{rem}

Let $S \subseteq \Rnonneg$.
An \emph{$S$-timed automorphism} is a timed automorphism \st $\pi(t) = t$ for all $t\in S$.
Let $\Pi_S$ denote the set of all $S$-timed automorphisms, and let $\Pi = \Pi_\emptyset$.
A set $X$ is \emph{$S$-invariant}
if $\pi(X) = X$ for every $\pi \in \Pi_S$;
equivalently,
for every $\pi \in \Pi_S$,
$x\in X$ if, and only if $\pi(x)\in X$.
%
A set $X$ is \emph{invariant} if it is $S$-invariant with $S = \emptyset$.
The following three facts express some basic invariance properties.
%
\begin{restatable}{fact}{invarianceTrans}
  \label{fact:equivariant:trans}
  The timed transition system $\semd A$ is invariant.
\end{restatable}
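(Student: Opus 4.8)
The plan is to unfold the definition of invariance for the structure $\semd A = \tuple{\Conf A, \goesto{}, F}$. Since a timed automorphism acts componentwise on a transition system, it suffices to prove that each of the three components---the configuration set $\Conf A$, the transition relation $\goesto{}$, and the set $F$ of final configurations---is closed under every $\pi \in \Pi$. Because the inverse $\pi^{-1}$ of a timed automorphism is again a timed automorphism (a monotonic bijection preserving integer differences), closure under all of $\Pi$ upgrades each inclusion $\pi(X) \subseteq X$ to the equality $\pi(X) = X$ required for invariance, exactly as in \Cref{fact:inv:clos}; so I would prove the three inclusions and invoke $\pi^{-1}$ for the converses. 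The configuration set is handled at once: if $\tuple{p, \mu, \now}$ satisfies $\mu(\x) \leq \now$ for all $\x$, then monotonicity of $\pi$ gives $\pi(\mu(\x)) \leq \pi(\now)$, so $\tuple{p, \pi\circ\mu, \pi(\now)}$ is again a configuration (its timestamps being nonnegative by the standing assumption of the preceding remark); and invariance of $F$ is immediate since $\pi$ leaves the control location untouched.

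The technical heart is a single claim about atomic constraints, which I would isolate first: for all $a, b \in \R$, every integer $z \in \Z$, and every comparison $\sim\,\in \set{=, <, \leq, >, \geq}$,
\[ a - b \sim z \iff \pi(a) - \pi(b) \sim z. \]
This is where the defining properties of timed automorphisms are used in an essential way. Indeed, $a - b \sim z$ rewrites as $a \sim b + z$; since $\pi$ preserves integer differences we have $\pi(b+z) = \pi(b) + z$, and since $\pi$ is a monotonic bijection it preserves each of the five comparisons, giving $\pi(a) \sim \pi(b) + z$, i.e.\ $\pi(a) - \pi(b) \sim z$. To lift this to clock constraints, note that under the clock valuation $\assigntoval \mu t$ an atomic constraint $\x_i \sim z$ holds iff $t - \mu(\x_i) \sim z$, and $\x_i - \x_j \sim z$ holds iff $\mu(\x_j) - \mu(\x_i) \sim z$; applying the claim to the pairs $(t, \mu(\x_i))$ and $(\mu(\x_j), \mu(\x_i))$ respectively, and closing under the Boolean connectives of $\varphi$, yields $\assigntoval \mu t \models \varphi \iff \assigntoval{(\pi\circ\mu)}{\pi(t)} \models \varphi$.

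With the claim in hand, closure of $\goesto{}$ is routine. Given a transition $\tuple{p, \mu, \now} \goesto{\sigma, t} \tuple{q, \nu, t}$ arising from a rule $\transition{p}{\sigma}{\varphi}{\Y}{q}$, I would check that the same rule witnesses $\pi(\tuple{p, \mu, \now}) \goesto{\sigma, \pi(t)} \pi(\tuple{q, \nu, t})$: the guard $t \geq \now$ is preserved by monotonicity of $\pi$; the constraint $\assigntoval \mu t \models \varphi$ is preserved by the claim above; and the reset $\nu = \extend \Y t \mu$ commutes with $\pi$, since $(\pi\circ\nu)(\x)$ equals $\pi(t)$ on $\Y$ and $\pi(\mu(\x))$ off $\Y$, that is, $\pi\circ\nu = \extend \Y {\pi(t)} {\pi\circ\mu}$.

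The only genuinely delicate point---and the one I would flag as the main obstacle---is the integrality of the constant $z$ in the claim: the equivalence $a - b \sim z \iff \pi(a) - \pi(b) \sim z$ fails for non-integer shifts, and it is exactly the hypothesis $\pi(t+1) = \pi(t) + 1$ (i.e.\ that automorphisms of timed atoms preserve integer differences, not merely order) that makes it go through. Everything else reduces to monotonicity of $\pi$ and a componentwise bookkeeping of its action, so I expect the write-up to be short once this claim is stated.
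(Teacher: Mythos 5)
Your proposal is correct and takes essentially the same route as the paper's proof: apply $\pi$ to a transition and observe that, because clock constraints only compare differences of timestamps to \emph{integer} constants, a timed automorphism (order-preserving and preserving integer differences) leaves their satisfaction unchanged, so the very same transition rule fires between the $\pi$-images. The paper's write-up is simply a condensed version of this argument, leaving implicit the bookkeeping you spell out (the reset commuting with $\pi$, the guard $t \geq \now$, diagonal constraints, and invariance of configurations and final states via $\pi^{-1}$).
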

\begin{proof}
  Suppose $c = (p, \mu, \now) \goesto {\sigma,t} (p', \mu', t) = c'$ due to some transition rule of $A$
  whose clock constraint $\varphi$ compares values of clocks
  $\x$, i.e., the differences $t - \mu(\x)$, to integers.
  Since a timed automorphism $\pi$ preserves integer distances, the same clock constraint is satisfied
  in $\pi(c) = (p, \pi{\circ}{\mu}, \pi(\now))$, and therefore the same transition rule is applicable yielding the transition
  $(p, \pi{\circ}{\mu}, \pi(\now)) \goesto {\sigma,\pi(t)} (p, \pi{\circ}{\mu'}, \pi(t)) = \pi(c')$.
\end{proof}

\noindent
By unrolling the definition of invariance in the previous fact,
we obtain that the set of configurations is invariant,
the set of transitions ${\goesto{}}$ is invariant,
and that the set of final configurations $F$ is invariant.
\begin{restatable}[Invariance of the language semantics]{fact}{factEquivLang}%
    \label{fact:equivariant:lang}
    %
    The function $c \mapsto \langtsa A c$ from $\semd A$ to languages is invariant,
    i.e., for all timed automorphisms $\pi$,
    $\langtsa A {\pi(c)} = \pi(\langtsa A c)$.
\end{restatable}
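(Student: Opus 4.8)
The plan is to reduce everything to the transition-level invariance already established in Fact~\ref{fact:equivariant:trans}, by applying a timed automorphism to an entire accepting run. Since $\pi$ is a bijection and the two directions of the claimed equality are symmetric under replacing $\pi$ by $\pi^{-1}$ and $c$ by $\pi(c)$, it suffices to prove a single inclusion, say $\pi(\langtsa A c) \subseteq \langtsa A {\pi(c)}$, and then instantiate it at $\pi^{-1}$ and $\pi(c)$ to recover the reverse inclusion.

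First I would take an arbitrary $v \in \langtsa A c$, witnessed by an accepting run
\[
    \rho \ = \ c \goesto {\sigma_1, t_1} c_1 \goesto {\sigma_2, t_2} \cdots \goesto {\sigma_n, t_n} c_n
\]
in $\semd A$ with $c_n \in F$. Applying $\pi$ pointwise to $\rho$, Fact~\ref{fact:equivariant:trans} turns each transition $c_{i-1} \goesto {\sigma_i, t_i} c_i$ into a valid transition $\pi(c_{i-1}) \goesto {\sigma_i, \pi(t_i)} \pi(c_i)$, so that
\[
    \pi(\rho) \ = \ \pi(c) \goesto {\sigma_1, \pi(t_1)} \pi(c_1) \goesto {\sigma_2, \pi(t_2)} \cdots \goesto {\sigma_n, \pi(t_n)} \pi(c_n)
\]
is again a run in $\semd A$, now over the timed word $\pi(v) = (\sigma_1, \pi(t_1)) \cdots (\sigma_n, \pi(t_n))$ and starting in $\pi(c)$. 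Monotonicity of $\pi(v)$ is immediate from the monotonicity of $\pi$, and nonnegativeness of all the $\pi(t_i)$ is guaranteed by the standing convention of the preceding Remark, under which we only consider automorphisms well-defined on all the objects involved. Because the set of final configurations $F$ is invariant (the unrolling of Fact~\ref{fact:equivariant:trans}), $c_n \in F$ gives $\pi(c_n) \in F$, so $\pi(\rho)$ is accepting and hence $\pi(v) \in \langtsa A {\pi(c)}$. This establishes the inclusion, and the symmetric argument finishes the proof.

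I do not expect a genuine obstacle here: the whole content is the observation that accepting runs are stable under the action of $\pi$, which is exactly what Fact~\ref{fact:equivariant:trans} and the invariance of $F$ deliver. The only point requiring a little care is purely bookkeeping, namely checking that $\pi(v)$ remains a legitimate timed word (monotonic and nonnegative); both conditions are handled, respectively, by the monotonicity of timed automorphisms and by the nonnegativeness convention of the Remark, so no additional hypotheses are needed.
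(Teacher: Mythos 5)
Your proposal is correct and follows essentially the same route as the paper: apply $\pi$ pointwise to an accepting run to obtain an accepting run over $\pi(v)$ starting in $\pi(c)$, conclude the inclusion $\pi(\langtsa A c) \subseteq \langtsa A {\pi(c)}$, and then recover the converse inclusion by instantiating the same argument at $\pi^{-1}$ and $\pi(c)$. The only (harmless) difference is that you invoke Fact~\ref{fact:equivariant:trans} transition-by-transition, whereas the paper re-derives that step inline by observing that the clock constraints along the run compare differences of timestamps to integers, which $\pi$ preserves.
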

\begin{proof}
Consider a timed automorphism $\pi$ and an accepting run of $A$ over a timed word
$w = (\sigma_1, t_1)  \dots (\sigma_n, t_n) \in \timedwordsafter \Sigma \now$ starting in $c = (p, \mu, \now)$:
\begin{align*}
           (p, \mu, \now) 
     \goesto {\sigma_1, t_1} \,
     \cdots
                    \goesto{\sigma_n, t_n} \,
                        (q, \nu, t_n),
\end{align*}
After $\sigma_i$ is read, the value of each clock is either the difference $t_i - \mu(\x)$ for some $1\leq i \leq n$ and clock $\x\in\X$,
or the difference $t_i - t_j$ for some $1 \leq j \leq i$.
Likewise is the difference of values of any two clocks.
Thus clock constraints of transition rules used in the run compare these differences to integers.
As timed automorphism $\pi$ preserves integer differences, by executing the same sequence of transition rules we obtain
the run over $\pi(w)$ starting in $\pi(c) = (p, \pi{\circ}\mu, \pi(\now))$: 
\[
           (p, \pi{\circ}\mu, \pi(\now)) 
   \goesto {\sigma_1, \pi(t_1)} \,
   \cdots
                    \goesto{\sigma_n, \pi(t_n)} \,
                        (q, \pi{\circ}\nu, \pi(t_n)),
\]
also accepting as it ends in the same location $q$.
As $w\in\timedwords \Sigma$ can be chosen arbitrarily, we have thus proved one of inclusions, namely
\[
\pi(\langtsa A {p, \mu, \now}) \ \subseteq \ \langtsa A {p, \pi{\circ}\mu, \pi(\now)}.
\]
The other inclusion follows from the latter one applied to $\pi^{-1}$ and $\langtsa A {p, \pi{\circ}\mu, \pi(\now)}$:
\[
\pi^{-1}(\langtsa A {p, \pi{\circ}\mu, \pi(\now)}) \ \subseteq  \ \langtsa A {p, \pi^{-1}{\circ}\pi{\circ}\mu, \pi^{-1}(\pi(\now))}
\ = \ \langtsa A {p, \mu, \now}.
\]
The two implications prove the equality.
\end{proof}

%
\begin{restatable}[Invariance of the language of a configuration]{fact}{invarianceLang}%
  \label{fact:invariantbase}%
  \label{fact:invariantalways}
  The language $\langtsa A {p, \mu, \now}$ is
  $(\clockval c \cup\set{\now})$-invariant.  
  Moreover, if $A$ is always resetting,
  then $\langtsa A {p, \mu, \now}$ is
  $\clockval c$-invariant.
\end{restatable}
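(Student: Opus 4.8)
The plan is to obtain this fact as an immediate corollary of the invariance of the language semantics (Fact~\ref{fact:equivariant:lang}), which asserts that $\pi(\langtsa A c) = \langtsa A {\pi(c)}$ for every timed automorphism $\pi$. By the definition of $S$-invariance, to prove that $\langtsa A {p, \mu, \now}$ is $S$-invariant it suffices to check $\pi(\langtsa A {p, \mu, \now}) = \langtsa A {p, \mu, \now}$ for every $\pi \in \Pi_S$. Combining this with Fact~\ref{fact:equivariant:lang}, the whole argument reduces to the purely syntactic task of showing that such a $\pi$ fixes the configuration $c = (p, \mu, \now)$ itself: once $\pi(c) = c$, we get $\pi(\langtsa A c) = \langtsa A {\pi(c)} = \langtsa A c$.

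For the first, unconditional, claim I would take $S = \mu(\X) \cup \set{\now}$ and an arbitrary $\pi \in \Pi_S$. Since $\pi$ fixes every element of $\mu(\X)$, we have $\pi(\mu(\x)) = \mu(\x)$ for each clock $\x$, i.e.\ $\pi \circ \mu = \mu$; and since $\now \in S$ we also have $\pi(\now) = \now$. Recalling that $\pi$ acts on configurations by $\pi(p, \mu, \now) = (p, \pi \circ \mu, \pi(\now))$, this yields $\pi(c) = c$, and the claim follows. For the \emph{moreover} part I would take $S = \mu(\X)$. A $\mu(\X)$-timed automorphism $\pi$ still satisfies $\pi \circ \mu = \mu$, so the only thing that could break $\pi(c) = c$ is the action of $\pi$ on $\now$. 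Here I would invoke the already recorded observation that in an always resetting automaton every reachable configuration $(p, \mu, \now)$ satisfies $\now \in \mu(\X)$ (the initial configuration has $\now = 0 = \mu(\x)$ for all $\x$, and every transition resets some clock to the current timestamp). Hence $\now \in \mu(\X)$ forces $\pi(\now) = \now$, so again $\pi(c) = c$ and we conclude $\mu(\X)$-invariance.

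The one genuinely delicate point, and the reason the two parts of the statement differ, is precisely this reliance on $\now \in \mu(\X)$ for the stronger claim. Without it the moreover part would fail: $\langtsa A {p, \mu, \now}$ is contained in $\timedwordsafter \Sigma \now$, so a $\mu(\X)$-automorphism that moves $\now$ would shift the lower bound on the first admissible timestamp and thereby change the language. This is exactly why the stronger $\mu(\X)$-invariance is available only in the always resetting case (applied to reachable configurations, where $\now \in \mu(\X)$ holds), whereas the weaker $(\mu(\X) \cup \set{\now})$-invariance holds for every configuration, since there $\now$ is explicitly put into the fixed set.
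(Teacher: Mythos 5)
Your proof is correct and follows essentially the same route as the paper's: both deduce the fact directly from Fact~\ref{fact:equivariant:lang} by observing that any $S$-timed automorphism with $S = \mu(\X) \cup \set{\now}$ (resp.\ $S = \mu(\X)$ in the always resetting case, where $\now \in \mu(\X)$) fixes the configuration $c$ itself, so $\pi(\langtsa A c) = \langtsa A {\pi(c)} = \langtsa A c$. Your extra remark that the \emph{moreover} part implicitly concerns reachable configurations (for which $\now \in \mu(\X)$ is guaranteed in an always resetting automaton) is a subtlety the paper's one-line justification glosses over, and it matches how the fact is actually applied later.
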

\begin{proof}
  This is a direct consequence of the invariance of semantics.
  Indeed, for every $(\clockval c\cup\set{\now})$-timed automorphism $\pi$ the configurations $c = (p, \mu, \now)$ and
  $\pi(c) = (p, \pi{\circ}\mu, \pi(\now))$ are equal,
  hence their languages $\langtsa A {c}$ and
  $\langtsa A {\pi(c)}$, the latter equal to $\pi(\langtsa A c)$ by Fact~\ref{fact:equivariant:lang}, are equal too.
  Thus, $L = \pi(L)$.
  Finally, if $A$ is always resetting, then $\now \in\clockval c$,
  from which the second claim follows.
\end{proof}

%

Since timed automorphisms preserve integer differences,
only the fractional parts of elements of $S \subseteq \Rnonneg$ matter
for $S$-invariance, and hence it makes sense to restrict to subsets of the half-open interval $\ropen{0, 1}$.
Let $\fract{S} = \setof{\fract{x}}{x\in S} \subseteq \ropen{0,1}$ stand for the set of fractional parts of elements of $S$.
The following lemma shows that,
modulo the irrelevant integer parts,
there is always the least set $S$ witnessing $S$-invariance (c.f.~the least support property of, e.g., equality atoms).
\begin{restatable}{lem}{lemLeastSup}%
  \label{lem:leastsup}
  For finite subsets $S, S' \subseteq \Rnonneg$,
  if a timed language $L$ is both $S$-invariant and $S'$-invariant, 
  then it is also $S''$-invariant 
  as long as $\fract{S''} = \fract{S} \cap \fract{S'}$.
\end{restatable}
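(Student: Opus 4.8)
The plan is to reduce the statement to a purely group-theoretic generation property of timed automorphisms and then chain the two hypotheses along a factorisation. First I would record that, since every timed automorphism commutes with $+1$, the condition $\pi(t)=t$ depends only on $\fract t$: if $\pi(t)=t$ then $\pi(t+n)=t+n$ for all $n\in\Z$, so $\Pi_S$ consists exactly of the timed automorphisms fixing $\fract S+\Z$ pointwise. Hence $\Pi_S$, $\Pi_{S'}$ and $\Pi_{S''}$ depend only on $\fract S$, $\fract{S'}$ and $\fract{S''}=\fract S\cap\fract{S'}$, and moreover $\Pi_S\subseteq\Pi_{S''}$ and $\Pi_{S'}\subseteq\Pi_{S''}$, because fixing more points is a stronger requirement. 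The whole lemma then follows from the single claim that $\Pi_{S''}$ is \emph{generated} by $\Pi_S\cup\Pi_{S'}$: if $\pi=\rho_N\circ\cdots\circ\rho_1$ with each $\rho_i\in\Pi_S\cup\Pi_{S'}$, then applying $S$-invariance or $S'$-invariance of $L$ one factor at a time gives $\pi(L)=\rho_N(\cdots\rho_1(L)\cdots)=L$, which is precisely $S''$-invariance.

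It remains to prove the generation claim $\Pi_{S''}=\langle\Pi_S\cup\Pi_{S'}\rangle$, and here I would exploit periodicity to localise. The points of $\fract{S''}+\Z$ cut $\R$ into intervals, and every $\pi\in\Pi_{S''}$ maps each such interval onto itself by an order-preserving bijection fixing its endpoints; since $\pi$ commutes with $+1$, it suffices to factor its restriction to one fundamental family of intervals and then extend the factors periodically. Inside one such interval $[a,b]$ the interior points of $\fract S+\Z$ (``red'') and of $\fract{S'}+\Z$ (``blue'') are disjoint, because every common mark lies in $\fract S\cap\fract{S'}=\fract{S''}$ and is therefore one of the subdivision endpoints rather than an interior point. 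Thus the claim reduces to the following statement about the group $G_0$ of order-preserving self-bijections of $[a,b]$ fixing $a,b$: if $R,B\subseteq(a,b)$ are disjoint finite sets, then $G_0$ is generated by the subgroup $H_R$ fixing $R$ pointwise together with the subgroup $H_B$ fixing $B$ pointwise (the restrictions of $\Pi_S$ and $\Pi_{S'}$ to the interval).

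I would prove this last statement by induction on $\card R+\card B$. The base case $R=\emptyset$ or $B=\emptyset$ is immediate, since then $H_B=G_0$ or $H_R=G_0$. For the inductive step, using that a map fixing a finite set pointwise is a product of commuting maps each supported in a single gap of that set, I would reduce to relocating one marked point of a single colour while the marks of the other colour stay fixed. The essential mechanism — and the step I expect to be the main obstacle — is \emph{crossing}: to move a point past a red mark one cannot use $H_R$, which preserves every red gap, but one can first apply a map in $H_R$ to nudge the point off the set $B$ that $H_B$ keeps fixed, and then apply a map in $H_B$, which ignores red marks and may therefore carry the point across them. Iterating this routing realises an arbitrary element of $G_0$ as a finite product of generators, and peeling off one marked point at a time feeds the induction. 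The most delicate case is the anchorless one $\fract S\cap\fract{S'}=\emptyset$, where $[a,b]$ must be replaced by the circle $\R/\Z$ carrying the two disjoint mark sets and one must additionally generate fixed-point-free maps such as rotations from point-fixing ones; I would dispatch it by the same routing argument on the circle, which has no interior endpoints to obstruct the crossings.
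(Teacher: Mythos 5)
Your proposal is correct, and its engine coincides with the paper's: the reduction to the generation statement $\Pi_{S''}=\langle\Pi_S\cup\Pi_{S'}\rangle$ is exactly Claim~\ref{claim:toprove}, and your ``crossing'' mechanism (nudge with one colour, then let the other colour carry the point across marks of the first) is precisely the zigzag in the proof of Claim~\ref{claim:proved}. Where you genuinely diverge is the organisation of the induction. The paper never localises: it proves one-point \emph{image matching} only for pairs with singleton differences $F\setminus F'=\set{t}$, $F'\setminus F=\set{t'}$, and then runs an abstract induction on $\card{F\cup F'}$ over an arbitrary subgroup $G\supseteq\Pi_F\cup\Pi_{F'}$ (Claim~\ref{claim:toprovefinal}), using the trick that once $\tau\in G$ matches $\pi(t)$, the difference $\pi^{-1}\circ\tau$ fixes one more point and hence lies in $G$. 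You instead cut $\R$ at the common marks via periodicity, reducing to a self-contained statement about endpoint-fixing order-preserving bijections of an interval with two disjoint finite mark sets, proved by induction on the number of marks via match-then-split. Your route is more geometric and makes the finite combinatorics explicit; the paper's subgroup formulation buys uniformity: no separate circle case when $\fract{S}\cap\fract{S'}=\emptyset$, and no recombination bookkeeping. Two details you gloss over are real but fixable: (i) factorisations obtained independently on different intervals must be interleaved (padded with identities) so that each global factor is uniformly ``red'' or uniformly ``blue'', since otherwise the glued map lies in neither $\Pi_S$ nor $\Pi_{S'}$; (ii) on the circle, routing alone never yields a fixed-point-free map --- rotations arise only through the match-then-peel step ($h^{-1}\circ g$ fixes the matched point, cut there, invoke the interval case), which is the paper's $\pi^{-1}\circ\tau$ trick and should be stated explicitly rather than left implicit in ``peeling''. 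Both you and the paper share the tacit convention that automorphisms are applied only where intermediate timed words remain nonnegative, so that is not a point of difference.
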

\begin{proof}
Let $L$ be an $S$- and $S'$-invariant timed language, and let 
%
$F = \fract{S}$ and $F' = \fract{S'}$.
We prove that $L$ is an $(F \cap F')$-invariant subset of $\timedwords \Sigma$.
Consider two timed words $w, w' \in \timedwords \Sigma$ such that $w' = \pi(w)$ for some $(F\cap F')$-timed
automorphism $\pi$.
We need to show
\begin{align*}
  w \in L \quad \text{iff} \quad w' \in L,
\end{align*}
which follows immediately by the following claim:
\begin{clm}\label{claim:toprove}
Every $(F\cap F')$-timed automorphism $\pi$ decomposes into $\pi = \pi_n \circ \dots \circ \pi_1$, where
each $\pi_i$ is either an $F$- or an $F'$-timed automorphism.
\end{clm}
%
%
\noindent
Composition of timed automorphisms makes $\Pi$ into a group.
In short terms, Claim~\ref{claim:toprove} states that $\Pi_{F\cap F'} \subseteq \Pi_F + \Pi_{F'}$, where
$\Pi_F + \Pi_{F'}$ is the smallest subgroup of $\Pi$ including both $\Pi_F$ and $\Pi_{F'}$.
We state below in Claim~\ref{claim:toprovefinal} a fact equivalent to Claim~\ref{claim:toprove},
and which is based on the proof of Theorem 9.3 in~\cite{LMCS14}.
An important ingredient of the proof of Claim~\ref{claim:toprovefinal} is the following fact where,
instead of dealing with decomposition of $\pi$,
we analyse the individual orbit of $F \setminus F'$, in the special case when
both $F \setminus F'$ and $F' \setminus F$ are singleton sets:
%
\begin{clm}\label{claim:proved}
  Let $F, F' \subseteq \ropen{0, 1}$ be finite sets \st $F \setminus F' = \set{t}$ and $F' \setminus F = \set{t'}$.
  For every $(F\cap F')$-timed automorphism $\pi$ we have $\pi(t) = (\pi_n \circ \dots \circ \pi_1)(t)$,
  for some $\pi_1, \dots, \pi_n$, each of which is either an $F$- or an $F'$-timed automorphism
  (i.e., belongs to $\Pi_F + \Pi_{F'}$).
\end{clm}
%
\begin{proof}[Proof of Claim~\ref{claim:proved}]
  We split the proof into two cases.

  \para{Case $F\cap F' \neq \emptyset$}
  Let $l$ be the greatest element of $F\cap F'$ smaller than $t$, and let $h$ be the smallest element of $F\cap F'$
  greater than $t$, assuming they both exist.
  (If $l$ does not exist put $l := h'-1$, where $h'$ is the greatest element of $F\cap F'$;
  symmetrically, if $h$ does not exists put $h := l'+1$, where $l'$ is the smallest element of $F \cap F'$.)
  Then the $(F\cap F')$-orbit $\setof{\pi(t)}{\pi \text{ is an } (F\cap F')\text{-timed automorphism}}$
  is the open interval  $(l,h)$.
  Take any $(F\cap F')$-timed automorphism $\pi$; without loss of generality assume that $u=\pi(t)>t$.
  The only interesting case is $t < t' \leq u$.  
  In this case, we show $\pi(t) = \pi_2(\pi_1(t))$,where
  \begin{itemize}
  \item $\pi_1$ is some $F'$-timed automorphism that acts as the identity on $[t',l+1]$ and \st $t < \pi_1(t) < t'$,
  \item $\pi_2$ is some $F$-timed automorphism that acts as the identity on $[h-1, t]$ and \st $\pi_2(\pi_1(t))=u$.
  \end{itemize}

  \para{Case $F\cap F' = \emptyset$}
  Thus $F = \set{t}$ and $F' = \set{t'}$.
  Take any timed automorphism $\pi$; without loss of generality assume that $\pi(t)>t$.
  Let $z\in\Z$ be the unique integer \st $t' + z -1 < t < t' + z$.
  %
  Let $\pi_1$ be an arbitrary $\set{t'}$-timed automorphism that maps $t$ to some $t_1 \in (t,t'+z)$.
  Note that $t_1$ may be any value in $(t, t'+z)$.
  Similarly, let $\pi_2$ be an arbitrary $\set{t}$-timed automorphism that maps $t_1$ to some $t_2 \in (t', t+1)$.
  Again, $t_2$ may be any value in $(t', t+1)$.
  By repeating this process sufficiently many times one finally reaches $\pi(t)$ as required.
\end{proof}

\begin{clm}\label{claim:toprovefinal}
Let $F, F' \subseteq \ropen{0, 1}$ be finite sets and let $G\subseteq \Pi$ be a subgroup of $\Pi$.
If $\Pi_F\subseteq G$ and $\Pi_{F'} \subseteq G$ then $\Pi_{F\cap F'}\subseteq G$.
\end{clm}
\begin{proof}[Proof of Claim~\ref{claim:toprovefinal}]
  The proof is by induction on the size of the (finite) set $F\cup F'$.
  If $F\subseteq F'$ or $F'\subseteq F$, then the conclusion follows trivially.
  Otherwise, consider any $t \in F \setminus F'$ and $t' \in F' \setminus F$; obviously $t\neq t'$.
  Define $E = (F \cup F') \setminus\set{t,t'}$.
  We have $F \subseteq E\cup\set{t}$ and $F' \subseteq E\cup\set{t'}$ hence
  \[
  \Pi_{E\cup\set{t}} \subseteq \Pi_F \subseteq G \qquad
  \Pi_{E\cup\set{t'}} \subseteq \Pi_{F'} \subseteq G.
  \]
  We shall now prove that $\Pi_E \subseteq G$.
  To this end, consider any $\pi \in \Pi_E$. By Claim~\ref{claim:proved},
  there exists a permutation
  \[
  \tau = \pi_n \circ \dots \circ \pi_1 \in \Pi_{E\cup\set{t}} + \Pi_{E\cup\set{t'}}
  \]
  such that $\pi(t) = \tau(t)$. In other words,
  each of $\pi_1, \dots, \pi_n$ is either a $(E\cup\set{t})$- or a $(E\cup\set{t'})$-timed automorphism.
  Since $\Pi_{E\cup\set{t}} \subseteq G$ and  $\Pi_{E\cup\set{t'}} \subseteq G$, all $\pi_i \in G$, hence also $\tau \in G$.

  On the other hand, clearly $\Pi_{E\cup\set{t}} \subseteq \Pi_E$ and $\Pi_{E\cup\set{t'}} \subseteq \Pi_E$,
  so all $\pi_i \in \Pi_E$, therefore $\tau \in \Pi_E$.
  As a result, $\pi^{-1} \circ \tau \in \Pi_E$.
  Since $(\pi^{-1} \circ \tau)(t) = t$, we obtain $\pi^{-1} \circ \tau \in \Pi_{E\cup\set{t}}$, therefore
  $\pi^{-1} \circ \tau \in G$.
  Together with $\tau \in G$ proved above, this gives $\pi \in G$.
  Thus we have proved $\Pi_E \subseteq G$.

  It is now easy to show that $\Pi_{F\cap F'} \subseteq G$.
  Indeed, $|F \cup E| = |F \cup F'|-1$, so by the
  inductive assumption for $F$ and $E$, we have $\Pi_{F\setminus\set{t}} \subseteq G$
  (note that $F\setminus\set{t} = F\cap E$).
  Further, $|(F\setminus\set{t}) \cup F'| = |F \cup F'|-1$, so $\Pi_{F\cap F'} \subseteq G$
  (note that $(F\setminus\set{t})\cap F' = F\cap F'$), as required.
\end{proof}
Claim~\ref{claim:toprovefinal} immediately implies Claim~\ref{claim:toprove} by taking $G = \Pi_F + \Pi_{F'}$.
Lemma~\ref{lem:leastsup} is thus proved.
 \end{proof}

As a direct corollary of Lemma~\ref{lem:leastsup}, we have:
\begin{cor}%
  \label{cor:leastsup}
  For every timed language $L$, the set
  $\setof{\fract S}{S \subseteq_\text{fin} \Rnonneg, L \text{ is $S$-invariant}}$, if nonempty,
  has a least (inclusion-wise) element.
\end{cor}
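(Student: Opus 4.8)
The plan is to exhibit the least element explicitly as the intersection of the whole family and to verify, using Lemma~\ref{lem:leastsup}, that this intersection again belongs to the family. Write $\mathcal F = \setof{\fract S}{S \subseteq_\text{fin} \Rnonneg, \ L \text{ is } S\text{-invariant}}$ for the family under consideration, assume it is nonempty, and fix some $F_0 \in \mathcal F$. Every member of $\mathcal F$ is a \emph{finite} subset of $\ropen{0,1}$, since $\fract S \subseteq \ropen{0,1}$ and $S$ is finite; this finiteness is what will make the argument work.

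First I would observe that $\mathcal F$ is closed under binary intersection. Indeed, if $F = \fract S$ and $F' = \fract{S'}$ lie in $\mathcal F$, witnessed by $S$- and $S'$-invariance of $L$, then setting $S'' := F \cap F' \subseteq \ropen{0,1} \subseteq \Rnonneg$, Lemma~\ref{lem:leastsup} yields that $L$ is $S''$-invariant; since $\fract{S''} = S'' = F \cap F'$, this shows $F \cap F' \in \mathcal F$. A trivial induction then gives closure under any nonempty finite intersection.

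The candidate least element is $F^\ast := \bigcap_{F \in \mathcal F} F$. The key step will be to show that $F^\ast$ is in fact a \emph{finite} intersection of members of $\mathcal F$, so that the closure property above applies even though $\mathcal F$ may be infinite. Since $F^\ast \subseteq F_0$ and $F_0$ is finite, the set $F_0 \setminus F^\ast$ is finite; for each $a \in F_0 \setminus F^\ast$ I would pick some $F_a \in \mathcal F$ with $a \notin F_a$, which exists by definition of the intersection. Then one checks the identity
\[
F^\ast = F_0 \cap \bigcap_{a \in F_0 \setminus F^\ast} F_a,
\]
where ``$\subseteq$'' is immediate and ``$\supseteq$'' holds because any $x$ on the right lies in $F_0$ and cannot be one of the removed elements $a$ (otherwise $x = a \notin F_a$ would contradict $x \in F_a$). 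The right-hand side is a finite intersection of members of $\mathcal F$, hence $F^\ast \in \mathcal F$, and by construction $F^\ast \subseteq F$ for every $F \in \mathcal F$, making it the least element.

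The only real subtlety — and the step I would be most careful about — is this reduction of the possibly infinite intersection $\bigcap \mathcal F$ to a finite one: closure under binary intersection alone does not give closure under arbitrary intersection, and it is precisely the finiteness of the members of $\mathcal F$ (ensuring $F_0 \setminus F^\ast$ is finite) that bridges the gap. Everything else is a direct application of Lemma~\ref{lem:leastsup}.
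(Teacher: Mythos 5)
Your proof is correct and follows exactly the route the paper intends: the paper states this corollary as a direct consequence of Lemma~\ref{lem:leastsup} without spelling out details, and your argument (closure of the family under binary intersection via the lemma, then reduction of the full intersection to a finite one using finiteness of a fixed member $F_0$) is the natural and careful way to fill in that gap. In particular, your handling of the subtlety that binary-intersection closure alone does not cover infinite intersections is precisely what makes the ``direct corollary'' rigorous.
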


Finally, recall $S$-orbits and orbits of elements, as defined abstractly in Section~\ref{sec:atoms}.
Every bounded region corresponds to an orbit of configurations.
Hence,
in case of greedily resetting \NTA where all reachable regions are bounded,
orbits of reachable configurations are in bijective correspondence with reachable regions:
\begin{fact}\label{fact:reg}
  %
  Assume $A$ is a greedily resetting \kmNTA k m.
  Two reachable configurations $(p, \mu, \now)$ and $(p, \mu', \now')$ of $A$ with the same control location $p$
  have the same orbit
  if, and only if,
  the corresponding clock valuations $\assigntoval \mu {\now}$ and
  $\assigntoval {\mu'} {\now'}$
  belong to the same $k,m$-region.
\end{fact}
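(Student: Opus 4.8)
The plan is to translate orbit\nobreakdash-equality of configurations into the existence of a timed automorphism matching two finite tuples of reals, and then to connect the latter with region equivalence of the associated clock valuations. Write $v = \assigntoval\mu\now$ and $v' = \assigntoval{\mu'}{\now'}$, so that $v(\x_i) = \now - \mu(\x_i)$ and $v'(\x_i) = \now' - \mu'(\x_i)$. Since automorphisms act on configurations by fixing the (common) control location $p$, the two configurations share an orbit if, and only if, there is a timed automorphism $\pi$ with $\pi(\now) = \now'$ and $\pi(\mu(\x_i)) = \mu'(\x_i)$ for every clock $\x_i$, i.e.\ $\pi$ maps the tuple $\bar t = (\now, \mu(\x_1), \dots, \mu(\x_k))$ to $\bar t' = (\now', \mu'(\x_1), \dots, \mu'(\x_k))$. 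The key observation is that the pairwise differences of $\bar t$ are exactly the quantities $\now - \mu(\x_i) = v(\x_i)$ and $\mu(\x_i) - \mu(\x_j) = v(\x_j) - v(\x_i)$ (together with the trivial $\now - \now = 0$), which are precisely the expressions compared against integers by the clock constraints $\x_i \sim z$ and $\x_i - \x_j \sim z$ that define regions. I would also use the standard characterisation that $v$ and $v'$ lie in the same $k,m$-region if, and only if, they satisfy the same atomic clock constraints with $|z| \le m$ (this follows from minimality of regions).

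For the forward implication I would argue that a timed automorphism $\pi$ with $\pi(\bar t) = \bar t'$ preserves every atomic constraint. Indeed, since $\pi$ is monotonic, injective, and satisfies $\pi(a + z) = \pi(a) + z$ for $z \in \Z$, for any reals $a, b$ and integer $z$ one has $a - b \sim z \iff \pi(a) - \pi(b) \sim z$, obtained by applying $\pi$ to $a \sim b + z$ (using monotonicity for the strict/non-strict comparisons and injectivity for equality). Taking $(a,b) = (\now, \mu(\x_i))$ and $(a,b) = (\mu(\x_i), \mu(\x_j))$ shows that $v$ and $v'$ satisfy the same constraints $\x_i \sim z$ and $\x_i - \x_j \sim z$ for all $z$, hence in particular for $|z| \le m$; therefore they lie in the same $k,m$-region. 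Note this direction does not yet use the greedily resetting hypothesis, except to ensure the common region is indeed a $k,m$-region (boundedness).

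For the converse, which I expect to be the main obstacle, I would use $m$-bounded span to bridge from $k,m$-regions to full orbit equivalence. Because $A$ is greedily resetting and the configurations are reachable, the span property $\mu(\X) \subseteq \lopen{\now - m, \now}$ gives $v(\x_i) \in \ropen{0, m}$, so all differences of $\bar t$ lie in the open interval $(-m, m)$, and likewise for $\bar t'$. Consequently the truth value of any constraint $\x_i \sim z$ or $\x_i - \x_j \sim z$ with $|z| > m$ is forced by sign alone and agrees for $v$ and $v'$ automatically, while those with $|z| \le m$ agree since $v, v'$ share a $k,m$-region; hence the difference patterns of $\bar t$ and $\bar t'$ agree for \emph{every} integer $z$. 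It then remains to construct $\pi$ from this matching of patterns: the order-preserving bijection $\bar t \mapsto \bar t'$ extends to $\bar t + \Z$ by setting $\pi(t_i + n) = t_i' + n$, which is well-defined and monotone precisely because the integer-difference and order relations among the $t_i$ coincide with those among the $t_i'$, and then interpolates monotonically and $+1$-equivariantly across the remaining gaps (choosing the interpolation so that nonnegativeness of the involved timestamps is preserved, as we always tacitly require of timed automorphisms). The resulting $\pi$ is a timed automorphism sending $\bar t$ to $\bar t'$, witnessing that the two configurations share an orbit. The delicate points I would treat with care are the well-definedness and monotonicity of $\pi$ on $\bar t + \Z$ and the interpolation step, which together amount to the finite-tuple extension property for timed automorphisms underlying the homogeneity of timed atoms.
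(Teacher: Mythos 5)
Your proof is correct and matches the paper's intended argument: the paper states this fact without proof, justifying it only by the preceding observation that every bounded region corresponds to an orbit of configurations and that greedy resetting bounds the span of reachable configurations, which is exactly what you establish. Your two directions (preservation of atomic constraints under timed automorphisms for the forward implication; $m$-bounded span forcing agreement of the full integer-difference pattern, followed by the finite-tuple extension property of timed atoms, for the converse) are a correct and complete fleshing-out of that one-line justification.
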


%
%
%


\section{Decidability of \texorpdfstring{\kDTA{k}}{DTA-k} and \texorpdfstring{\kmDTA k m}{DTA-k,m} membership for \texorpdfstring{\kNTA{1}}{NTA-1}}%
\label{sec:upperbound}

In this section we prove our main decidability result for timed automata, which we now recall.
%
\thmkDTAmemb*
Both results are shown using the following key characterisation of those \kNTA 1 languages which are also \kDTA k languages.
%
%
%
%
\begin{lem}\label{thm:k-DTA-char}
	Let $A$ be a \kmNTA{1} m with $n$ control locations,
	and let $k\in\N$.
	The following conditions are equivalent:
	\begin{enumerate}
		\item $\langts A = \langts B$ for some always resetting \kDTA{k} $B$.\label{p1}
		\item For every timed word $w$, there is $S\subseteq\Rnonneg$ of size at most $k$ \st
		the last timestamp of $w$ is in $S$ and 
		$w^{-1} \langts A$ is $S$-invariant.\label{p2}
		\item $\langts A = \langts B$ for some always resetting \kmDTA k m $B$
			with at most $f(k,m,n) = \reg k m \cdot 2^{n (2km+1)}$ control locations
			($\reg k m$ stands for the number of $k,m$-regions).\label{p3}
	\end{enumerate}
\end{lem}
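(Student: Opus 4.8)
The plan is to establish the cycle $\eqref{p3} \limplies \eqref{p1} \limplies \eqref{p2} \limplies \eqref{p3}$, mirroring the structure of the already-proved register analogue (Lemma~\ref{thm:k-DRA-char}), with the timed-invariance machinery of Section~\ref{sec:inv} substituting for the equality-atom support theory. The implication $\eqref{p3}\limplies\eqref{p1}$ is immediate, since an always resetting \kmDTA k m is in particular an always resetting \kDTA k. For $\eqref{p1}\limplies\eqref{p2}$, I would fix a timed word $w$ with last timestamp $t_n$ and let $c = (p, \mu, t_n)$ be the unique configuration reached by the (total, deterministic) $B$ after reading $w$; then $w^{-1} \langts A = w^{-1} \langts B = \langtsa B c$, which by Fact~\ref{fact:invariantbase} is $\clockval c$-invariant. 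Taking $S = \mu(\X)$ gives $\card S \leq k$, and since $B$ is always resetting we have $t_n = \now \in \mu(\X) = S$, so the last timestamp of $w$ lies in $S$, as required. Note that $A$ plays no role in these two implications.

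The core is $\eqref{p2}\limplies\eqref{p3}$, which I would prove by the same four-stage determinisation-and-folding pipeline used for register automata, adapted to the timed setting. First I would form the (infinitely branching) deterministic transition system $\XX$ of \emph{macro-configurations}: finite sets $X$ of configurations of $A$ reachable after a common timed word, all sharing the same ``now'' timestamp, obtained by the subset construction; one checks $\langa A X = \langa \XX X$ by induction. Next, using assumption~\eqref{p2} together with Corollary~\ref{cor:leastsup} to canonicalise fractional parts, I would attach to each $X$ a support $S$ of size at most $k$ containing the current now, yielding a system $\YY$ of \emph{pre-states} $(X,S)$ closed under closure-of-successor; bounded-size supports guarantee finitely many pre-states up to timed automorphism, and an invariance claim analogous to Claim~\ref{claim:equivYreg} follows from Fact~\ref{fact:equivariant:lang}. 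I would then refine pre-states into \emph{states} $(X,\mu)$, allocating the timestamps of $S$ (including the now) into $k$ clocks $\mu \in \Rnonneg^\X$ via a fixed deterministic reset policy that \emph{always resets the clock carrying the new now} --- thereby guaranteeing the always resetting shape --- obtaining $\ZZ$; the language equalities $\langa \XX {X_0} = \langa \YY {Y_0} = \langa \ZZ {Z_0}$ follow by induction on word length, as in Claims~\ref{claim:eqlangXYreg}--\ref{claim:eqlangYZreg}.

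Finally, the control locations of the sought $B$ will be the \emph{orbits} of states of $\ZZ$ under timed automorphisms, with transitions induced by those of $\ZZ$ and each transition constraint defining the orbit of the tuple $\tuple{\mu(\x_1),\dots,\mu(\x_k),t}$ (definable as a clock constraint). Determinism and the isomorphism between $\ZZ$ and the reachable part of $\semd B$ follow exactly as in Claims~\ref{claim:lastreg}--\ref{claim:eqlangCBreg}. The key quantitative step is bounding the number of orbits by $f(k,m,n) = \reg k m \cdot 2^{n(2km+1)}$: here Fact~\ref{fact:reg} identifies orbits of reset-point assignments with $k,m$-regions (the factor $\reg k m$), while the characteristic function recording, for each of the at most $2km+1$ orbits of a single relevant timestamp, which of the $n$ locations of $A$ populate it, contributes the factor $2^{n(2km+1)}$. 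Obtaining the constant bound $m$ (so that $B$ is \kmDTA k m and not merely \kDTA k) is where the always/greedily resetting analysis of Section~\ref{sec:timedautomata} enters: the $m$-bounded span $\mu(\X) \subseteq \lopen{\now-m, \now}$ ensures that every clock constraint arising in the construction compares only to constants of absolute value at most $m$.

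The main obstacle, beyond faithfully transporting the register bookkeeping, is the timing-specific control of supports and constants. Unlike equality atoms, supports here are determined only up to integer parts (Corollary~\ref{cor:leastsup}), so I must carry the actual now timestamp inside each state to anchor the clocks and recover the always resetting form, and must simultaneously argue that the reset policy never forces a constant exceeding $m$. Verifying that the $m$-bounded-span and greedily-resetting properties are preserved along the construction of $\ZZ$ --- and that they jointly yield both finiteness of the orbit count and the sharp constant bound --- is the most delicate part, and is precisely what the invariance facts of Section~\ref{sec:inv} and Fact~\ref{fact:reg} are designed to support.
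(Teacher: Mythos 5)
Your proposal is correct and follows essentially the same route as the paper: the cycle \eqref{p3}$\limplies$\eqref{p1}$\limplies$\eqref{p2}$\limplies$\eqref{p3}, with \eqref{p1}$\limplies$\eqref{p2} via Fact~\ref{fact:invariantalways} and \eqref{p2}$\limplies$\eqref{p3} via the same macro-configuration/pre-state/state/orbit pipeline, the same always-resetting update policy (storing the new now in a clock), and the same orbit count $\reg k m \cdot 2^{n(2km+1)}$ obtained from Claim~\ref{claim:reg} and the characteristic-function argument. The only notable presentational difference is that the paper makes the greedily-resetting normalisation of $A$ an explicit w.l.o.g.\ step at the outset (which is what yields the $m$-bounded span of all macro-configurations), whereas you fold it into the verification along the construction.
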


\noindent
As in case of register automata, this characterisation provides a bound on the number of control locations of a \kDTA k equivalent to a given \kNTA 1 (if any exists).

The proof of \Cref{thm:kDTA:memb} builds on Lemma~\ref{thm:k-DTA-char}
and on the following fact:
\begin{lem}\label{lem:sep}
	The \kDTA k and \kmDTA k m membership problems are both decidable for \DTA languages.
\end{lem}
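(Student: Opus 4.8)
The plan is to reduce the problem to a finite search. Since the input language is a \DTA language, I assume it is presented by a \DTA $A$ with $n_A$ control locations, $k_A$ clocks and maximal constant $m_A$ (this is exactly how the lemma is invoked in the proof of \Cref{thm:kDTA:memb}, where an equivalent deterministic acceptor has already been produced). The two ingredients are: (i) equivalence of two deterministic timed automata is decidable, because \DTA are effectively closed under complement and have decidable emptiness~\cite{AD94}, so $\lang A = \lang B$ is decidable whenever $B$ is a \kmDTA k m; and (ii) a \emph{computable} bound on the number of control locations of a candidate acceptor, so that only finitely many $B$ must be tested. I stress that \Cref{thm:k-DTA-char} cannot be used directly here: it characterises only \emph{always resetting} acceptors, and, as recalled in the discussion of always resetting automata, the always resetting classes lie strictly between consecutive general classes. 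Hence an always resetting characterisation does not settle membership in the general class, and the heart of the proof is a self-contained bound on minimal general \kmDTA k m recognising $\lang A$.

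For the location bound I would argue as follows. Being a \DTA language, $\lang A$ has only finitely many left quotients $w^{-1}\lang A$ up to timed automorphism: since $A$ is deterministic, $w^{-1}\lang A = \langa A {c_w}$ for the unique configuration $c_w$ reached on $w$, and by invariance of the language semantics (\Cref{fact:equivariant:lang}) together with \Cref{fact:reg} the orbit of $\langa A {c_w}$ depends only on the orbit of $c_w$, of which there are at most $n_A\cdot \reg{k_A}{m_A}$ (control locations times $k_A,m_A$-regions). Now take any total, region-respecting \kmDTA k m $B$ with $\lang B = \lang A$. Its reachable configurations, taken up to timed automorphism, inject into pairs consisting of an orbit of a clock valuation (one of $\reg k m$ many) and a characteristic function recording, for each $k,m$-region-orbit, the induced residual-orbit, exactly as in \Cref{claim:nrorb}. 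Since there are at most $n_A\cdot\reg{k_A}{m_A}$ residual-orbits, this yields a computable bound $N_0$ (of the same shape as the bound $f(k,m,n)$ of \Cref{thm:k-DTA-char}) on the number of orbits of configurations, hence on the number of control locations of a canonical deterministic acceptor for $\lang A$. It therefore suffices to enumerate all \kmDTA k m with at most $N_0$ control locations and test each for equivalence with $A$ via ingredient (i); this decides the \kmDTA k m membership problem.

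Finally, the \kDTA k membership problem (with the maximal constant left free) reduces to the bounded case by observing that no constant larger than $m_A$ is ever needed: the timing granularity of $\lang A$ is inherited from $A$, so if $\lang A$ is a \kDTA k language then it is already a \kmDTA k {m_A} language, and one runs the previous procedure with $m := m_A$. The step I expect to be the main obstacle is precisely the location bound of the second paragraph — a Myhill–Nerode/orbit-counting statement for the \emph{general}, not necessarily always resetting, class — and, for the $m$-free variant, the accompanying argument that the constant may be capped at $m_A$. Once these finiteness results are in place, decidability follows routinely from the enumeration and the decidable \DTA equivalence test.
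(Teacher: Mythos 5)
Your strategy (bound the locations of a minimal deterministic acceptor, then enumerate candidates and test \DTA-vs-\DTA equivalence) is not the paper's route, and it has a genuine gap at exactly the point you flag as the crux. The claimed injection in your second paragraph fails: orbits of configurations of a candidate $B$ preserve control locations by definition, so the number of orbits of reachable configurations is at least the number of reachable locations of $B$ and cannot be bounded by any quantity independent of $B$ (two duplicated locations already give configurations in different orbits mapping to the same pair). What you actually need is that the \emph{non-injective} map $c \mapsto (\text{region of } c,\ \text{orbit of the residual } \lang{B}{\text{from } c})$ has a kernel that is a congruence realizable by a \kmDTA k m with one location per class. That is a Myhill--Nerode theorem for deterministic timed automata, and it is precisely the hard content of the implication from item (2) to item (3) of Lemma~\ref{thm:k-DTA-char}: the region of the valuation together with the orbit of the residual does not determine how the clocks of $B$ correlate with the residual, nor which clocks must be reset next (different locations of $B$ may realize the same residual with different reset disciplines), and recovering a well-defined deterministic transition structure requires the support-tracking states, the deterministic clock-update policy~\eqref{eq:mu}, and the invariance claims of Section~\ref{sec:upperbound}. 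The paper develops that machinery only for \kNTA 1 inputs and \emph{always resetting} acceptors; Claim~\ref{claim:nrorb} itself counts orbits of determinisation states $(X,\mu)$, whose characteristic function records which locations of the input automaton sit in which $\mu(\X)$-orbit --- it has no counterpart for a bare configuration of $B$, so ``exactly as in Claim~\ref{claim:nrorb}'' does not go through. The same applies to capping the constant at $m_A$ in your last paragraph: plausible (it is the analogue of (1)$\Rightarrow$(3) in Lemma~\ref{thm:k-DTA-char}), but unproved in this setting. So the proposal reduces the lemma to two statements whose proofs would essentially amount to re-proving Lemma~\ref{thm:k-DTA-char} for multi-clock deterministic inputs and general (not always resetting) acceptors.

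The paper's proof is entirely different and only a few lines long: given a \DTA $A$, compute a \DTA $A'$ for the complement of $\lang A$ (deterministic automata complement by totalisation and switching accepting locations), and decide whether $\lang A$ and $\lang{A'}$ are separable by a \kDTA k (resp.\ \kmDTA k m) language, which is decidable for arbitrary \NTA --- hence for \DTA --- by~\cite[Theorem 1.1]{ClementeLasotaPiorkowski:ICALP:2020}. Since any language separating $L$ from its complement equals $L$, membership and separability coincide for \DTA languages, and all of the finiteness/canonicity analysis you attempt is outsourced to the separability algorithm. If you wanted to salvage your approach, the honest statement of what is missing is a generalization of Lemma~\ref{thm:k-DTA-char} with an arbitrary \DTA as input and a general acceptor as output; that is a substantial piece of work, not a routine step.
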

\begin{proof}
	We reduce to a deterministic separability problem.
	Recall that a language $S$ \emph{separates} two languages $L, M$
	if $L \subseteq S$ and $S \cap M = \emptyset$.
	It has recently been shown that the $\kDTA k$ and $\kmDTA k m$ separability problems
	are decidable for \NTA~\cite[Theorem 1.1]{ClementeLasotaPiorkowski:ICALP:2020},
	and thus, in particular, for \DTA\@.
	To solve the membership problem,
	given a \DTA $A$, the procedure computes a \DTA $A'$ recognising the complement of $\lang A$ and checks whether
	$A$ and $A'$ are \kDTA k separable (resp.,~\kmDTA k m separable)
	by using the result above.
	It is a simple set-theoretic observation that
	$\langts A$ is a \kDTA k language if, and only if, the languages $\langts A$ and $\langts {A'}$ are separated
	by some \kDTA k language, and likewise for \kmDTA k m languages.
\end{proof}
\begin{proof}[Proof of \Cref{thm:kDTA:memb}]
	We solve both problems in essentially the same way.
	Given a \kmNTA 1 m $A$,
	the decision procedure enumerates all always resetting \kmDTA{k+1} m $B$ with at most $f(k+1,m,n)$
	locations
	and checks whether $\langts A = \langts B$
	(which is decidable by~\cite[Theorem 17]{OW04}).
	If no such \kDTA{k+1} $B$ is found,
	the $\langts A$ is not an always resetting \kDTA {k+1} language, due to Lemma~\ref{thm:k-DTA-char},
	and hence forcedly is not a \kDTA k language either;
	the procedure therefore answers negatively.
	Otherwise, in case when such a \kDTA{k+1} $B$ is found,
	then a \kDTA{k} membership
	(resp.~\kmDTA k m membership) test is performed on $B$,
	which is decidable due to Lemma~\ref{lem:sep}.
\end{proof}

\begin{rem}[Complexity]
	The decision procedure for \kNTA{1} invokes the \HyperAckermann~subroutine of~\cite{OW04} to check equivalence between a \kNTA 1 and a candidate \DTA\@.
	This is in a sense unavoidable,
    since we show in \Cref{thm:easy-undecidability} that  \kDTA k and \kmDTA k m membership problems are
    \HyperAckermann-hard for \kNTA{1}.
\end{rem}

In the rest of this section we present the proof of Lemma~\ref{thm:k-DTA-char}.
The proof is a suitable extension and refinement of the argument used in case of register automata in Section~\ref{sec:upperboundreg}.

\subsection{Proof of Lemma~\ref{thm:k-DTA-char}}
Let us fix a \kmNTA{1}{m} $A = \tuple{\Sigma, \L, \set{\x}, \I, \F, \Delta}$, where $m$ is the greatest constant used in clock constraints in $A$, and $k\in\N$.
We assume w.l.o.g.~that $A$ is greedily resetting:
This is achieved by resetting the clock as soon as upon reading an input symbol
its value becomes greater than $m$ or is an integer $\leq m$;
we can record in the control location the actual integral value if it is $\leq m$, or a special flag otherwise.
Consequently, after every discrete transition the value of the clock is at most $m$,
and if it is an integer then it equals 0.

The implication~\ref{p3}$\limplies$\ref{p1} follows by definition.
For the implication~\ref{p1}$\limplies$\ref{p2} suppose, by assumption, 
$\langts A = \langts {B}$ for a total always resetting \kDTA{k} $B$.
Every left quotient $w^{-1} \langts A$ equals $\langtsa B c$ for some configuration $c$, hence
Point~\ref{p2} follows by Fact~\ref{fact:invariantalways}.
Here we use the fact that $B$ is always resetting
in order to apply the second part of Fact~\ref{fact:invariantalways};
without the assumption, we would only have $S$-invariance for sets $S$ of size at most $k+1$.

It thus remains to prove the implication~\ref{p2}$\limplies$\ref{p3},
which will be the content of the rest of the section.
Assuming Point~\ref{p2}, we
are going to define an always resetting \kmDTA k m $B'$ with clocks $\X = \set{\x_1, \dots, \x_{k}}$ and
with at most $f(k,m,n)$ locations such that $\langts {B'} = \langts A$.
We start from the timed transition system $\XX$ obtained by the finite powerset construction underlying
the determinisation of $A$, and then transform
this transition system gradually, while preserving its language,
until it finally becomes isomorphic to the reachable part of $\semd {B'}$ for some \kmDTA k m $B'$.
As the last step we extract from this deterministic timed transition system a syntactic definition of $B'$ and prove equality
of their languages.
This is achieved thanks to the invariance properties of the
transition systems in the course of the transformation.

\para{Macro-configurations}
Configurations of the \kNTA 1 $A$ are of the form $c = (p, u, \now)$ where $u, \now \in \Rnonneg$ and $u \leq \now$.
A \emph{macro-configuration} is a (not necessarily finite)
set $X$ of configurations $(p, u, \now)$ of $A$ which share the same value
of the current timestamp $\now$, which we denote as $\nowof X = \now$.
We use the notation
$\langtsa A X := \bigcup_{c \in X} \langtsa A c$.
%
Let $\succe {\sigma,t} X := \setof{c'\in\Conf A}{c \goesto{\sigma,t} c' \text{ for some }c\in X}$
be the set of successors of configurations in $X$.
We define a deterministic timed transition system $\XX$ consisting of
the macro-configurations reachable in the course of determinisation of $A$.
Let $\XX$ be the smallest set of macro-configurations and transitions such that

\begin{itemize}
\item $\XX$ contains the initial macro-configuration: $X_0 = \setof{(p, 0, 0)}{p\in \I} \in \XX$;
\item $\XX$ is closed under successor: for every $X\in\XX$ and $(\sigma,t)\in\Sigma\times\Rnonneg$, there is a transition
$X \goesto {\sigma,t} \succe {a,t} X$ in $\XX$.
\end{itemize}

\noindent
Due to the fact that $\semd A$ is finitely branching, i.e.~$\succe {\sigma,t} {\set{c}}$ is finite for every fixed $(\sigma,t)$,
all macro-configurations $X\in\XX$ are finite.
Let the final configurations of $\XX$ be $F_\XX = \setof{X\in\XX}{X\cap F \neq \emptyset}$.
\begin{clm}\label{claim:eqlangAX}
$\langtsa A X = \langtsa \XX {X}$ for every $X\in\XX$. In particular $\langts A = \langtsa \XX {X_0}$.
\end{clm}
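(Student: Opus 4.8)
The plan is to prove the claim by establishing a single \emph{subset-construction invariant}: for every macro-configuration $X\in\XX$ and every timed word $w = (\sigma_1,t_1)\cdots(\sigma_n,t_n)$, the unique macro-configuration $X_n$ reached from $X$ by reading $w$ in $\XX$ satisfies
\[
X_n \ = \ \setof{c'\in\Conf A}{c \goesto w c' \text{ for some } c\in X},
\]
where $c \goesto w c'$ abbreviates the existence of a run of $A$ over $w$ from $c$ to $c'$. Since $\XX$ is deterministic, $X_n=\succe{\sigma_n,t_n}{\cdots\succe{\sigma_1,t_1}{X}}$, so I would obtain this invariant by a routine induction on $n$ straight from the definition $\succe{\sigma,t}X=\setof{c'}{c\goesto{\sigma,t}c' \text{ for some } c\in X}$. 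The base case $n=0$ is immediate (timed automata have no $\varepsilon$-transitions, so the only configurations reachable over the empty word are those already in $X$), and the inductive step is exactly the observation that the set of configurations reachable over a one-letter extension of a prefix is obtained by applying $\succe{\sigma,t}{}$ to the set reachable over that prefix.

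Granting the invariant, both inclusions then fall out by inspecting acceptance, using that by construction $X_n\in F_\XX$ precisely when $X_n\cap F\neq\emptyset$, i.e.\ when some configuration reachable from $X$ by $w$ is final. For $\langtsa A X \subseteq \langtsa \XX X$: if $w\in\langtsa A c$ for some $c\in X$, then $A$ has an accepting run from $c$ ending in a final configuration $c_n\in F$; by the invariant $c_n\in X_n$, whence $X_n\cap F\neq\emptyset$, so $X_n\in F_\XX$ and $w\in\langtsa \XX X$. Conversely, for $\langtsa \XX X \subseteq \langtsa A X$: if $w$ is accepted in $\XX$ from $X$, then $X_n\in F_\XX$, so there is some $c_n\in X_n\cap F$; by the invariant this $c_n$ is the endpoint of a run of $A$ over $w$ starting from some $c\in X$, and since $c_n\in F$ that run is accepting, giving $w\in\langtsa A c\subseteq\langtsa A X$.

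The only point requiring minor care—rather than a genuine obstacle—is the interaction with the timed semantics. Each transition $\tuple{p,\mu,\now}\goesto{\sigma,t}\tuple{q,\nu,t}$ forces all successors to share the new current timestamp $t$, so $\succe{\sigma,t}X$ is a legitimate macro-configuration with $\nowof{\succe{\sigma,t}X}=t$, and the conditions $t\geq\now$ and $\assigntoval\mu t\models\varphi$ are applied uniformly because all configurations in $X$ share the same $\now$; this keeps the successor operator well defined throughout the induction (the empty macro-configuration acts as a harmless non-final sink). Finally, the ``in particular'' clause follows by applying the main statement to $X=X_0$, together with the facts that $X_0=\setof{(p,0,0)}{p\in\I}$ is exactly the set of initial configurations of $A$ and that $\langts A=\bigcup_{c\text{ initial}}\langtsa A c=\langtsa A {X_0}$.
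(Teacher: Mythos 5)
Your proof is correct. The paper states this claim without proof, treating it as the routine correctness of the powerset construction, and your argument---the induction establishing that the macro-configuration reached over $w$ is exactly the set of $A$-configurations reachable from $X$ over $w$, followed by inspection of the acceptance condition $F_\XX$, with the noted care about the shared timestamp and the empty-set sink---is precisely the standard argument the paper implicitly relies on.
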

For a macro-configuration $X$ 
we write
$\macval {X} := \setof{u}{(p, u, \nowof X) \in X}\cup \set{\nowof X}$ to denote the reals appearing in $X$.
Since $A$ is greedily resetting, every macro-configuration
$X \in \XX$ satisfies
$\macval X \subseteq \lopen{\nowof X - m, \nowof X}$.
Whenever a macro-configuration $X$ satisfies this condition
we say that \emph{the span of $X$ is bounded by $m$}.

\para{Pre-states}
%
%
By assumption (Point 2),
$\langtsa A X$ is $S$-invariant for some $S$ of size at most $k$,
but the macro-configuration $X$ itself needs not be $S$-invariant in general.
Indeed, a finite macro-configuration $X\in\XX$ is $S$-invariant if, and only if,
$\fract{\macval X} \subseteq \fract{S}$,
which is impossible in general when $X$ is arbitrarily large, its span is bounded (by $m$), and
size of $S$ is bounded (by $k$).
Intuitively, in order to assure $S$-invariance we will replace $X$ by its $S$-closure $\closure S X$
(recall Fact~\ref{fact:inv:clos}).

A set $S\subseteq \Rnonneg$ is \emph{fraction-independent}
if it contains no two reals with the same fractional part.
A \emph{pre-state} is a pair $Y = (X, S)$,
where $X$ 
is an $S$-invariant macro-state,
and $S$ is a finite fraction-independent subset of $\macval X$ that contains $\nowof X$.
The intuitive rationale behind assuming the $S$-invariance of $X$ is that it implies, together with the bounded span of $X$
and the bounded size of $S$,
that there are only finitely many pre-states,
up to timed automorphism.
%
We define the deterministic timed transition system $\YY$ as the smallest set of pre-states
and transitions between them such that:

\begin{itemize}
\item $\YY$ contains the initial pre-state: $Y_0 = (X_0, \set{0}) \in \YY$; 
\item $\YY$ is closed under the closure of successor: for every $(X,S)\in\YY$ and $(\sigma,t)\in\Sigma\times\Rnonneg$,
there is a transition $(X, S) \goesto {\sigma,t} (X', S')$, where
$S'$ is the least, with respect to set inclusion, subset of $S\cup\set{t}$ containing $t$ such that
the language $L' = (\sigma,t)^{-1} \langtsa A X = \langtsa A {\succe {\sigma,t} X}$ is $S'$-invariant, and
$X' = \closure {S'} {\succe {\sigma,t} X}$.
\end{itemize}


\noindent
Observe that the least such fraction-independent subset $S'$ exists due to the following facts:
by fraction-independence of $S$
there is a unique fraction-independent subset $\widetilde S \subseteq S\cup\set{t}$ which
satisfies $\fract {\widetilde S} = \fract{S\cup\set{t}}$
($\widetilde S$ is obtained by removing from $S$ any element $u$ such that $\fract u = \fract t$, if any);
since $X$ is $S$-invariant, due to Fact~\ref{fact:equivariant:lang} so it is its language
$\langtsa A X$, and hence $L'$ is necessarily $\widetilde S$-invariant;
by assumption (Point 2), $L'$ is $R$-invariant for some set $R\subseteq\Rnonneg$ of size at most $k$ containing $t$;
let $T\subseteq \ropen{0, 1}$ be the least set of fractional values given by Corollary~\ref{cor:leastsup} applied to $L'$, i.e.,
$T \subseteq \fract{\widetilde S} \cap \fract{R}$;
finally let $S'\subseteq \widetilde S$ be chosen so that
$\fract{S'} = T\cup \fract{\set t}$.
Due to fraction-independence of $\widetilde S$ the choice is unique and $S'$ is fraction-independent.
Furthermore, $t\in S'$ and the size of $S'$ is at most $k$.

\begin{exa}
\label{ex:B}
   Suppose $k=3$, $m=2$, $\succe {\sigma,t} X = \set{(p,3.7,5),(q,3.9,5),(r,4.2,5)}$ and
   $S' = \set{3.7, 4.2, 5}$.
   Then $X' = \set{(p,3.7,5)}\cup\set{(q,t,5)\mid t \in (3.7, 4.2)}\cup\set{(r,4.2,5)}$.
   $\nowof {X'} = 5$.
   %
    A corresponding \emph{state} (as defined below)
    is $(X', \mu')$, where $\mu' = \set{\x_1 \mapsto 3.7, \x_2 \mapsto 4.2, \x_3 \mapsto 5}$.
%
\end{exa}

By Fact~\ref{fact:equivariant:lang}, we deduce:
\begin{restatable}[Invariance of $\YY$]{clm}{claimEquivYsecond}%
\label{claim:equivY}
For every two transitions
\[
    (X_1, S_1) \goesto {\sigma,t_1} (X'_1, S'_1)
    \qquad\text{and}\qquad
    (X_2, S_2) \goesto {\sigma,t_2} (X'_2, S'_2)
\]
in $\YY$
and a timed permutation $\pi$, if $\pi(X_1) = X_2$ and $\pi(S_1) = S_2$ and $\pi(t_1) = t_2$, then we have
$\pi(X'_1) = X'_2$ and $\pi(S'_1) = S'_2$.
\end{restatable}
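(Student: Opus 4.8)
The plan is to verify that every ingredient in the definition of a $\YY$-transition $(X,S)\goesto{\sigma,t}(X',S')$ commutes with the timed automorphism $\pi$: namely the successor set $\succe{\sigma,t}{X}$, the residual language $L' = \langtsa A {\succe{\sigma,t}{X}}$, the canonical support $S'$, and finally the closure $X' = \closure{S'}{\succe{\sigma,t}{X}}$. Writing $L'_i = \langtsa A {\succe{\sigma,t_i}{X_i}}$ for $i\in\set{1,2}$, the hypotheses $\pi(X_1)=X_2$, $\pi(S_1)=S_2$, $\pi(t_1)=t_2$ feed the action of $\pi$ into the first transition, and I will propagate it through each construction step to reach $\pi(X'_1)=X'_2$ and $\pi(S'_1)=S'_2$. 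The two workhorses are the invariance of the timed transition system (Fact~\ref{fact:equivariant:trans}) and of the language semantics (Fact~\ref{fact:equivariant:lang}).

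First I would treat the successor set. Unrolling Fact~\ref{fact:equivariant:trans}, from $c \goesto{\sigma,t_1} c'$ we get $\pi(c)\goesto{\sigma,\pi(t_1)}\pi(c')$; since $\pi(X_1)=X_2$ and $\pi(t_1)=t_2$, a two-inclusion argument (the reverse inclusion obtained by applying $\pi^{-1}$) yields $\pi(\succe{\sigma,t_1}{X_1}) = \succe{\sigma,t_2}{X_2}$. Plugging this into Fact~\ref{fact:equivariant:lang}, which gives $\pi(\langtsa A c) = \langtsa A {\pi(c)}$, and taking the union over configurations, I obtain $\pi(L'_1) = L'_2$.

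The heart of the argument, and the step I expect to be the main obstacle, is $\pi(S'_1) = S'_2$: unlike in the register case, a timed automorphism need \emph{not} preserve fractional parts, so the least-support recipe defining $S'$ must be transported with care. The key preparatory observations are: (i) $\pi$ induces a bijection $\bar\pi$ of $\ropen{0,1}$ via $\bar\pi(\fract x) = \fract{\pi(x)}$, well defined since $\pi$ preserves integer differences, and satisfying $\fract{\pi(S)} = \bar\pi(\fract S)$; and (ii) by conjugation $\pi\,\Pi_S\,\pi^{-1} = \Pi_{\pi(S)}$, so a language $L$ is $S$-invariant if, and only if, $\pi(L)$ is $\pi(S)$-invariant. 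As $S$-invariance depends only on $\fract S$, combining (i) and (ii) shows that the least fractional support $T_2$ of $L'_2 = \pi(L'_1)$ from Corollary~\ref{cor:leastsup} equals $\bar\pi(T_1)$, where $T_1$ is that of $L'_1$ (leastness is preserved because $\bar\pi$ is an inclusion-preserving bijection on subsets of $\ropen{0,1}$). Moreover $\pi$ sends the unique fraction-independent representative $\widetilde{S_1}\subseteq S_1\cup\set{t_1}$ to $\widetilde{S_2}\subseteq S_2\cup\set{t_2}$, since it maps $S_1\cup\set{t_1}$ onto $S_2\cup\set{t_2}$ and preserves fraction-independence. As $\pi(S'_1)$ and $S'_2$ are then both subsets of the fraction-independent set $\widetilde{S_2}$ with the same fractional parts $T_2\cup\set{\fract{t_2}} = \bar\pi(T_1)\cup\set{\fract{t_2}}$, they must coincide.

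Finally, for $\pi(X'_1)=X'_2$ I would use the general identity $\pi(\closure S Y) = \closure{\pi(S)}{\pi(Y)}$ — again a direct consequence of the coset identity $\pi\,\Pi_S = \Pi_{\pi(S)}\,\pi$ — with $S = S'_1$ and $Y = \succe{\sigma,t_1}{X_1}$. Substituting the already-established $\pi(S'_1)=S'_2$ and $\pi(\succe{\sigma,t_1}{X_1}) = \succe{\sigma,t_2}{X_2}$ gives $\pi(X'_1) = \closure{S'_2}{\succe{\sigma,t_2}{X_2}} = X'_2$, as required. The only genuinely delicate bookkeeping lies in the third step; the successor, language, and closure steps are routine transports along $\pi$ of the defining equations.
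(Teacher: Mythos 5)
Your proof is correct and follows essentially the same route as the paper's: both transport the successor set and its residual language through $\pi$ via Facts~\ref{fact:equivariant:trans} and~\ref{fact:equivariant:lang}, and then conclude $\pi(S'_1)=S'_2$ and $\pi(X'_1)=X'_2$ from the defining characterisations of $S'$ and of the closure $X'$. The only difference is one of detail: where the paper compresses the support step into a single ``therefore'', you spell it out via the conjugation identity $\pi\,\Pi_S\,\pi^{-1}=\Pi_{\pi(S)}$, the induced map on fractional parts, and Corollary~\ref{cor:leastsup} --- a welcome elaboration of the same argument, not a different one.
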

%
\begin{proof}
Let $i$ range over $\set{1, 2}$ and let  $\widetilde X_i := \succe {a,t_i} {X_i}$. Thus $S'_i$ is  the least subset of
$S_i\cup\set{t_i}$ containing $t_i$ such that $\langtsa A {\widetilde X_i}$ is $S'_i$-invariant,
and $X'_i = \closure {S'_i} {\widetilde X_i}$.
By invariance of $\semd A$ (Fact~\ref{fact:equivariant:trans}) and invariance of semantics
(Fact~\ref{fact:equivariant:lang}) we get
\[
\pi(\widetilde X_1) = \widetilde X_2,
\qquad \text{ and } \qquad \pi(\langtsa A {\widetilde X_1}) = \langtsa A {\widetilde X_2},
\]
and therefore $\pi(S'_1) = S'_2$, which implies $\pi(X'_1) = X'_2$.
\end{proof}

Let the final configurations of $\YY$ be $F_\YY = \setof{(X,S)\in\YY}{X\cap \F \neq \emptyset}$.
By induction on the length of timed words it is easy to show:
\begin{clm}\label{claim:eqlangXY}
	$\langtsa \XX {X_0} = \langtsa \YY {Y_0}$.
\end{clm}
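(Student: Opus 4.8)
The plan is to prove, by induction on the length of the timed word, that the macro-configuration reached in $\XX$ and the macro-configuration component of the pre-state reached in $\YY$ always recognise the \emph{same} language, and then to translate this language equality into an equality of acceptance. Concretely, writing $L_w := w^{-1}\langts A$ for a timed word $w$ readable from $X_0$, I would show that if $X_0 \goesto{w} X$ in $\XX$ and $Y_0 \goesto{w} (X', S')$ in $\YY$, then $\langtsa A X = \langtsa A {X'} = L_w$.

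Two ingredients drive the induction. The first is already recorded in the definition of $\YY$: reading a letter corresponds to a left quotient of the language, i.e.\ $(\sigma,t)^{-1}\langtsa A X = \langtsa A {\succe{\sigma,t}{X}}$, since the first transition of any accepting run over $(\sigma,t)\cdot v$ starting in $X$ must land in $\succe{\sigma,t}{X}$. The second, and the only genuinely new point, is that the $S'$-closure taken along each $\YY$-transition does not change the recognised language. Indeed, by construction $S'$ is chosen so that $L' = \langtsa A {\succe{\sigma,t}{X}}$ is $S'$-invariant, and $X' = \closure{S'}{\succe{\sigma,t}{X}}$. Every $c \in X'$ is of the form $\pi(c_0)$ with $c_0 \in \succe{\sigma,t}{X}$ and $\pi \in \Pi_{S'}$, so by invariance of the language semantics (Fact~\ref{fact:equivariant:lang}) we get $\langtsa A c = \pi(\langtsa A {c_0}) \subseteq \pi(L') = L'$; conversely $\succe{\sigma,t}{X} \subseteq X'$ gives the reverse inclusion. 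Hence $\langtsa A {X'} = L'$.

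With these two facts the induction is routine. In the base case both runs start in $X_0$, and $\langtsa A {X_0} = \langts A = L_\varepsilon$. For the inductive step, on the $\XX$ side the successor has language $(\sigma,t)^{-1}\langtsa A X = (\sigma,t)^{-1}L_w = L_{w(\sigma,t)}$; on the $\YY$ side the closure step preserves the language of the successor, so the same quotient computation applies and the reached macro-configuration again has language $L_{w(\sigma,t)}$. This establishes $\langtsa A X = \langtsa A {X'} = L_w$ for every $w$.

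Finally I would convert this into the claimed language equality through the observation that a macro-configuration $X$ satisfies $X \cap F \neq \emptyset$ if, and only if, $\varepsilon \in \langtsa A X$ (the empty word is accepted from $X$ exactly when $X$ contains a final configuration). Since $\XX$ and $\YY$ are deterministic, $w \in \langtsa \XX {X_0}$ iff the reached $X$ meets $F$ iff $\varepsilon \in L_w$, and likewise $w \in \langtsa \YY {Y_0}$ iff the reached $X'$ meets $F$ iff $\varepsilon \in L_w$; the two conditions coincide, proving $\langtsa \XX {X_0} = \langtsa \YY {Y_0}$. The only step requiring care is the closure-invariance argument, but given Fact~\ref{fact:equivariant:lang} and the defining choice of $S'$ it reduces to the short inclusion chain above; everything else is bookkeeping.
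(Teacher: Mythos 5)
Your proof is correct and follows essentially the same route as the paper, which leaves this claim as an induction on the length of timed words: you carry the invariant $\langtsa A X = \langtsa A {X'} = w^{-1}\langts A$ through quotients, using Fact~\ref{fact:equivariant:lang} together with the $S'$-invariance of $L'$ to show the closure step preserves the language, and conclude via the observation that a macro-configuration meets $F$ exactly when $\varepsilon$ belongs to its language. This is precisely the "easy" induction the paper alludes to, with the details filled in.
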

Due to the assumption that $A$ is greedily resetting and due to Point 2,
in every pre-state $(X, S) \in \YY$ the span of $X$
is bounded by $m$ and the size of $S$ is bounded by $k$.

\para{States}
We now introduce \emph{states}, which are designed to be in one-to-one correspondence
with configurations of the forthcoming \kDTA k $B'$.
Intuitively, a state differs from a pre-state $(X, S)$ only by allocating the values from $S$ into $k$ clocks,
thus while a pre-state contains a set $S$, the corresponding state contains a reset-point assignment $\mu : \X \to \Rnonneg$
with image $\mu(\X) = S$.

Let $\X = \set{\x_1, \dots, \x_k}$ be a set of $k$ clocks.
A \emph{state} is a pair $Z = (X, \mu)$,
where $X$ 
is a macro-configuration,
$\mu : \X \to \macval X$ is a reset-point assignment,
$\mu(\X)$ is a fraction-independent set containing $\nowof X$,
and $X$ is $\mu(\X)$-invariant.
Thus every state $Z = (X, \mu)$ determines uniquely a corresponding pre-state
$\rho(Z) = (X, S)$ with $S=\mu(\X)$.
We define the deterministic timed transition system $\ZZ$
consisting of those states $Z$ \st $\rho(Z) \in \YY$, and of transitions determined as follows:
$(X, \mu) \goesto {\sigma,t} (X', \mu')$ if the corresponding pre-state has a transition
$(X, S) \goesto {\sigma,t} (X', S')$ in $\YY$, where $S = \mu(\X)$, and
\begin{align}\label{eq:mu}
\mu'(\x_i) \ := \ \begin{cases}
t &\text{ if } \mu(\x_i) \notin S' \text{ or } \mu(\x_i) = \mu(\x_j) \text{ for some } j>i\\
\mu(\x_i) & \text{otherwise.}
\end{cases}
\end{align}
Intuitively, the equation~\eqref{eq:mu} defines a deterministic update of the reset-point assignment $\mu$
that amounts to resetting ($\mu'(\x_i):=t$) all clocks
$\x_i$ whose value is either no longer needed (because $\mu(\x_i) \notin S'$), or
is shared with some other clock $x_j$, for $j>i$ and is thus redundant.
Due to this disciplined elimination of redundancy,
knowing that $t\in S'$ and the size of $S'$ is at most $k$,
we ensure that at least one clock is reset in every step. In consequence, $\mu'(\X) = S'$,
and the forthcoming \kDTA k $B'$ will be always resetting.
Using Claim~\ref{claim:equivY} 
we derive:
\begin{restatable}[Invariance of $\ZZ$]{clm}{claimEquivZsecond}%
\label{claim:equivZ}
For every two transitions
\[
    (X_1, \mu_1) \goesto {\sigma,t_1} (X'_1, \mu'_1)
    \qquad\text{and}\qquad
    (X_2, \mu_2) \goesto {\sigma,t_2} (X'_2, \mu'_2)
\]
in $\ZZ$
and a timed permutation $\pi$, if $\pi(X_1) = X_2$ and $\pi{\circ}{\mu_1} = \mu_2$ and $\pi(t_1) = t_2$, then we have
$\pi(X'_1) = X'_2$ and $\pi{\circ}{\mu'_1} = \mu'_2$.
\end{restatable}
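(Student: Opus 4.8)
The plan is to reduce the statement to the already-established invariance of $\YY$ (Claim~\ref{claim:equivY}) and then to verify that the deterministic update rule~\eqref{eq:mu} is itself equivariant. These are the only two ingredients needed.

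First I would observe that every state transition in $\ZZ$ sits on top of a pre-state transition in $\YY$. Writing $S_i = \mu_i(\X)$, the two given $\ZZ$-transitions project onto the $\YY$-transitions $(X_1, S_1) \goesto{\sigma,t_1} (X'_1, S'_1)$ and $(X_2, S_2) \goesto{\sigma,t_2} (X'_2, S'_2)$, where $S'_i = \mu'_i(\X)$. The hypothesis $\pi{\circ}\mu_1 = \mu_2$ immediately gives $\pi(S_1) = \pi(\mu_1(\X)) = \mu_2(\X) = S_2$, and together with $\pi(X_1) = X_2$ and $\pi(t_1) = t_2$ this is precisely the hypothesis of Claim~\ref{claim:equivY}. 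Applying that claim yields $\pi(X'_1) = X'_2$, which is the first of the two conclusions, and also $\pi(S'_1) = S'_2$, which I will need below.

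It then remains to prove $\pi{\circ}\mu'_1 = \mu'_2$, which I would verify clock by clock. The key point is that the case split in~\eqref{eq:mu} depends only on data that a timed automorphism preserves: whether $\mu_i(\x_\ell) \notin S'_i$, and whether $\mu_i(\x_\ell) = \mu_i(\x_j)$ for some $j > \ell$. Since $\pi$ is a bijection of the reals acting trivially on clock names, with $\pi(\mu_1(\x_\ell)) = \mu_2(\x_\ell)$ and $\pi(S'_1) = S'_2$, injectivity of $\pi$ gives $\mu_1(\x_\ell) = \mu_1(\x_j) \iff \mu_2(\x_\ell) = \mu_2(\x_j)$ and $\mu_1(\x_\ell) \notin S'_1 \iff \mu_2(\x_\ell) \notin S'_2$. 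Hence the reset condition for $\x_\ell$ holds in the first transition exactly when it holds in the second. If it holds, then $\mu'_1(\x_\ell) = t_1$, $\mu'_2(\x_\ell) = t_2$, and $\pi(t_1) = t_2$; if it fails, then $\mu'_1(\x_\ell) = \mu_1(\x_\ell)$, $\mu'_2(\x_\ell) = \mu_2(\x_\ell)$, and $\pi(\mu_1(\x_\ell)) = \mu_2(\x_\ell)$. In both cases $\pi(\mu'_1(\x_\ell)) = \mu'_2(\x_\ell)$, so $\pi{\circ}\mu'_1 = \mu'_2$ as required.

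I do not expect any genuine obstacle: once the underlying $\YY$-transition is handled by Claim~\ref{claim:equivY}, everything reduces to the observation that~\eqref{eq:mu} is phrased purely in terms of equalities between reset points and membership in $S'$, all of which a timed automorphism respects. The only point requiring a little care is that $\pi$ permutes timestamps but fixes the clock names $\x_1, \dots, \x_k$, so the index-based tie-breaking ``$j > \ell$'' in~\eqref{eq:mu} lines up identically on both sides; this is exactly what makes the two update rules commute with $\pi$.
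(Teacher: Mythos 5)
Your proposal is correct and follows essentially the same route as the paper's proof: project both $\ZZ$-transitions onto the underlying $\YY$-transitions, invoke Claim~\ref{claim:equivY} to get $\pi(X'_1)=X'_2$ and $\pi(S'_1)=S'_2$, and then conclude $\pi{\circ}\mu'_1=\mu'_2$ from the equivariance of the update rule~\eqref{eq:mu}. The only difference is presentational: the paper asserts the invariance of~\eqref{eq:mu} in one line ($\pi{\circ}(\mu')=(\pi{\circ}\mu)'$), whereas you verify it clock by clock, which is exactly the computation that assertion compresses.
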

%
\begin{proof}
Let $i$ range over $\set{1, 2}$. Let
$S_i = \mu_i(\X)$ and $(X_i, S_i) \goesto {a,t_i} (X'_i, S'_i)$ in $\YY$. By Claim~\ref{claim:equivY} we have
\[
\pi(X'_1) = X'_2 \qquad \text{ and } \pi(S'_1) = S'_2.
\]
Since $\pi{\circ}{\mu_1} = \mu_2$ and the definition~\eqref{eq:mu} is invariant:
\[
\pi{\circ}(\mu') = (\pi{\circ}{\mu})',
\]
we derive $\pi{\circ}{\mu'_1} = \mu'_2$.
\end{proof}

Let the initial state be $Z_0 = (X_0, \mu_0)$, where $\mu_0(\x_i) = 0$ for all $\x_i \in \X$, and
let the final states be $F_\ZZ = \setof{(X, \mu)\in \ZZ}{X \cap F \neq \emptyset}$.
By induction on the length of timed words one proves:
\begin{clm}\label{claim:eqlangYZ}
$\langtsa \YY {Y_0} = \langtsa \ZZ {Z_0}$.
\end{clm}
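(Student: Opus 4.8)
The plan is to exhibit the projection $\rho$, which sends a state $Z = (X, \mu)$ to the pre-state $\rho(Z) = (X, \mu(\X))$, as a morphism between the two deterministic systems that preserves the initial state and finality and that induces, at each state, a label-preserving bijection between outgoing transitions. Once this is established, accepting runs of $\ZZ$ from $Z_0$ correspond one-to-one with accepting runs of $\YY$ from $Y_0$, and the desired language equality follows by a routine induction on the length of the timed word.

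First I would record the three structural properties of $\rho$. For the initial states, $\rho(Z_0) = (X_0, \mu_0(\X)) = (X_0, \set 0) = Y_0$, since $\mu_0$ is constantly $0$. For finality, since $\rho$ leaves the macro-configuration $X$ untouched and both $F_\ZZ$ and $F_\YY$ consist exactly of the pairs whose $X$ meets the final configurations $F$ of $A$, we have $Z \in F_\ZZ$ iff $\rho(Z) \in F_\YY$. For transitions, the forward direction is immediate from the definition of $\ZZ$: every $(X, \mu) \goesto{\sigma, t} (X', \mu')$ in $\ZZ$ comes, by construction, with the transition $(X, S) \goesto{\sigma, t} (X', S')$ of $\YY$, where $S = \mu(\X)$ and $S' = \mu'(\X)$, i.e.\ $\rho(Z) \goesto{\sigma, t} \rho(Z')$. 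Conversely, starting from a transition $\rho(Z) = (X, S) \goesto{\sigma, t} (X', S')$ of $\YY$, the update rule~\eqref{eq:mu} produces a unique reset-point assignment $\mu'$ with $\mu'(\X) = S'$, so that $Z' = (X', \mu')$ is a genuine state satisfying $\rho(Z') = (X', S') \in \YY$; hence $Z' \in \ZZ$ and $Z \goesto{\sigma, t} Z'$ is the lifted transition. Determinism of $\YY$ together with the fact that $\mu'$ is uniquely fixed by~\eqref{eq:mu} makes this correspondence a bijection.

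The conclusion then follows by induction on the length $n$ of a timed word $w = (\sigma_1, t_1) \cdots (\sigma_n, t_n)$. Both systems being deterministic and total, $w$ labels a unique run $Z_0 \goesto{\sigma_1, t_1} \cdots \goesto{\sigma_n, t_n} Z_n$ in $\ZZ$ and a unique run $Y_0 \goesto{\sigma_1, t_1} \cdots \goesto{\sigma_n, t_n} Y_n$ in $\YY$, and the transition correspondence propagates $\rho(Z_i) = Y_i$ from the base case $\rho(Z_0) = Y_0$ up to $i = n$. Preservation of finality then yields $Z_n \in F_\ZZ$ iff $Y_n \in F_\YY$, i.e.\ $w$ is accepted from $Z_0$ iff it is accepted from $Y_0$, which is precisely $\langtsa \ZZ {Z_0} = \langtsa \YY {Y_0}$. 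The one delicate point I anticipate is the backward (transition-reflecting) direction: one must verify that the assignment $\mu'$ synthesised by~\eqref{eq:mu} has image exactly $S'$, so that $Z'$ is well-formed and lies over the correct pre-state. This is exactly what the always-resetting design of~\eqref{eq:mu} secures---since $t \in S'$ and $\card{S'} \leq k$ guarantee that at least one clock is freed for reset---and it is the crux that turns $\rho$ from a merely transition-preserving map into a transition-reflecting one.
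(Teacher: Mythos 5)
Your proof is correct and follows essentially the same route the paper intends: the paper dispatches this claim with ``by induction on the length of timed words,'' and your argument---treating the projection $\rho(X,\mu)=(X,\mu(\X))$ as an initiality-, finality-, transition-preserving and transition-reflecting map, then inducting along the unique runs of the two deterministic systems---is exactly the fleshed-out version of that induction. The delicate point you single out, that the update rule~\eqref{eq:mu} yields $\mu'(\X)=S'$ so that the lifted target is a genuine state over the correct pre-state, is precisely the fact the paper asserts immediately after~\eqref{eq:mu}, so your reliance on it matches the paper's own development.
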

%
%
In the sequel we restrict $\ZZ$ to states reachable from $Z_0$.
In every state $Z  = (X, \mu)$ in $\ZZ$,
we have $\nowof X \in \mu(\X)$.
This will ensure the resulting \kDTA k $B'$ to be always resetting.

\para{Orbits of states}

While a state is designed to correspond to a configuration of the forthcoming \kDTA k $B'$,
its orbit is designed to play the r\^ole of control location of $B'$.
We therefore need to prove that the set of states in $\ZZ$ is orbit-finite, i.e.,
the set of orbits
$\setof{\orbit {} Z}{Z\in\ZZ}$
is finite and its size is bounded by $f(k,m,n)$.
We start by deducing an analogue of Fact~\ref{fact:reg}:
\begin{clm}\label{claim:reg}
For two states $Z = (X, \mu)$ and $Z' = (X', \mu')$ in $\ZZ$, their reset-point assignments are in the same orbit,
i.e., $\pi{\circ}\mu = \mu'$ for some $\pi\in\Pi$, if, and only if,
the corresponding  clock valuations $\assigntoval \mu {\nowof X}$ and
$\assigntoval {\mu'} {\nowof {X'}}$
belong to the same $k,m$-region.
\end{clm}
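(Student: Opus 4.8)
The plan is to prove the two implications separately, exploiting two structural features of states in $\ZZ$. First, since $\nowof X \in \mu(\X)$ and every reset point satisfies $\mu(\x) \leq \nowof X$, we have $\nowof X = \max \mu(\X)$; thus $\nowof X$ is recoverable from $\mu$ alone, and any monotone bijection maps it to $\nowof{X'}$. Second, $\mu(\X)$ is fraction-independent with span bounded by $m$, so the induced clock valuation $v = \assigntoval \mu {\nowof X}$ takes values in $\ropen{0,m}$, has exactly the clocks with $\mu(\x) = \nowof X$ sitting at the integer value $0$, and no two clocks sharing a non-zero fractional part. Under this dictionary the claim is a reset-point reformulation of Fact~\ref{fact:reg}, and I would follow the same line of argument.

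For the implication $(\Rightarrow)$, suppose $\pi \circ \mu = \mu'$ for some $\pi \in \Pi$. As $\pi$ is monotone it preserves maxima, so $\pi(\nowof X) = \nowof{X'}$ and hence $v'(\x_i) = \pi(\nowof X) - \pi(\mu(\x_i))$ for $v' = \assigntoval {\mu'} {\nowof{X'}}$. The key observation is that a timed automorphism preserves, for all reals $a \leq b$, both the integer part $\lfloor b - a \rfloor$ and whether $b - a \in \Z$: indeed $\pi(a + z) = \pi(a) + z$ for $z \in \Z$, while $\pi(a + f) - \pi(a) \in (0,1)$ whenever $f \in (0,1)$ by strict monotonicity. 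Applying this to the pairs $(\mu(\x_i), \nowof X)$ and $(\mu(\x_i), \mu(\x_j))$ shows that the integer parts $\lfloor v(\x_i) \rfloor$, the set of clocks with integer value, and the integer parts $\lfloor v(\x_i) - v(\x_j) \rfloor$ of all differences agree for $v$ and $v'$. Since for non-integer clocks one has $\fract{v(\x_i)} < \fract{v(\x_j)}$ iff $\lfloor v(\x_i) - v(\x_j) \rfloor < \lfloor v(\x_i) \rfloor - \lfloor v(\x_j) \rfloor$, the order of fractional parts is preserved as well. These data are precisely what defines the $k,m$-region, so $v$ and $v'$ lie in the same region.

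For the implication $(\Leftarrow)$, I would reconstruct a timed automorphism $\pi$ with $\pi \circ \mu = \mu'$ by verifying that the tuples $(\mu(\x_1), \dots, \mu(\x_k))$ and $(\mu'(\x_1), \dots, \mu'(\x_k))$ realise the same orbit type over the timed atoms $\tuple{\R, \leq, {+1}}$, namely that they agree on (i) the weak order of their entries, (ii) the integer parts $\lfloor \mu(\x_i) - \mu(\x_j) \rfloor$ of all differences, and (iii) the cyclic order of the fractional parts; the existence of $\pi$ then follows from the standard orbit characterisation for timed atoms. Each condition follows from region equality of $v$ and $v'$ through $\mu(\x_i) = \nowof X - v(\x_i)$ (with $\nowof X = \max \mu(\X)$): the order of the $\mu(\x_i)$ is the reverse of that of the $v(\x_i)$, fixed by the region; the differences satisfy $\mu(\x_i) - \mu(\x_j) = v(\x_j) - v(\x_i)$, whose integer parts are region data as above; and the fractional parts $\fract{\mu(\x_i)} = \fract{\fract{\nowof X} - \fract{v(\x_i)}}$ are the image of the $\fract{v(\x_i)}$ under an orientation-reversing rotation of the circle $\R/\Z$, so their cyclic order is the reverse of the cyclic order of the $\fract{v(\x_i)}$, which the region determines.

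I expect the main obstacle to be the fractional-part bookkeeping in the $(\Leftarrow)$ direction: one must check that the orientation reversal induced by $\mu(\x_i) = \nowof X - v(\x_i)$ interacts correctly with the anchoring provided by the clock at value $0$ (the unique integer clock, coming from $\nowof X \in \mu(\X)$), so that region equality---a \emph{linear} condition on fractional parts of $v$---translates into the \emph{cyclic} condition on fractional parts of $\mu$ demanded by the orbit characterisation, independently of the value $\fract{\nowof X}$. Once this correspondence is in place, both directions reduce to the elementary preservation properties of timed automorphisms already used in Facts~\ref{fact:equivariant:trans}--\ref{fact:invariantbase} and in Fact~\ref{fact:reg}.
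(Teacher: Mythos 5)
Your proof is correct, but there is nothing in the paper to compare it against: the paper gives no proof of Claim~\ref{claim:reg} at all. It is asserted as ``an analogue of Fact~\ref{fact:reg}'', and Fact~\ref{fact:reg} is itself stated without proof, as an instance of the folklore correspondence between bounded regions and orbits of configurations. So your argument supplies exactly the details the paper leaves implicit, and the structural facts you isolate --- that $\nowof X = \max \mu(\X)$ (hence preserved by any monotone bijection with $\pi\circ\mu = \mu'$), that all clock values lie in $\ropen{0,m}$ by the span bound, and that integrality of a clock value forces $\mu(\x) = \nowof X$ by fraction-independence --- are precisely what makes the correspondence work for states of $\ZZ$. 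One simplification is worth noting: the cyclic-order bookkeeping you flag as the main obstacle can be avoided entirely. Since $\nowof X = \mu(\x_{i_0})$ for some index $i_0$, every value $v(\x_i) = \mu(\x_{i_0}) - \mu(\x_i)$ of the valuation $v = \assigntoval \mu {\nowof X}$, and every difference $v(\x_i) - v(\x_j) = \mu(\x_j) - \mu(\x_i)$, is itself a pairwise difference of entries of $\mu$, and conversely. As all these quantities lie in $(-m,m)$, the $k,m$-region of $v$ is determined by, and determines, the floors and integrality of all of them; and floors plus integrality of all \emph{ordered} pairwise differences is exactly the orbit invariant for tuples over timed atoms --- your condition (ii). Indeed (ii) alone already implies your (i) and (iii), since $a-b \in \Z$ iff $\lfloor a-b\rfloor = -\lfloor b-a\rfloor$, and cyclic orders of fractional parts are recoverable from sums of floors around triples. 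Thus the region data of $v$ and the orbit data of $\mu$ coincide verbatim, giving both implications in one step. Finally, a minor slip that does not affect your argument: ``no two clocks sharing a non-zero fractional part'' is too strong, since the update policy~\eqref{eq:mu} may reset several clocks to the same timestamp; fraction-independence only forbids \emph{distinct} reset points with equal fractional parts.
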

(In passing note that, since in every state $(X, \mu)$ in $\ZZ$ the span of $X$ is bounded by $m$,
only bounded $k,m$-regions can appear in the last claim.
Moreover, in each $k,m$-region one of the clocks constantly equals $0$.)
The action of timed automorphisms on macro-configurations and reset-point assignments is extended to states as $\pi(X, \mu) = (\pi(X), \pi{\circ}\mu)$.
Recall that the orbit of a state $Z$ is defined as $\orbit {}  {Z} = \setof{\pi(Z)}{\pi \in \Pi}$.
\begin{clm}
The number of orbits of states in $\ZZ$ is bounded by $f(k,m,n)$.
\end{clm}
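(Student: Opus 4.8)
The plan is to follow the counting scheme of Claim~\ref{claim:nrorb} from the register setting, replacing the factor $M_k$ (orbits of register valuations) by the number $\reg k m$ of $k,m$-regions, and the factor $N_k$ (orbits of atoms relative to a support) by a bound of $2km+1$ on the number of $S$-orbits of timestamps relevant to a state. First I would establish a timed analogue of Claim~\ref{claim:det}: every state $Z = (X, \mu)$ is determined by $\mu$ together with a characteristic function. Writing $S = \mu(\X)$ and noting that $\nowof X = \max S$ (so that $\nowof X$ is already recorded by $\mu$), $S$-invariance of $X$ means that membership $(p, u, \nowof X) \in X$ depends on $u$ only through its $S$-orbit. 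I would therefore define $\charact Z : \orbity S \to \powerset \L$ by letting $\charact Z(o)$ be the set of locations $p$ with $(p, u, \nowof X) \in X$ for some (equivalently every) $u \in o$, and observe that the pair $(\mu, \charact Z)$ determines $Z$.

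Next I would count orbits of reset-point assignments. By Claim~\ref{claim:reg}, two assignments $\mu, \mu'$ lie in the same orbit exactly when the induced clock valuations $\assigntoval \mu {\nowof X}$ and $\assigntoval {\mu'} {\nowof {X'}}$ fall in the same $k,m$-region, and since every state has span bounded by $m$, only bounded $k,m$-regions occur. Hence the number of orbits of reset-point assignments appearing in $\ZZ$ is at most $\reg k m$.

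The genuinely new, timed ingredient is bounding the number of $S$-orbits of timestamps that a state can involve. For a fixed fraction-independent $S$ of size $l \le k$ with $\nowof X = \max S$, recall that an $S$-timed automorphism fixes the grid $S + \Z$ pointwise while acting as an arbitrary monotone bijection on each open interval between consecutive grid points; thus the $S$-orbits of reals are precisely the grid points (singletons) together with these open intervals. Because $A$ is greedily resetting, every $u$ occurring in $X$ lies in the window $(\nowof X - m, \nowof X]$. As $m \in \N$, each of the $l \le k$ fractional classes of $S + \Z$ contributes at most $m$ grid points to this length-$m$ window, giving at most $km$ grid-point orbits; these cut the window into at most $km + 1$ open intervals. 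Consequently at most $2km + 1$ distinct $S$-orbits meet the window, and $\charact Z$ is $\emptyset$ on all remaining orbits, so $\charact Z$ is determined by its values on at most $2km+1$ orbits.

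Finally I would combine the three steps as in the proof of Claim~\ref{claim:nrorb}: for each of the at most $\reg k m$ orbits of reset-point assignments there are at most $(2^n)^{2km+1} = 2^{n(2km+1)}$ characteristic functions, so the number of orbits of states is at most $\reg k m \cdot 2^{n(2km+1)} = f(k,m,n)$. To make the combination precise one argues that if $\pi{\circ}\mu = \mu'$ then $\pi$ induces a bijection $\widetilde\pi$ between $\orbity S$ and $\orbity {S'}$ (with $S' = \pi(S)$), and that agreement of characteristic functions through $\widetilde\pi$, i.e.\ $\charact Z = \charact {Z'} \circ \widetilde\pi$, forces $\pi(X) = X'$ by $S$-invariance of $X$, whence $\pi(Z) = Z'$. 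I expect this matching step to be the main obstacle, since it is where $S$-invariance of the macro-configuration is used decisively to pass from orbit-level bookkeeping back to equality of states; the invariance property recorded in Claim~\ref{claim:equivZ} is exactly the tool that underlies it, and the bounded-span combinatorics of the previous paragraph is the only essentially new input compared with the register argument.
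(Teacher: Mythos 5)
Your proof is correct and takes essentially the same approach as the paper's: the paper likewise finitely represents a state by its reset-point assignment $\mu$ together with the sequence of location sets indexed by the at most $2km+1$ consecutive $\mu(\X)$-orbits (singleton grid points and open intervals) meeting the bounded-span window, and obtains the bound by multiplying the $\reg k m$ bound on orbits of $\mu$ from Claim~\ref{claim:reg} by $(2^n)^{2km+1}$. The only differences are cosmetic bookkeeping (the paper counts $l+1$ endpoints and $l\leq km$ intervals, you count $\leq km$ grid points and $\leq km+1$ intervals, both yielding $2km+1$) and that your matching step via $\widetilde\pi$ makes explicit what the paper states implicitly when it says the orbit of $(O,\mu)$ is determined by the orbit of $\mu$ and the sequence $P_1,\dots,P_{2km+1}$.
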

\begin{proof}
	We finitely represent a state $Z = (X, \mu)$, relying on the following general fact.
	\begin{fact}
		For every $u \in\Rnonneg$ and $S\subseteq \Rnonneg$,
		the $S$-orbit\footnote{The orbits of states $Z$ should not be confused with 	$S$-orbits of individual reals $u\in\Rnonneg$.}
		$\orbit S u$ 
		is either the singleton $\set{u}$ (when $u\in S$) or an open interval
		with end-points of the form $t + z$ where $t \in S$ and $z\in\Z$ (when $u\notin S$).
	\end{fact}
	\noindent
	We apply the fact above to $S = \mu (\X)$.
	In our case the span of $X$ is bounded by $m$,
	and thus the same holds for $\mu (\X)$.
	Consequently, the integer $z$ in the fact above
	always belongs to $\set{-m, -m{+}1, \dots, m}$.
	In turn, $X$ splits into disjoint $\clockval X$-orbits 
	$\orbit {\clockval X} u$ consisting of open intervals
	separated by endpoints of the form $t + z$
	where $t \in \mu(\X)$ and $z\in\set{-m, -m{+}1, \dots, m}$.

\begin{exa}
	Continuing Example~\ref{ex:B}, the endpoints are $\set{3, 3.2, 3.7, 4, 4.2, 4.7, 5}$, as shown in the illustration:
	\begin{center}
		\includegraphics[scale=0.5]{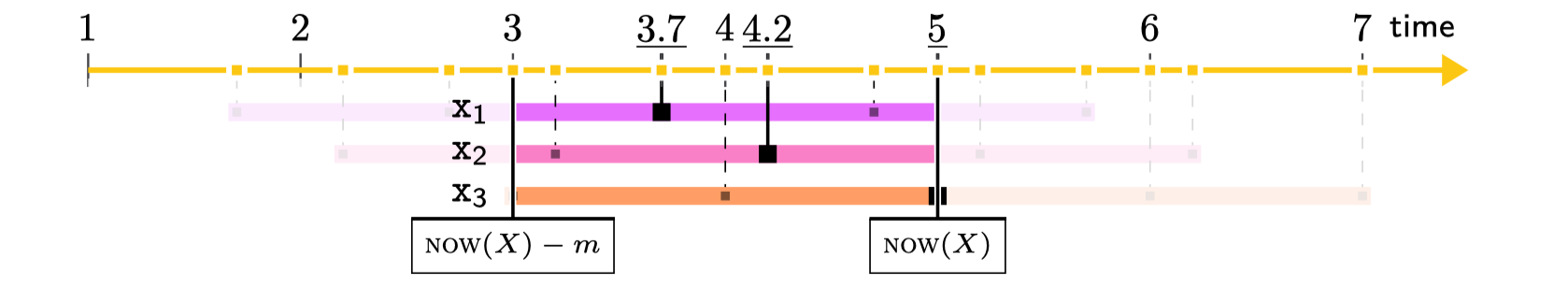}
	\end{center}
\end{exa}

	\noindent
	Recall that $\mu(\X)$ is fraction-independent.
	Let $e_1 < e_2 < \dots < e_{l+1}$ be all the endpoints of open-interval orbits ($l \leq km$), and let 
	$o_1, o_2, o_3, \dots \ := \ \set{e_1}, (e_1, e_2), \set{e_2}, \dots$ be the consecutive $S$-orbits
$\orbit {\mu(\X)} u$ of elements $u\in\clockval \X$.
The number thereof is $2l+1 \leq 2km+1$.
The finite representation of $Z = (X, \mu)$ consists of
the pair $\pair{O}{\mu}$, where
\begin{align}\label{eq:O}
	O = \set{(o_1, P_1), \dots, (o_{2l+1}, P_{2l+1})}
\end{align}
assigns to each orbit $o_i$ the set of locations
$
P_i  = \setof{p}{(p, u, \now)\in X \text{ for some } u\in o_i} \subseteq \L,
$
(which is the same as
$
P_i = \setof{p}{(p, u, \now)\in X \text{ for all } u\in o_i}
$
since $X$ is $\mu(\X)$-invariant, and hence $\mu(\X)$-closed).
Thus a state $Z = (X, \mu)$ is uniquely determined by the sequence $O$ as in~\eqref{eq:O}
and the reset-point assignment $\mu$.

We claim that the set of all the finite representations $(O, \mu)$, as defined above, is orbit-finite.
Indeed, the orbit of $(O, \mu)$ is determined by the orbit of $\mu$ and the sequence
\begin{align}\label{eq:P}
P_1, \ P_2, \ \ldots, \ P_{2km+1}
\end{align}
induced by the assignment $O$ as in~\eqref{eq:O}.
Therefore, the number of orbits is bounded by the number of orbits of $\mu$
(which is bounded, due to Claim~\ref{claim:reg}, by $\reg k m$)
times the number of different sequences of the form~\eqref{eq:P}
(which is bounded by $(2^n)^{2km+1}$).
This yields the required bound $f(k, m, n) = \reg k m \cdot 2^{n(2km +1)}$.
\end{proof}

\para{Construction of the \DTA}  
As the last step we define a \kDTA {k} $B' = \tuple{\Sigma, \L', \X, \set{o_0}, \F', \Delta'}$
such that the reachable part of $\semd {B'}$ is isomorphic to $\ZZ$.
Let locations $\L' = \setof{\orbit {} Z}{Z\in\ZZ}$ be orbits of states from $\ZZ$, the initial location be the orbit $o_0$ of $Z_0$, and final locations
$\F' = \setof{\orbit{}{Z}}{Z\in F_\ZZ}$ be orbits of final states.
A transition $Z = (X, \mu) \goesto{\sigma, t} (X', \mu') = Z'$ in $\ZZ$
induces a transition rule in $B'$
\begin{align}\label{eq:trofB}
	\transition{o}{a}{\psi}{\Y}{o'} \in \Delta'
\end{align}
whenever $o = \orbit{}{Z}$, $o' = \orbit{}{Z'}$,
$\psi$ is the unique $k,m$-region satisfying $\assigntoval \mu t \in \semlog{\psi}$,
and $\Y = \setof{\x_i\in\X}{\mu'(\x_i) = t}$.
The automaton $B'$ is indeed a \DTA since $o$, $\sigma$ and $\psi$ uniquely determine $\Y$ and $o'$:
\begin{clm}
Suppose that two transitions
$(X_1, \mu_1) \goesto {\sigma,t_1} (X'_1, \mu'_1)$ and $(X_2, \mu_2) \goesto {\sigma,t_2} (X'_2, \mu'_2)$ in $\ZZ$
induce transition rules $\transition{o}{\sigma}{\psi}{\Y_1}{o'_1}, \transition{o}{\sigma}{\psi}{\Y_2}{o'_2} \in \Delta'$
with the same source location $o$ and constraint $\psi$, i.e,
\begin{align}\label{eq:psi}
\assigntoval {\mu_1} {t_1} \in\semlog{\psi} \qquad
\assigntoval {\mu_2} {t_2} \in\semlog{\psi}.
\end{align}
Then the target locations are equal $o'_1 = o'_2$, and the same for the reset sets $\Y_1 = \Y_2$.
\end{clm}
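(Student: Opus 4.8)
The plan is to reduce the statement to the invariance of $\ZZ$ (Claim~\ref{claim:equivZ}) by producing a single timed automorphism $\pi\in\Pi$ with $\pi(X_1) = X_2$, $\pi{\circ}\mu_1 = \mu_2$, and $\pi(t_1) = t_2$. Granting such a $\pi$, Claim~\ref{claim:equivZ} immediately yields $\pi(X'_1) = X'_2$ and $\pi{\circ}\mu'_1 = \mu'_2$, so $(X'_1, \mu'_1)$ and $(X'_2, \mu'_2)$ lie in the same orbit and $o'_1 = \orbit{}{(X'_1,\mu'_1)} = \orbit{}{(X'_2,\mu'_2)} = o'_2$. The equality $\Y_1 = \Y_2$ then follows formally: by~\eqref{eq:trofB}, $\Y_i = \setof{\x\in\X}{\mu'_i(\x) = t_i}$, and since $\pi{\circ}\mu'_1 = \mu'_2$ together with $\pi(t_1) = t_2$ give $\mu'_1(\x) = t_1 \iff \mu'_2(\x) = t_2$, the two reset sets coincide.

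It remains to build $\pi$. Since both rules have source $o = \orbit{}{(X_1,\mu_1)} = \orbit{}{(X_2,\mu_2)}$, I would fix $\pi_0\in\Pi$ with $\pi_0(X_1) = X_2$ and $\pi_0{\circ}\mu_1 = \mu_2$, and set $s := \pi_0(t_1)$. As $\pi_0$ maps the pair $(\mu_1, t_1)$ to $(\mu_2, s)$ and timed automorphisms preserve $k,m$-regions, the clock valuations $\assigntoval{\mu_1}{t_1}$ and $\assigntoval{\mu_2}{s}$ lie in the same region, namely $\psi$; by hypothesis~\eqref{eq:psi} so does $\assigntoval{\mu_2}{t_2}$. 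I would then correct $\pi_0$ by a $\mu_2(\X)$-timed automorphism $\pi_1$ sending $s$ to $t_2$ and put $\pi := \pi_1\circ\pi_0$: since $X_2$ is $\mu_2(\X)$-invariant and $\pi_1$ fixes $\mu_2(\X)$ pointwise, this $\pi$ still satisfies $\pi(X_1) = X_2$ and $\pi{\circ}\mu_1 = \mu_2$, while now also $\pi(t_1) = t_2$. The existence of $\pi_1$ amounts to $s$ and $t_2$ lying in the same $\mu_2(\X)$-orbit of reals; when all clock values are at most $m$, the common region $\psi$ pins the position of both $s$ and $t_2$ among the points $\setof{\tau+z}{\tau\in\mu_2(\X), z\in\Z}$, so the two $\mu_2(\X)$-orbits coincide and $\pi_1$ exists.

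The main obstacle is precisely the regime excluded above: when the input time is large enough that some clock value exceeds $m$, the $k,m$-region $\psi$ records only ``$>m$'' for those clocks, losing their integer offsets and fractional order, and hence no longer determines the $\mu_2(\X)$-orbit of $s$; indeed no single $\pi$ with all three properties need exist then, since a timed automorphism preserving integer differences cannot reconcile two stale clocks sitting at different integer distances beyond $m$ below $t_1$ and $t_2$. I would resolve this using that $A$ (and hence $B'$) is greedily resetting: every clock whose value exceeds $m$ is forced into $\Y_i$ by~\eqref{eq:mu}, and $\psi$ determines exactly which clocks these are, so $\Y_1$ and $\Y_2$ agree on the ``far'' clocks and, by the region argument above, on the remaining ``near'' ones. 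For the target orbit, the crucial point is that resetting \emph{erases} the problematic data: in $\succe{\sigma,t_i}{X_i}$ every configuration whose reset point lies more than $m$ below $t_i$ is reset by $A$ to the single new value $t_i$, so the orbit of $\succe{\sigma,t_i}{X_i}$ -- and therefore of its $S'_i$-closure $X'_i$ together with $\mu'_i$ -- depends only on the orbit-invariant location data shared by $X_1$ and $X_2$ and on the near part governed by $\psi$, but not on the ambiguous far offsets. Thus $o'_1 = o'_2$ in this regime as well. I expect this erasure argument -- reconciling the lossy $k,m$-truncation of $\psi$ with the exactness demanded by Claim~\ref{claim:equivZ} -- to be the delicate step, whereas the bounded-region case is a routine orbit computation via Claim~\ref{claim:reg}.
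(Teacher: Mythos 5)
Your first two paragraphs follow essentially the paper's own route: the paper also fixes an automorphism $\pi$ with $\pi(X_1)=X_2$ and $\pi\circ\mu_1=\mu_2$ from the equality of source orbits, argues that $\pi$ may be chosen so that additionally $\pi(t_1)=t_2$, and then obtains $o'_1=o'_2$ from Claim~\ref{claim:equivZ} and $\Y_1=\Y_2$ as a consequence of $\pi\circ\mu'_1=\mu'_2$, exactly as you do. Your explicit correction $\pi=\pi_1\circ\pi_0$ by a $\mu_2(\X)$-automorphism, justified when the region pins the position of the input timestamp among the points $\mu_2(\X)+\Z$, is a faithful elaboration of the paper's ``w.l.o.g.''\ step; the paper's case split is on the now-clock only ($t_1-\nowof{X_1}\leq m$ versus $>m$), invoking greedy resetting only in the second case, where \emph{all} clocks are far and everything is reset.

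The gap is in your third paragraph, and it is exactly the step you flag as delicate. Your erasure argument reasons only about the $k$ tracked values $\mu_i(\X)$, but a macro-configuration $X_i$ also contains reset points stored in \emph{no} clock: the closure steps $\closure{S'}{\cdot}$ fill whole open intervals (non-singleton $\mu_i(\X)$-orbits) with configurations of $A$. For such an untracked point $u$, whether $t_i-u\leq m$ (the configuration of $A$ may survive the transition un-reset, keeping its control location and reset point) or $t_i-u>m$ (it is greedily reset and thus ``erased'') is \emph{not} determined by $\psi$, since the region only constrains the position of $t_i$ relative to $\mu_i(\X)+\Z$. Two inputs $t_1,t_2$ in the same region can therefore cut an untracked interval of $X_1=X_2$ differently: at $t_1$ a sub-interval of survivors remains, carrying its locations into $\succe{\sigma,t_1}{X_1}$, while at $t_2$ the whole interval is wiped to the single point $t_2$; the resulting successor macro-configurations then contain different sets of control locations, which no timed automorphism can reconcile. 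So the assertion that the orbit of $\succe{\sigma,t_i}{X_i}$ ``depends only on the orbit-invariant location data and the near part governed by $\psi$'' is unjustified; closing this hole needs the invariance hypothesis on quotient languages (Point~\ref{p2}) and the way it constrains what untracked intervals can do, not greedy resetting alone. A second, related unproved step: that every clock with value exceeding $m$ lands in $\Y_i$ does not follow from~\eqref{eq:mu}, because a clock is reset only when $\mu_i(\x)\notin S'_i$, and one must rule out that a far value is retained in $S'_i$ as the only available representative of a fractional part that $L'_i$ still needs --- again an argument about supports, not about resets of $A$. (For what it is worth, the paper itself dispatches this mixed regime with its one-line ``w.l.o.g.''\ inside the case $t_1-\nowof{X_1}\leq m$, so your instinct that something nontrivial is being elided is sound; but your proposed repair does not supply the missing argument.)
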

\noindent
(Notice that we only consider two transition rules with the same constraint $\psi$, instead of two
different jointly satisfiable constraints $\psi, \psi'$ as in the definition of deterministic timed automata,
due to the fact that  each constraint of $B'$ is a single $k,m$-region.)
\begin{proof}
We use the invariance of semantics of $A$ and Claim~\ref{claim:equivZ}.
Let $o = \orbit{}{X_1, \mu_1} = \orbit{}{X_2, \mu_2}$.
Thus there is a timed automorphism $\pi$ such that
\begin{align}\label{eq:Y}
X_2 = \pi(X_1) \qquad
\mu_2 = \pi{\circ}\mu_1.
\end{align}
%
%
It suffices to show that there is a (possibly different) timed permutation $\pi'$ satisfying the following equalities:
\begin{align}\label{eq:4toprove}
t_2 = \pi'(t_1) \quad
\setof{i}{\mu'_1(\x_i) = t_1} = \setof{i}{\mu'_2(\x_i) = t_2} \quad
\mu'_2 = \pi'{\circ}\mu'_1 \quad
X'_2 = \pi'(X'_1).
\end{align}
We now rely the fact that both ${\now}_1 = \nowof {X_1} \in \mu_1(\X)$ and ${\now}_2 = \nowof {X_2} \in \mu_2(\X)$
are assigned to the same clock due to the second equality in~\eqref{eq:Y}:
${\now}_1 = \mu_1(\x_i)$ and ${\now}_2 = \mu_2(\x_i)$.
We focus on the case when $t_1 - {\now}_1 \leq m$ (the other case is similar and easier since all clocks are reset
due to greedy resetting),
which implies $t_2 - {\now}_2 \leq m$ due to~\eqref{eq:psi}.
In this case we may assume w.l.o.g., due to~\eqref{eq:psi} and the equalities~\eqref{eq:Y}, 
that $\pi$ is chosen so that $\pi(t_1) = t_2$.
We thus take $\pi' = \pi$ for proving the equalities~\eqref{eq:4toprove}.
Being done with the first equality, we observe that the last two equalities in~\eqref{eq:4toprove} hold
due to the invariance of $\ZZ$ (c.f.~Claim~\ref{claim:equivZ}).
The remaining second equality in~\eqref{eq:4toprove} is a consequence of the third one.
\end{proof}
%
%
\begin{clm}%
	\label{claim:last}
	Let $Z = (X, \mu)$ and $Z' = (X', \mu)$ be two states in $\ZZ$ with the same reset-point assignment.
	If $\pi(X) = X'$ and $\pi{\circ}\mu = \mu$ for some timed automorphism $\pi$ then $X = X'$.
\end{clm}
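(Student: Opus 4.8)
The plan is to reduce this to the defining property of a state, exactly as in the register case (Claim~\ref{claim:lastreg}). First I would recall that, by the very definition of a state in $\ZZ$, the macro-configuration $X$ of $Z = (X, \mu)$ is $\mu(\X)$-invariant. This is the only structural fact about states that the argument requires, and it is baked into the definition together with the fraction-independence of $\mu(\X)$ and the condition $\nowof X \in \mu(\X)$.

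Next I would unpack the hypothesis $\pi{\circ}\mu = \mu$. Read pointwise, this equality of reset-point assignments says $\pi(\mu(\x)) = \mu(\x)$ for every clock $\x \in \X$, so the timed automorphism $\pi$ is the identity on the whole image $\mu(\X)$; equivalently, $\pi \in \Pi_{\mu(\X)}$ is a $\mu(\X)$-timed automorphism.

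Combining the two observations, the $\mu(\X)$-invariance of $X$ together with $\pi \in \Pi_{\mu(\X)}$ yields $\pi(X) = X$. Since the hypothesis also provides $\pi(X) = X'$, transitivity gives $X = X'$, as required.

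I expect no genuine obstacle here: the statement is the timed analogue of Claim~\ref{claim:lastreg}, and its proof is identical up to replacing plain automorphisms by timed automorphisms and the stored atoms $\mu(\X)$ by stored reset points. The single point worth making explicit is that $\pi{\circ}\mu = \mu$ really does force $\pi$ to fix \emph{all} of $\mu(\X)$ pointwise — which is immediate — so that the invariance of $X$ can legitimately be invoked.
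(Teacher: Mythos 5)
Your proof is correct and takes essentially the same approach as the paper: the paper states this claim without a written proof in the timed section, relying on the identical one-line argument given for the register analogue (Claim~\ref{claim:lastreg}), namely that $\pi\circ\mu=\mu$ forces $\pi$ to fix $\mu(\X)$ pointwise, so the $\mu(\X)$-invariance of $X$ (built into the definition of a state) yields $\pi(X)=X$, hence $X=X'$. Your explicit remark that $\pi$ is thereby a $\mu(\X)$-timed automorphism is exactly the intended justification.
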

\begin{clm}\label{claim:eqlangCB}
	$\ZZ$ is isomorphic to the reachable part of $\semd {B'}$.
\end{clm}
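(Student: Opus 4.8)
The plan is to follow the proof of the register analogue (Claim~\ref{claim:eqlangCBreg}) almost verbatim, the one genuinely new ingredient being the bookkeeping of the ``now''-component of configurations of $B'$. For a state $Z = (X, \mu)$ in $\ZZ$ I would define the configuration $\confof Z = (\orbit{}{Z}, \mu, \nowof X)$ of $\semd{B'}$. This is a legitimate configuration: since $\mu(\X) \subseteq \macval X \subseteq \lopen{\nowof X - m, \nowof X}$ we have $\mu(\x) \le \nowof X$ for every clock $\x$, and $\nowof X \in \mu(\X)$ records the last reset (in fact $\nowof X = \max \mu(\X)$, so that ``now'' is already determined by $\mu$ — this is the manifestation of $B'$ being always resetting). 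Writing $\ZZ'$ for the reachable part of $\semd{B'}$, the goal is to show that $\confof{\_}$ is an isomorphism of transition systems onto $\ZZ'$.

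First I would show that $\confof{\_}$ is injective, hence a bijection onto its image $\confof \ZZ$, exactly as in the register case. If $\confof Z = \confof{Z'}$ for $Z = (X, \mu)$ and $Z' = (X', \mu)$ with $\orbit{}{Z} = \orbit{}{Z'}$, then some timed automorphism $\pi$ satisfies $\pi(X) = X'$ and $\pi \circ \mu = \mu$, so Claim~\ref{claim:last} yields $X = X'$ and therefore $Z = Z'$.

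Next I would verify that $\confof{\_}$ transports transitions, i.e.\ $Z \goesto{\sigma,t} Z'$ in $\ZZ$ implies $\confof Z \goesto{\sigma,t} \confof{Z'}$ in $\semd{B'}$. This is precisely where the rule~\eqref{eq:trofB} enters: from $Z = (X,\mu) \goesto{\sigma,t} (X',\mu') = Z'$ one reads off $\transition{o}{\sigma}{\psi}{\Y}{o'}$ with $o = \orbit{}{Z}$, $o' = \orbit{}{Z'}$, $\psi$ the $k,m$-region of $\assigntoval \mu t$, and $\Y = \setof{\x_i}{\mu'(\x_i) = t}$. The guard $\assigntoval \mu t \models \psi$ holds by the choice of $\psi$, the monotonicity $t \ge \nowof X$ because the transition descends from one of $A$, and $\nowof{X'} = t$ because $X'$ is obtained by reading $(\sigma,t)$. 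The one point needing a short computation is the update $\extend \Y t \mu = \mu'$: for $\x_i \in \Y$ both sides equal $t$, while for $\x_i \notin \Y$ one has $\mu'(\x_i) \ne t$, so by~\eqref{eq:mu} the second branch applies and $\mu'(\x_i) = \mu(\x_i) = (\extend \Y t \mu)(\x_i)$; the same observation shows no clock outside $\Y$ accidentally carries the value $t$ (if $\mu(\x_i) = t$ then, since $t \in S'$, equation~\eqref{eq:mu} forces $\mu'(\x_i) = t$, i.e.\ $\x_i \in \Y$), so the two descriptions of $\Y$ agree.

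Finally, to conclude $\confof \ZZ = \ZZ'$, I would argue as in the register case. The inclusion $\confof \ZZ \subseteq \ZZ'$ follows from the previous step together with $\confof{Z_0} = (o_0, \mu_0, 0)$ being the initial configuration of $B'$ and the restriction of $\ZZ$ to states reachable from $Z_0$. For the reverse inclusion I would invoke totality: $\ZZ$ has a run over every timed word starting from $Z_0$, hence so does $\confof \ZZ$; since $\ZZ'$ is deterministic (as $B'$ is a \DTA, shown in the preceding claim) and reachable, the total subsystem $\confof \ZZ \subseteq \ZZ'$ must exhaust $\ZZ'$, and the matching of edges is then forced by determinism. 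I expect the only mildly delicate point to be the update-matching computation of the third paragraph; everything else is a direct transcription of the register argument, the sole structural difference being that the third configuration component $\nowof X$ is pinned down by $\mu$ thanks to always resetting.
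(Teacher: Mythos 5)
Your proposal is correct and follows essentially the same route as the paper's proof: the same map $\confof Z = (\orbit{}{Z}, \mu, \nowof X)$, injectivity via Claim~\ref{claim:last}, the subsystem inclusion $\confof \ZZ \subseteq \ZZ'$ from the definition of the transition rules, and the totality-plus-determinism argument for the converse inclusion. The only difference is that you spell out the transition-transport computation (the check that $\extend \Y t \mu = \mu'$ and that the two descriptions of $\Y$ agree), which the paper compresses into ``by the very definition''.
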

\begin{proof}
%
	We essentially repeat the argument of Claim~\ref{claim:eqlangCBreg}.
	For a state $Z = (X, \mu)$,
	let $\confof Z = (o, \mu, t)$, where $o = \orbit{}{Z}$ and $t = \nowof X$.
	Let $\ZZ'$ denote the reachable part of $\semd {B'}$.
	By Claim~\ref{claim:lastreg},
	the mapping $\confof \_$ is a bijection between $\ZZ$ and its image $\confof{\ZZ} \subseteq \semd {B'}$.
	We aim at proving $\confof \ZZ = \ZZ'$.

	By the very definition~\eqref{eq:trofBreg}, the image
	$\confof \ZZ$ is a subsystem of $\ZZ'$.
	Recall that $\ZZ$ is total: for every
	$(\sigma_1, t_1) \dots (\sigma_n, t_n) \in\timedwords \Sigma$,
	there is a sequence of transitions $(X_0, \mu_0) \goesto {\sigma_1, t_1} \cdots \goesto{\sigma_n, t_n}$ in $\ZZ$.
	Therefore $\confof \ZZ$ is total too and, since $\ZZ'$ is deterministic and reachable, the subsystem $\confof \ZZ$ necessarily equals $\ZZ'$.
\end{proof}
Claims~\ref{claim:eqlangAX},~\ref{claim:eqlangXY},~\ref{claim:eqlangYZ},
and~\ref{claim:eqlangCB} imply $\langts A = \langts {B'}$, which completes the proof
of Lemma~\ref{thm:k-DTA-char}.
%


\section{Undecidability and hardness}%
\label{sec:lowerbound}

In this section we complete the decidability and complexity landscape for the deterministic membership problem
by providing matching undecidability and complexity hardness results, both for register and timed automata.


\subsection{Lossy counter machines}

Our undecidability and hardness results will be obtained by reducing from the finiteness problem for lossy counter machines, which is known to be undecidable.
A \emph{$k$-counters lossy counter machine} (\kLCM k) is a tuple $M = \tuple {C, Q, q_0, \Delta}$,
where $C = \set{c_1, \dots, c_k}$ is a set of $k$ counters,
$Q$ is a finite set of control locations,
$q_0 \in Q$ is the initial control location,
and $\Delta$ is a finite set of instructions of the form $\tuple {p, \op, q}$,
where $\op$ is one of $\incr c$, $\decr c$, and $\ztest c$.
A configuration of an \LCM $M$ is a pair $\tuple {p, u}$,
where $p \in Q$ is a control location,
and $u \in \N^C$ is a counter valuation.
For two counter valuations $u, v \in \N^C$,
we write $u \leq v$ if $u(c) \leq v(c)$ for every counter $c \in C$.
The semantics of an \LCM $M$ is given by a (potentially infinite) transition system over the configurations of $M$
\st there is a transition $\tuple {p, u} \goesto \delta \tuple {q, v}$,
for
$\delta = \tuple {p, \op, q} \in \Delta$, whenever
%
\begin{enumerate}
    \item $\op = \incr c$ and $v \leq \extend c {u(c) + 1} u$, or
    \item $\op = \decr c$ and $v \leq \extend c {u(c) - 1} u$, or
    \item $\op = \ztest c$ and $u(c) = 0$ and $v \leq u$.
\end{enumerate}
We omit $\delta$ in  $\goesto \delta$ when it is irrelevant, and write $\goesto{}^*$ for the transitive closure of $\goesto{}$.
The \emph{finiteness problem} (a.k.a.~space boundedness) for an \LCM $M$
amounts to deciding whether the reachability set
$\reachset M = \setof{\tuple{p, u}} {\tuple {q_0, u_0} \goesto {}^* \tuple{p, u}}$
is finite, where $u_0$ is the constantly $0$ counter valuation.
\begin{thmC}[\protect{\cite[Theorem 13]{Mayr:TCS:2003}}]%
	\label{thm:Mayr:2003}
    The \kLCM 4 finiteness problem is undecidable.
\end{thmC}

\subsection{Register automata}\label{sec:reg-lower}

\subsubsection{Undecidability of \DRA membership for \kNRA 1}
\label{sec:DRA:NRA:undecidability}

We show that it is undecidable whether a \kNRA 1 language can be recognised by some \DRA\@.
%
In the following, it will be useful to be able to refer to projection of a data word on the finite component of the alphabet.
To this end, for a data word $w = \tuple{\sigma_0, a_0} \cdots \tuple{\sigma_n, a_n} \in (\Sigma \times \A)^*$,
let $\undata w = \sigma_0 \cdots \sigma_n \in \Sigma^*$ be the word obtained by removing the atom component.
As already announced, we reduce from the finiteness problem for lossy counter machines.
Consider a lossy counter machine $M = \tuple {C, Q, q_0, \Delta}$ with $4$ counters $C = \set{c_1, c_2, c_3, c_4}$.
We use the following encoding of \LCM runs
as data words over the alphabet $\Sigma = Q \cup \Delta \cup C$
comprising the control locations, transitions, and counters of $M$.
(A similar encoding has been used in the proof of Theorem 5.2 in~\cite{DL09}
to show that the universality problem for \kNRA 1 is not primitive recursive.)
We encode a counter valuation $u \in \N^C$
as the word over $C \subseteq \Sigma$
\begin{align}%
	\label{eq:valuation:enc}
    \enc u \;=\;
        \underbrace {c_1 c_1 \cdots c_1}_{u(c_1) \text{ letters}}\
            \underbrace {c_2 c_2 \cdots c_2}_{u(c_2) \text{ letters}}\
				\underbrace {c_3 c_3 \cdots c_3}_{u(c_3) \text{ letters}}\
					\underbrace {c_4 c_4 \cdots c_4}_{u(c_4) \text{ letters}} \in \set {c_1}^* \set {c_2}^* \set {c_3}^* \set {c_4}^*.
\end{align}
Consider a \LCM run
\begin{align*}
    \pi \;=\; \tuple{p_0, u_0}
        \goesto {\delta_1} \tuple{p_1, u_1}
            \goesto {\delta_2} \cdots
                \goesto {\delta_n} \tuple{p_n, u_n}.
\end{align*}

The set $\Enc \pi \subseteq (\Sigma \times \A)^*$ of \emph{reversal-encodings} of $\pi$
contains all data words $w \in (\Sigma \times \A)^*$
satisfying the following conditions:

\renewcommand{\labelenumii}{(E\arabic{enumi}.\arabic{enumii})}
\renewcommand{\labelenumiii}{(E\arabic{enumi}.\arabic{enumii}.\arabic{enumiii})}

\begin{enumerate}[label=(E\arabic*)]

	\item\label{cond:1}
	the finite part of $w$ is of the form
	\begin{align}%
		\label{eq:undata}
		\undata w = p_n \delta_n \enc {u_n}\quad \cdots\quad p_1 \delta_1 \enc {u_1}\quad p_0 \enc{u_0} \in \Sigma^*;
	\end{align}

    \item\label{cond:2}
	all atoms appearing in a single $\enc {u_i}$ block are distinct;

    \item\label{cond:3}
    for every transition $\delta_i = \tuple{p_{i-1}, \op_i, p_i}$
    and for every counter $c_j \in \set {c_1, c_2, c_3, c_4}$:

    \begin{enumerate}

        \item\label{cond:3:1}
        if $\op_i = \incr {c_j}$ increments counter $c_j$,
		then (recalling that the lossy semantics amounts to $u_i(c_j) - 1 \leq u_{i-1}(c_j)$)
		we require that for each occurrence of $c_j$ in $\enc {u_i}$
        there is a matching occurrence of $c_j$ in $\enc {u_{i-1}}$ with the same atom,
        with the exception of the \emph{last} occurrence of $c_j$ in $\enc {u_i}$;

        \item\label{cond:3:2}
        if $\op_i = \decr {c_j}$ decrements counter $c_j$,
		then  (recalling that the lossy semantics amounts to $u_i(c_j) + 1 \leq u_{i-1}(c_j)$)
		we require that  for each occurrence of $c_j$ in $\enc{u_i}$ 
            	there is a matching occurrence of $c_j$ in $\enc{u_{i-1}}$ with the same atom, which is \emph{not the last}
		occurrence of $c_j$ in $\enc{u_{i-1}}$.
        %
%
%
%

        \item\label{cond:4:3}
        if $\op_i = \ztest {c_j}$ tests whether counter $c_j$ is zero,
        then $u_{i-1}(c_j) = u_i(c_j) = 0$,
		and thus we require that neither $\enc {u_i}$ nor $\enc {u_{i-1}}$ contain any occurrence of $c_j$;

        \item\label{cond:4:4}
        otherwise the operation $\op_i$ does not modify counter $c_j$,
		i.e., $u_i(c_j) \leq u_{i-1}(c_j)$,
		and we require that for each occurrence of $c_j$ in $\enc {u_i}$
        		there is a matching occurrence of $c_j$ in $\enc {u_{i-1}}$ with the same atom.

    \end{enumerate}

\end{enumerate}

\noindent
The intuition underlying the first two items in condition~\ref{cond:3} is that the effect of an increment or decrement
of $c_j$ is encoded by creating or removing \emph{the last} occurrence of $c_j$ in a block.

Let $L = \bigcup_{\pi \text{ a run of } M} \Enc\pi$ be the set of all reversal-encodings of runs of $M$.
Under this encoding, we can build a \kNRA 1 $A$ recognising the \emph{complement} of $L$.
Indeed, $A$ can determine bad encodings $w \not\in L$ by guessing one of finitely many reasons for this to occur:
\renewcommand{\labelenumii}{(F\arabic{enumi}.\arabic{enumii})}
\renewcommand{\labelenumiii}{(F\arabic{enumi}.\arabic{enumii}.\arabic{enumiii})}

\begin{enumerate}[label=(F\arabic*)]

	\item the projection $\undata {\enc \pi}$ to the finite alphabet $\Sigma$ is not a word in the regular language
	$(Q \Delta \set{c_1}^*\set{c_2}^*\set{c_3}^*\set{c_4}^*)^*\set{q_0}$
	(notice that $\enc {u_0}$ is the empty string, since the initial valuation $u_0$ assigns $0$ to every counter),
	or there is a transition $\delta_i = \tuple{p, \_, q}$ \st either the source is incorrect $p \neq p_{i-1}$
	or the destination is incorrect $q \neq p_i$.
	This is even a regular language of finite words over $\Sigma$ (i.e., without atoms).
	Thus in the remaining cases below we can assume that the finite part of $w$ is of the form as in~\eqref{eq:undata};

	\item there is a block $\enc {u_i}$ containing the same atom twice;

    \item 
    there is a transition $\delta_i = \tuple{p_{i-1}, \op_i, p_i}$
    and a counter $c_j \in \set {c_1, c_2, c_3, c_4}$
	\st one of the following condition holds:

    \begin{enumerate}

        \item 
        $\op = \incr {c_j}$ 
		but there is a non-last occurrence of $c_j$ in $\enc {u_i}$
        		without  matching occurrence of $c_j$ in $\enc {u_{i-1}}$ with the same atom;

        \item 
        $\op = \decr {c_j}$ 
        			but some occurrence of $c_j$ in $\enc{u_i}$ either has no matching
			occurrence of $c_j$ in $\enc{u_{i-1}}$ with the same atom, or has a matching occurrence which
			is the last occurrence of $c_j$ in $\enc{u_{i-1}}$.
%
%
%
%

        \item 
        $\op = \ztest {c_j}$ 
        but either $\enc {u_i}$ or $\enc {u_{i-1}}$ contains an occurrence of $c_j$;

        \item 
        the operation $\op$ does not modify counter $c_j$,
		but there is an occurrence of $c_j$ in $\enc {u_i}$ without a
		matching occurrence of $c_j$ in $\enc {u_{i-1}}$ with the same atom.

    \end{enumerate}

\end{enumerate}
One register is sufficient to recognise each of the possible mistakes above.

\begin{lem}%
	\label{lem:NRAReduction}
	The set of reachable configurations $\reachset M$ is finite
	if, and only if,
	$\lang A$ is a \DRA language.
\end{lem}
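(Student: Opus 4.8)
The plan is to reduce the statement to one about the language $L$ of valid reversal-encodings. Since $\lang A = \overline L$ and \DRA are effectively closed under complement (totalise, then swap final and non-final locations, preserving the number of registers), $\lang A$ is a \DRA language if, and only if, $L$ is one; throughout I would exploit the characterisation of \cref{thm:k-DRA-char} applied to $A$, namely that $\lang A$ is a \DRA language iff there is a uniform bound $k$ on the size of the least support of every quotient $w^{-1}\lang A$. I also record two facts for later use: $\lang A$, and hence $L$, is invariant (by \cref{lemma:NRA:invariance}), and a set and its complement inside the invariant ambient space $(\Sigma\times\A)^*$ have the same least support, so that $\supp{(w^{-1}\lang A)} = \supp{(w^{-1} L)}$.

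For the forward implication, assume $\reachset M$ is finite, so $\abs{u} := \sum_j u(c_j) \le B$ for every reachable $\tuple{p,u}$ and some fixed $B$. In any $w \in L$ every configuration of the encoded run is reachable, hence each block $\enc{u_i}$ carries at most $B$ atoms, pairwise distinct by~\ref{cond:2}. I would then build a \kDRA{B} that reads the reverse encoding left to right, stores in its registers the atoms of the most recently completed block together with their counter types, keeps the current control location and last transition in its finite control, and while reading the next block checks locally the shape~\ref{cond:1}, distinctness~\ref{cond:2}, and the matching conditions~\ref{cond:3}, rejecting immediately any block longer than $B$. As every check involves only two consecutive blocks, $B$ registers and finitely many locations suffice, and the automaton accepts exactly the words encoding a run that terminates in $q_0$ with the empty block $\enc{u_0}$, that is, exactly $L$; complementing yields a \DRA for $\lang A$.

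For the converse I argue contrapositively. If $\reachset M$ is infinite then, $M$ having finitely many control locations, there are reachable configurations of arbitrarily large total counter value. Fixing $N$, I pick a reachable $\tuple{p,u}$ with $\abs{u} \ge N$ and a run $\rho$ reaching it, and take a reverse encoding $w \in \Enc\rho$ of the form $w = w_0 \cdot v$ whose prefix $w_0 = p\,\delta\,\enc u$ encodes the final configuration using pairwise distinct atoms $a_1,\dots,a_{\abs{u}}$. Then $v \in w_0^{-1} L$, and I claim $\supp{(w_0^{-1} L)}$ has size at least $\abs{u} - 1$. Indeed, let $a_i$ be any atom of $\enc u$ other than the single last occurrence exempted from matching when $\delta$ is an increment, see~\ref{cond:3:1}; if some support $S$ of $w_0^{-1} L$ omitted $a_i$, choose a fresh atom $b \notin S$ not occurring in $w_0$ or $v$ and let $\pi$ swap $a_i$ with $b$. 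Then $\pi$ fixes $S$, so $\pi(v) \in \pi(w_0^{-1} L) = w_0^{-1} L$, i.e.\ $w_0 \cdot \pi(v) \in L$; but the occurrence carrying $a_i$ in $\enc u$ (still present in $w_0$) must by~\ref{cond:3} be matched by an equal atom in the next block of $\pi(v)$, where every $a_i$ has become $b$ and, by distinctness~\ref{cond:2}, no other $a_i$ remains — so $w_0 \cdot \pi(v) \notin L$, a contradiction. Hence every support of $w_0^{-1}L$, and therefore of $w_0^{-1}\lang A$, has at least $\abs{u} - 1 \ge N-1$ atoms; as $N$ is arbitrary, condition~\eqref{pr2} of \cref{thm:k-DRA-char} fails for every $k$, so $\lang A$ is not a \DRA language.

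The step I expect to be the main obstacle is the support lower bound in the last paragraph: one must argue that renaming a single matched atom really evicts every continuation from the quotient. This hinges on exactly two features built into the encoding — within-block distinctness~\ref{cond:2}, which forbids an accidental second match, and the fact that one instruction modifies at most one counter, so that at most one atom of $\enc u$ escapes the matching obligations~\ref{cond:3} — and making this airtight for each shape of $\delta$ (increment, decrement, zero-test, or no change) is the only genuinely delicate part of the proof.
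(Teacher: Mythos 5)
Your proposal is correct, and its forward direction is essentially the paper's: when $\reachset M$ is bounded, build a deterministic acceptor for $L$ that stores the atoms of the previous block and checks conditions~\ref{cond:1}--\ref{cond:3} locally, then complement. (One small inaccuracy: such an acceptor may need up to $2B$ registers rather than $B$, since while verifying~\ref{cond:2} it must hold the unmatched atoms of the previous block \emph{and} the atoms of the current block read so far; this is immaterial, as the claim only requires some finite number of registers.) For the converse, however, you take a genuinely different route. The paper argues operationally on a hypothetical \kDRA k $B$ recognising $L$: it picks a run with a block $\enc{u_i}$ containing at least $k+2$ distinct atoms, uses pigeonhole to conclude that after reading $k+1$ of them $B$ has forgotten some atom $a_j$, and then substitutes $a_j$ for the next atom of the same block, obtaining a word that $B$ still accepts but that violates the within-block distinctness condition~\ref{cond:2}. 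You instead argue semantically through Lemma~\ref{thm:k-DRA-char}: quotient by a prefix $w_0$ encoding one entire large configuration, and show that every small support of $w_0^{-1}L$ must contain all atoms of that block except possibly the single occurrence exempted by~\ref{cond:3:1}, via a transposition with a fresh atom applied to the whole suffix, which destroys the cross-block matching condition~\ref{cond:3} rather than~\ref{cond:2}. Both arguments are sound and rest on the same two features of the encoding, but yours buys some robustness: the paper's single-letter replacement tacitly assumes the substituted atom is interchangeable with the original relative to $B$'s registers and ignores later occurrences of the two atoms in the word, gaps that strictly speaking need exactly the kind of freshness-plus-permutation argument you spell out; the price you pay is invoking Lemma~\ref{thm:k-DRA-char}, though only its easy direction, which amounts to Lemma~\ref{lemma:NRA:invariance}(3) and is available at this point of the paper.
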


\begin{proof}
	The ``only if'' implication follows from the fact that if $\reachset M$ is finite, i.e., there is a finite bound $k$ on the
	sum of values of all counters of $M$ in a run,
	then the complement of $\lang A$ (which encodes all correct reversal-encodings) is a \kDRA k language,
	and thus $\lang A$ itself is a \DRA language since \DRA languages are closed under complement.
	For the ``if'' implication, assume that $\reachset M$ is infinite,
	and by way of contradiction assume that $\lang A$ is a \DRA language.
	In this case, the complement $L$ of $\lang A$ is recognised by some \kDRA k $B$ with a finite number of registers $k$.
	Since $\reachset M$ is infinite, there are runs where counter (say) $c_1$ is unbounded.
	In particular, there is a run $\pi$ reaching a counter valuation $u_i$ with $u_i(c_1) \geq k + 2$.
	When $B$ reads the corresponding encoding $w \in \Enc \pi$,
	after reading the first $k+1$ atoms $a_1, \dots, a_{k+1}$ in the $\enc{u_i}$ block,
	it must forget at least one such atom, say $a_j$.
	The next atom $a_{k+2}$ in the $\enc{u_i}$ block can thus be replaced by $a_j$
	and $B$ still accepts the corresponding data word $w'$.
	However, $w'$ is not the reversal encoding of any run of $M$,
	since it violates condition~\ref{cond:2}.
	This contradicts that $B$ recognises $L$,
	and thus $\lang A$ is not a \DRA language,
	as required.
\end{proof}

We thus have a reduction from the \LCM finiteness problem to the \DRA membership problem for \kNRA 1,
which is thus undecidable thanks to \cref{thm:Mayr:2003}.

\begin{cor}
    The \DRA membership problem for \kNRA 1 languages is undecidable.
\end{cor}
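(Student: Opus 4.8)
The plan is to combine the reduction already set up in \cref{lem:NRAReduction} with the undecidability of the finiteness problem for lossy counter machines established in \cref{thm:Mayr:2003}. The corollary is essentially a packaging step, so the work is to verify that what precedes it really constitutes a computable many-one reduction.

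First I would check that the map $M \mapsto A$, sending a $4$-counter lossy counter machine $M$ to the \kNRA 1 $A$ recognising the complement of $L = \bigcup_{\pi} \Enc \pi$, is effective. The bad-encoding conditions (F1)--(F3) that $A$ must detect are finitely many, and each is recognisable with a single register, exactly as described in the construction preceding \cref{lem:NRAReduction}; hence $A$ can be built algorithmically from the finite data of $M$ (its counter set and its instruction set $\Delta$). Next I would invoke \cref{lem:NRAReduction}, which states precisely that $\reachset M$ is finite if, and only if, $\lang A$ is a \DRA language. Putting these together, $M \mapsto A$ is a computable many-one reduction from the \LCM finiteness problem to the \DRA membership problem for \kNRA 1. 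Since by \cref{thm:Mayr:2003} the \kLCM 4 finiteness problem is undecidable, so is the \DRA membership problem for \kNRA 1, which is the claim.

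The only point that deserves a moment's attention is the effectiveness of the construction $M \mapsto A$, since a reduction must be computable in order to transfer undecidability; but this is immediate, as every mistake in (F1)--(F3) is captured by a fixed finite-state schema parametrised by $\Delta$ and the counters of $M$. I therefore expect no genuine obstacle at this stage: all of the technical content lives in \cref{lem:NRAReduction} (the correctness of the encoding and the forgetting-argument in its ``if'' direction), and the corollary merely composes that equivalence with Mayr's theorem to conclude undecidability.
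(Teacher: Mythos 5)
Your proposal is correct and matches the paper's own argument exactly: the corollary is obtained by composing the effective construction of the \kNRA 1 $A$ from $M$ (detecting the finitely many bad-encoding patterns, each with one register) with the equivalence of \cref{lem:NRAReduction} and the undecidability of \kLCM 4 finiteness from \cref{thm:Mayr:2003}. Your added remark on the computability of $M \mapsto A$ is the only point the paper leaves implicit, and you resolve it the same way the paper intends.
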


\subsubsection{Undecidability and hardness for \kDRA k membership}%
\label{sec:DRA:hardness}

The \emph{\NRA universality problem} asks whether a given \NRA $A$ recognises every data word $\lang A = (\Sigma \times \A)^*$.
%
All the lower bounds in this section leading to undecidability and hardness results for the \kDRA k membership problem
are obtained by a reduction from the universality problem for corresponding classes of data languages.
%
%

\begin{restatable}[\protect{c.f.~\cite[Theorem 1]{Finkel:FORMATS:2006}}]{lem}{lemEasyNRAUndec}\label{lem:easy-NRA-undecidability}
	Let $k\in \N$ and let $\mathcal Y$ be a class of invariant data languages that
	\begin{enumerate}
		\item contains all the \kDRA 0 languages,
		\item is closed under union and concatenation, and 
		\item contains some non-\kDRA k language.
	\end{enumerate}
	The universality problem for data languages in $\mathcal Y$ reduces in polynomial time
	to the \kDRA k membership problem for data languages in $\mathcal Y$.
\end{restatable}

\begin{proof}
	Let $L \in \mathcal Y$ be a data language over a finite alphabet $\Sigma$.
	We show that universality of $L$ reduces to \kDRA k membership.
	Thanks to the last assumption,
	let $M \in \mathcal Y$ be a data language over some finite alphabet $\Gamma$ which is not recognised by any \kDRA k.
	Consider the following language over the extended alphabet $\Sigma' = \Sigma \cup \Gamma \cup \{\$\}$:
	\begin{align*}
		N \ := \ L \cdot (\set{\$} \times \A) \cdot (\Gamma \times \A)^* \, \cup \,  (\Gamma \times \A)^* \cdot (\set{\$} \times \A) \cdot M,
	\end{align*}
   where $\$ \not\in \Sigma \cup \Gamma$ is a fixed fresh alphabet symbol.
   Since $\mathcal Y$ contains the universal language,
   by its closure properties the language $N$ belongs to $\mathcal Y$.
   We conclude by proving the following equivalence:
   \begin{align*}
	   L=(\Sigma \times \A)^* \quad \text{if, and only if, } \quad \text{$N$ is recognised by a \kDRA k.}
	\end{align*}
	For the ``only if'' direction, if $L$ is universal,
	then $N = (\Sigma \times \A)^* \cdot (\set{\$} \times \A) \cdot (\Sigma \times \A)^*$
	is clearly recognised by a \kDRA k.
	For the ``if'' direction suppose, towards reaching a contradiction,
	that $N$ is recognised by a \kDRA k $A$
	but $L$ is not universal.
	Choose an arbitrary data word $w \not\in L$ over $\Sigma$
	and consider an arbitrary extension $u = w \cdot (\$, a)$ of $w$ by one letter.
	Since $\$$ does not belong to the finite alphabet $\Sigma \cup \Gamma$,
	the left quotient $u^{-1} N = \setof v {u v \in N}$ equals $M$.
	Let $(p, \mu)$ be the configuration reached by $A$ after reading $u$,
	which thus recognises $\lang{p, \mu} = M$.
	%
	Since $M$ is invariant as a language in $\mathcal Y$,
	$M$ is a \kDRA k language, which is a contradiction.
\end{proof}

From Lemma~\ref{lem:easy-NRA-undecidability} we immediately obtain the
undecidability and hardness results for the \kDRA k membership problem,
which we now recall.

\thmDRAlowerbounds*

\begin{proof}
	For the first point, consider the class $\mathcal Y$ consisting of all the \kNRA 2 languages. 
	Clearly this class contains all $\kDRA 0$ languages and it is closed under union and concatenation.
	Thanks to Example~\ref{example:L1reg} we know that there are $\kNRA 1$ (and thus $\kNRA 2$) languages which are not \DRA languages.
	Thus the conditions of Lemma~\ref{lem:easy-NRA-undecidability} are satisfied and the universality problem for \kNRA 2 reduces in polynomial time to the \kDRA k membership problem for $\kNRA 2$.
	Since the former problem is undecidable~\cite[Theorem 5.4]{DL09}, undecidability of the latter one follows.
	For the other two points we can proceed in an analogous way,
	by using the fact that the universality problem is undecidable for \kNRAg 1 (\kNRA 1 with guessing)~\cite[Exercise 9]{atombook},
	and not primitive recursive for \kNRA 1~\cite[Theorem 5.2]{DL09}.
\end{proof}

\subsection{Timed automata}

We now focus on timed automata.
Following similar lines as in case of register automata in Section~\ref{sec:reg-lower},
in \cref{sec:undecidability} we prove undecidability of the \DTA membership problem for \kNTA 1 (c.f.~\cref{thm:undecidability})
and in \cref{sec:hardness} we prove \HyperAckermann-hardness of the \kDTA{k} membership problem for \kNTA{1} (c.f.~\cref{thm:easy-undecidability}).

\subsubsection{Undecidability of \DTA and \mDTA m membership for \kNTA{1}}
\label{sec:undecidability}

It has been shown in~\cite[Theorem 1]{Finkel:FORMATS:2006} that it is undecidable
whether a \kNTA 2 timed language can be recognised by some \DTA\@. 
This was obtained by a reduction from the \kNTA 2 universality problem,
which is undecidable.  
While the universality problem becomes decidable for $k = 1$,
we show in this section that, as announced in \cref{thm:undecidability},
the \DTA membership problem remains undecidable for \kNTA 1.



Since the universality problem for \kNTA 1 is decidable,
we need to reduce from another (undecidable) problem.
As in the case of register automata,
we reduce from the finiteness problem of a \LCM $M$ with 4 counters,
which is undecidable by \cref{thm:Mayr:2003}.
In the case of timed automata,
we can use the reversal encoding from~\cite[Definition 4.6]{LasotaWalukiewicz:ATA:ACM08}
showing that we can build a \kNTA 1 $A$ recognising the \emph{complement} of the set of reversal-encodings of the runs of $M$.
Since this construction has already been presented in full details in~\cite[App.~C]{ClementeLasotaPiorkowski:arXiv:2020}
and since it is in complete analogy to the construction in \cref{sec:DRA:NRA:undecidability} for register automata,
we omit it here.
One can then prove the following property,
which is analogous to Lemma~\ref{lem:NRAReduction} in the case of register automata.
\begin{lem}%
	\label{lem:LCM:DTA:reduction}
	The set of reachable configurations $\reachset M$ is finite
	if, and only if,
	$\lang A$ is a deterministic timed language.
\end{lem}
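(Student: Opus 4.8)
The plan is to mirror the proof of Lemma~\ref{lem:NRAReduction}, translating the register/atom vocabulary into the clock/timestamp vocabulary. The distinctness condition~\ref{cond:2} on atoms inside a block becomes distinctness of the \emph{fractional parts} of the timestamps inside a block: in the reversal encoding of~\cite{LasotaWalukiewicz:ATA:ACM08} used to build $A$, the ``identity'' of a value is tracked through exact integer time differences between adjacent blocks, so two positions carry ``the same value'' precisely when their timestamps have equal fractional part. With this dictionary in place, the argument splits into the two implications exactly as in the register case.

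For the \emph{only if} direction I would assume $\reachset M$ is finite. Then there is a uniform bound $k$ on the total value of all counters reached along any run of $M$, so every block $\enc{u_i}$ of a correct reversal-encoding contains at most $k$ timestamps. Hence at any point during the scanning of a correct encoding, at most $k$ timestamps of the current block need to be remembered in order to enforce distinctness within the block and the integer-difference matching with the adjacent block. This lets me construct an always resetting \kDTA{k'} (for a suitable $k' = O(k)$) recognising the complement of $\langts A$, i.e.\ the set of correct reversal-encodings. Since \DTA languages are effectively closed under complement, $\langts A$ itself is then a \DTA language.

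For the converse I would argue the contrapositive: assuming $\reachset M$ infinite, I show $\langts A$ is not a \DTA language. Infiniteness forces some counter, say $c_1$, to be unbounded, so for every $k$ there is a run $\pi$ whose encoding contains a block with arbitrarily many timestamps of pairwise distinct fractional parts (with further such timestamps still to come in the same block). Suppose, towards a contradiction, that $\langts A = \langts B$ for a total \DTA $B$ with $k$ clocks; then the complement $L$ of $\langts A$ is recognised by the complement automaton $\bar B$, again with $k$ clocks. Reading a word $w \in \Enc\pi$ up to a prefix $w_0$ that has scanned the first several of these in-block timestamps $t_1, \dots, t_\ell$, the configuration $(p,\mu,\now)$ reached by $\bar B$ determines the residual language $w_0^{-1} L$, which by Fact~\ref{fact:invariantalways} is $S$-invariant for $S = \mu(\X) \cup \set{\now}$ of size at most $k+1$. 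Once $\ell$ exceeds $k+1$, the $\ell$ distinct fractional classes $\fract{t_1}, \dots, \fract{t_\ell}$ cannot all lie in $\fract S$, so some already-seen class $\fract{t_j}$ is ``forgotten'' by $\bar B$.

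The crux, and the main obstacle, is the timed \emph{collision} step replacing the register reuse ``$a_{k+2} := a_j$'' of Lemma~\ref{lem:NRAReduction}. I would take a correct continuation $v$ of $w_0$ whose next in-block timestamp carries a fresh fractional part, and apply an $S$-timed automorphism $\pi$ that fixes $S$ pointwise but permutes the fresh class with the forgotten class $\fract{t_j}$; this is legitimate precisely because $\fract{t_j}\notin\fract S$, and $\pi$ can be chosen so that $\pi(v)$ remains a monotonic, nonnegative timed word. By $S$-invariance of $w_0^{-1}L$ we obtain $w_0\cdot\pi(v)\in L$, yet $w_0\cdot\pi(v)$ repeats a fractional class inside the block $\enc{u_i}$ and therefore violates condition~\ref{cond:2}, so it is not a correct reversal-encoding of any run of $M$. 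This contradicts $\bar B$ recognising exactly $L$, and hence no such $B$ exists. The delicate points, compared with the register case, are that a clock stores only a reset point rather than an arbitrary value and that all comparisons are against integers, so ``remembering a timestamp'' must be expressed through $\fract S$ and the invariance Fact~\ref{fact:invariantalways}; one also has to verify that $\pi$ acts only on the chosen fractional class so that $\pi(v)$ stays a valid timed word.
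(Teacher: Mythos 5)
Your overall structure (two implications, invariance-based ``forgetting'' argument) matches the intended proof, which the paper itself only sketches by analogy with Lemma~\ref{lem:NRAReduction} and defers to~\cite[App.~C]{ClementeLasotaPiorkowski:arXiv:2020}; your ``only if'' direction is fine. The genuine gap is exactly at the step you call the crux. The $S$-timed automorphism you need does not exist in general. Since the last timestamp $\now$ of $w_0$ belongs to $S$, every $S$-timed automorphism is a monotone bijection fixing $\fract S$ pointwise, hence it induces an order-preserving action on fractional classes: it can never ``permute'' two classes (a transposition contradicts monotonicity), and it can move the fresh class onto $\fract{t_j}$ only if both classes lie in the same maximal arc (gap) of $\ropen{0,1} \setminus \fract S$. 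But in the reversal encoding of~\cite[Definition 4.6]{LasotaWalukiewicz:ATA:ACM08} the events of one block are confined to a unit time interval, say starting at the delimiter time $T$, because matching with the adjacent block means ``exactly one time unit later''. Hence the classes still available for future events of the current block (fractional parts of $(\now, T+1)$) and the classes of already-read events of that block (fractional parts of $[T, \now]$) form two complementary arcs, separated by $\fract{\now}$ and $\fract{T}$. So a candidate \kDTA{k} $B$ that merely keeps one clock reset at the block delimiter has both separating points in $\fract S$, every gap of $\fract S$ lies entirely on one side, no fresh class can be mapped onto any forgotten class, and no violation of condition~\ref{cond:2} can be manufactured --- your contradiction never arises, while the lemma must exclude \emph{every} candidate $B$. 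This is precisely where timed atoms part ways with equality atoms: in Lemma~\ref{lem:NRAReduction} the transposition of two atoms outside $S$ is an $S$-automorphism, whereas monotone integer-difference-preserving bijections admit no transpositions.

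The repair is to attack the cross-block matching conditions~\ref{cond:3} instead of the distinctness condition~\ref{cond:2}. Keep $w_0$, $S$ and a forgotten class $\fract{t_j} \notin \fract S$, and choose the correct continuation $v$ so that it contains, in the adjacent block, the event required to occur at time exactly $t_j + 1$. Since $\fract{t_j} \notin \fract S$ and $S$ is finite, for suitable $\varepsilon > \delta > 0$ there is an $S$-timed automorphism $\pi$ shifting every point of $t_j + \Z$ by $\delta$ and equal to the identity outside the $\varepsilon$-neighbourhoods of these points; it is monotone, preserves integer differences, and fixes $S$. By $S$-invariance of $w_0^{-1}L$ we get $w_0 \cdot \pi(v) \in L$, yet in $w_0 \cdot \pi(v)$ no event of the adjacent block occurs at exactly $t_j + 1$, so the occurrence at $t_j$ has no match and condition~\ref{cond:3} fails; thus $w_0 \cdot \pi(v)$ is not a correct encoding, the desired contradiction. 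Unlike your collision, this perturbation exists no matter which values $B$ happens to remember, which is the robustness the argument needs.
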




Since the timed automaton constructed in the reduction above uses only constant 1,
the reduction works also for the
\mDTA m membership problem for every fixed $m> 0$,
thus proving \cref{thm:undecidability}.
This result is the best possible in terms of the parameter $m$,
since the problem becomes decidable for $m=0$.
In fact, the class of \kmDTA k 0 languages coincides with the class of \kmDTA 1 0 languages
(one clock is sufficient; c.f.~\cite[Lemma 19]{OW04}),
and thus \mDTA 0 membership reduces to \kmDTA 1 0 membership,
which is decidable for \kNTA 1 by \cref{thm:kDTA:memb}.


\subsubsection{Undecidability and hardness for \kDTA k and \kmDTA k m membership}%
\label{sec:hardness}

All the lower bounds in this section
are obtained by a reduction from the universality problem for suitable language
classes
(does a given timed language $L \subseteq \timedwords{\Sigma}$ satisfy $L = \timedwords{\Sigma}$?),
in complete analogy to \cref{sec:DRA:hardness} dealing with register automata.
Our starting point is the following result.
\begin{thmC}[\protect{\cite[Theorem 1]{Finkel:FORMATS:2006}}]\label{thm:Finkel}
	The \DTA membership problem is undecidable for \NTA languages.
\end{thmC}

We provide a suitable adaptation, generalization, and simplification of the result above
which will allow us to extend undecidability to the $\kDTA k$ membership problem for every fixed $k \geq 0$,
and also to obtain a complexity lower bound for $\kNTA 1$ input languages.
Fix two languages $L \subseteq \timedwords \Sigma$ and $M \subseteq \timedwords \Gamma$,
and a fresh alphabet symbol $\$ \not\in \Sigma \cup \Gamma$.
The \emph{composition} $L \rhd M$
is the timed language over $\Sigma' = \Sigma \cup\set{\$} \cup\Gamma$
defined as follows:
\begin{align*}
	L \rhd M \ = \ \setof{u (\$, t) (v + t)\in \timedwords{\Sigma'}}{u \in L, v \in M, t \in \Rnonnegpos},
\end{align*}
where $t$ is necessarily larger or equal than the last timestamp of $u$ by the definition of $\timedwords{\Sigma'}$.
%
The following lemma exposes some abstract conditions on classes of timed languages
which are sufficient to encode the universality problem.

\begin{restatable}{lem}{lemEasyUndec}\label{lem:easy-undecidability}
	Let $k ,m \in \N$ and let $\mathcal Y$ be a class of timed languages that
	\vspace{-\topsep} \begin{enumerate}
	\item contains all the \kDTA 0 languages,
	\item is closed under union and composition, and 
	\item contains some non-\kDTA{k} (resp.~non-\kmDTA k m) language.
	\end{enumerate}
	The universality problem for languages in $\mathcal Y$ reduces in polynomial time
	to the \kDTA{k} (resp. \kmDTA k m) membership problem for languages in $\mathcal Y$.
\end{restatable}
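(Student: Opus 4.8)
The plan is to follow the template of the register-automata version \cref{lem:easy-NRA-undecidability}, replacing concatenation by the timed \emph{composition} $\rhd$ and data languages by timed languages. Fix $L \in \mathcal Y$ over $\Sigma$; I reduce universality of $L$ to \kDTA k (resp.~\kmDTA k m) membership. By hypothesis~(3) fix $M \in \mathcal Y$ over $\Gamma$ that is \emph{not} a \kDTA k (resp.~\kmDTA k m) language, and let $U_\Sigma = \timedwords{\Sigma}$ and $U_\Gamma = \timedwords{\Gamma}$ denote the universal timed languages. Both are \kDTA 0 languages, hence lie in $\mathcal Y$ by hypothesis~(1). I then set
\[
  N \;:=\; (L \rhd U_\Gamma) \;\cup\; (U_\Sigma \rhd M),
\]
which lies in $\mathcal Y$ by closure under composition and union (hypothesis~(2)); building a presentation of $N$ from those of $L$ and $M$ is clearly polynomial. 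The claim to be proved is that $L = U_\Sigma$ if, and only if, $N$ is a \kDTA k (resp.~\kmDTA k m) language.

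The ``only if'' direction is immediate: if $L = U_\Sigma$ then $U_\Sigma \rhd M \subseteq U_\Sigma \rhd U_\Gamma = L \rhd U_\Gamma$, so $N = U_\Sigma \rhd U_\Gamma$ is exactly the set of timed words over $\Sigma' = \Sigma \cup \{\$\} \cup \Gamma$ whose untiming lies in $\Sigma^* \$ \Gamma^*$. This pattern is recognised by a clock-free deterministic automaton, so $N$ is a \kDTA 0, and hence a \kDTA k and \kmDTA k m, language.

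For the ``if'' direction I argue by contraposition. Assume $L \neq U_\Sigma$, pick $w \in \timedwords{\Sigma} \setminus L$, choose a timestamp $t$ at least as large as the last timestamp of $w$, and put $u = w\,(\$, t)$. Since $\$$ lies in neither $\Sigma$ nor $\Gamma$, every word of $L \rhd U_\Gamma$ and of $U_\Sigma \rhd M$ carries a single $\$$, which in any $u v$ must be the one supplied by $u$. Matching this separator, $u v \in L \rhd U_\Gamma$ would force the $\Sigma$-prefix $w$ to lie in $L$, which fails, whereas $u v \in U_\Sigma \rhd M$ holds exactly when $v = v' + t$ for some $v' \in M$. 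Hence $u^{-1} N = M + t$, the $t$-shift of $M$. If $N$ were recognised by a total \kDTA k (resp.~\kmDTA k m) $B$, then reading $u$ would reach a configuration $c = (p, \mu, t)$ with $\langtsa{B}{c} = u^{-1} N = M + t$, which by \cref{fact:invariantbase} is $(\mu(\X) \cup \{t\})$-invariant.

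The main obstacle — absent in the register setting, where $u^{-1}N$ equals $M$ on the nose — is to pass from ``$M + t = \langtsa{B}{c}$ for a reachable configuration $c$ of a \kDTA k (resp.~\kmDTA k m) acceptor'' back to ``$M$ is itself such a language'', contradicting the choice of $M$. I would exploit the freedom in the separator time and take $t$ to be an \emph{integer} $\geq$ the last timestamp of $w$. For the \kDTA k case the shift $x \mapsto x + t$ then sends integer constants to integer constants, so shifting every guard of an acceptor of $M + t$ by $-t$ (and making the configuration reached after $u$ initial) yields a \kDTA k acceptor of $M$, via \cref{fact:equivariant:lang}. For the more delicate \kmDTA k m case the same integer shift could blow up the constant bound, so there I would first normalise $B$ to be greedily resetting and then argue, through the region/orbit correspondence of \cref{fact:reg} together with the $\emptyset$-invariance granted by \cref{fact:invariantalways}, that the clock valuations relevant \emph{after} the $\$$-separator stay within the $m$-window, so that the recovered acceptor of $M$ keeps $k$ clocks and constant bound $m$. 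Checking that this shift bookkeeping is clean — in particular the uniqueness of the $\$$ separator and the preservation of the constant bound — is the step demanding the most care.
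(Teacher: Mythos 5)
Your construction of $N$, the ``only if'' direction, and the computation of the quotient $u^{-1}N = t + M$ (including the uniqueness of the $\$$-separator) all match the paper's proof. The gap sits exactly in the step you flag as delicate: recovering a \kDTA k acceptor of $M$ from the configuration $c = (p,\mu,t)$ that $B$ reaches after reading $u$. Your recipe --- take $t$ to be an integer, make $p$ initial, and shift every guard of $B$ by $-t$ --- does not work, because it tacitly assumes that all clocks of $B$ are reset on the $\$$-transition. In general they are not: the reset points $\mu(\x)$ are timestamps of $w$ (or $0$), so at $c$ the clocks hold per-clock offsets $t - \mu(\x)$ which are arbitrary, pairwise different, typically non-integer reals. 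A uniform shift of all guards by the single integer $t$ cannot compensate for these offsets; equivalently, the automorphism $x \mapsto x - t$ maps $c$ to $(p, \mu - t, 0)$, which is a legitimate (initial) configuration only when $\mu(\x) = t$ for every clock $\x$. (Diagonal constraints $\x_i - \x_j \sim z$ are shift-invariant anyway, so ``shifting them'' is not even meaningful.) Your appeal to Fact~\ref{fact:invariantalways} for ``$\emptyset$-invariance'' is also off: that fact yields $\mu(\X)$-invariance, never $\emptyset$-invariance.

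The missing idea --- and the way the paper closes precisely this hole --- is to arrange matters \emph{before} quotienting so that the offsets vanish: (i) normalise $B$ w.l.o.g.\ to be greedily resetting, and (ii) choose the $\$$-timestamp so that the gap between it and the last timestamp of $w$ exceeds the maximal constant of $B$. Then, when $\$$ is read, every clock value exceeds that constant, so greedy resetting forces \emph{all} clocks to be reset on that transition; the reached configuration has $\mu(\x)$ equal to the $\$$-timestamp for every $\x$, i.e.\ it is a pure time-shift of an initial configuration. No guard surgery is then needed at all: re-declaring $p$ as the initial location of $B$, with transition rules untouched, yields by shift-invariance (Fact~\ref{fact:equivariant:lang}) an automaton accepting exactly $M$; and since the guards are unchanged, this automaton has the same number of clocks $k$ \emph{and} the same constant bound $m$, so one argument covers the \kDTA k and \kmDTA k m cases uniformly, making your separate and vaguer treatment of the \kmDTA k m case unnecessary. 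Note also that integrality of $t$ plays no role anywhere; what matters is the size of the gap relative to $B$'s maximal constant.
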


Lemma~\ref{lem:easy-undecidability} is entirely analogous to Lemma~\ref{lem:easy-NRA-undecidability} for data languages,
except that invariance of languages in $\mathcal Y$ is not required; moreover,
notice that the notion of composition of timed languages that we need to state and prove the lemma above
is a bit more complicated than the straightforward notion of concatenation
that appears in the analogous statement for data languages from Lemma~\ref{lem:easy-NRA-undecidability}.

\begin{proof}
	We consider \kDTA k membership (the \kmDTA k m membership is treated similarly).
	Consider some fixed timed language $M \in \mathcal Y$ which is not recognised by any \kDTA{k}
   (relying on the assumption 3), over an alphabet $\Gamma$.
%
%
   For a given timed language $L \in {\mathcal Y}$, over an alphabet $\Sigma$,
   we construct
   the following language over the extended alphabet $\Sigma \cup \Gamma \cup \{\$\}$:
   %
   \begin{align*}
	   N \; := \; L \rhd \timedwords{\Gamma} \;\cup\; \timedwords{\Sigma} \rhd M \ \subseteq \ \timedwords{\Sigma \cup \Gamma \cup \{\$\}},
   \end{align*}
   where $\$ \not\in \Sigma \cup \Gamma$ is a fixed fresh alphabet symbol.
   Since $\mathcal Y$ contains all the \kDTA 0 languages thanks to the assumption 1,
   and it is closed under union and composition thanks to the assumption 2,
   the language $N$ belongs to $\mathcal Y$.
   We claim that the universality problem for $L$ is equivalent to the \kDTA k membership problem for $N$:
   \begin{align*}
	   L=\timedwords{\Sigma} \quad \textrm{if, and only if,} \quad N \textrm{ is recognised by a \kDTA k}.
   \end{align*}
   For the ``only if'' direction, if $L = \timedwords{\Sigma}$
   then clearly $N = \timedwords{\Sigma} \rhd \timedwords{\Gamma}$.
   Thus $N$ is a \kDTA 0 languages, and thus also \kDTA k for any $k\geq 0$.
   For the ``if'' direction suppose, towards reaching a contradiction,
   that $N$ is recognised by a \kDTA{k} $A$
   but $L\neq\timedwords{\Sigma}$.
   Assume, w.l.o.g., that $A$ is greedily resetting.
   Choose an arbitrary timed word $w = (\sigma_1, t_1) \dots (\sigma_n, t_n) \not\in L$ over $\Sigma$.
   Therefore, for any extension $v = (\sigma_1, t_1) \dots (\sigma_n, t_n) (\$, t_n + t)$ of $w$ by one letter,
   we have
   \[  v^{-1} N = t + M = \setof{(\sigma_1', t+u_1)\dots(\sigma_m', t+u_m)}{(\sigma_1', u_1)\dots(\sigma_m', u_m) \in M}. \] 
   Choose $t$ larger than the largest absolute value $m$ of constants appearing in clock constraints in $A$, and
   let $(p, \mu)$ be the configuration reached by $A$ after reading $v$.
   Since $t > m$, all the clocks are reset by the last transition and hence $\mu(\x) = 0$ for all clocks $\x$.
   Consequently, if the initial control location of $A$ were moved to the location $p$, the so modified \kDTA{k} $A'$ would accept
   the language $M$.
   But this contradicts our initial assumption that $M$ is not recognised by a \kDTA{k},
   thus finishing the proof.
\end{proof}

We can now prove the following refinement of \cref{thm:Finkel} claimed in the introduction.
 \thmEasyUndecidability*
%
\begin{proof}
	%
	Each of the three points follows by an application of Lemma~\ref{lem:easy-undecidability}.
	For instance, for the first point 
	take as $\mathcal Y$ the class of languages recognised by \kNTA 2.
	This class contains all \kDTA 0 languages,
	is closed under union and composition, and
	is not included in \kDTA k for any $k$ nor in \kmDTA k m for any $k,m$
	(c.f.~the \kNTA 1 language from Example~\ref{example:L1} which is not recognised by any \DTA).
	Since the universality problem is undecidable for \kNTA 2~\cite[Theorem 5.2]{AD94},
	by Lemma~\ref{lem:easy-undecidability} the \kDTA k and \kmDTA k m membership problems are undecidable for \kNTA 2.
	The second and third points follow in the same way,
	using the fact that universality is undecidable for \kNTAe 1 (\kNTA 1 with epsilon transitions)~\cite[Theorem 5.3]{LasotaWalukiewicz:ATA:ACM08},
	resp., \HyperAckermann-hard for \kNTA 1
	(by combining the same lower bound for the reachability problem in lossy channel systems~\cite[Theorem 5.5]{CS08},
	together with the reduction from this problem to universality of \kNTA{1} given in~\cite[Theorem 4.1]{LasotaWalukiewicz:ATA:ACM08}).
\end{proof}


\section{Conclusions}

We have shown decidability and undecidability results
for several variants of the deterministic membership problem for timed and register automata.
Regarding undecidability, we have extended the previously known results~\cite{Finkel:FORMATS:2006,Tripakis:IPL:2006}
by proving that the \DTA membership problem is undecidable already for \kNTA 1 (\cref{thm:undecidability}).
%
%
Regarding decidability, we have shown that when the resources available to the deterministic automaton are fixed
(either just the number of clocks $k$, or both clocks $k$ and maximal constant $m$),
then the respective deterministic membership problem is decidable (\cref{thm:kDTA:memb})
and \HyperAckermann-hard (\cref{thm:easy-undecidability}).
We have depicted a similar scenario for register automata,
in regards of both decidability (\cref{thm:kDRA:memb}),
undecidability and hardness (\cref{thm:undecidabilityreg,thm:reg-lower-bound}).

Our deterministic membership algorithm is based on a characterisation of \kNTA 1 languages
which happen to be \kDTA k (\cref{thm:k-DTA-char}),
which is proved using a semantic approach leveraging on notions from the theory of sets with atoms~\cite{BL12}.
It would be interesting to compare this approach
to the syntactic determinisation method of~\cite{BBBB:ICALP:2009}.
Analogous decidability results for register automata have been obtained with similar techniques.

Finally, our decidability results extend to the slightly more expressive class of always resetting \kNTA 2,
which have intermediate expressive power strictly between \kNTA 1 and \kNTA 2.

\section*{Acknowledgment}
We thank S.~Krishna for fruitful discussions. 

\bibliographystyle{alphaurl}

\bibliography{bib}


\end{document}